\newcommand{\yes}{\textcolor{dkgreen}{\checkmark}}
\newcommand{\no}{\textcolor{red}{\ding{55}}}
\newcommand\norm[1]{\left\lVert#1\right\rVert}
\newcommand\ip[1]{\left\langle#1\right\rangle}
\newcommand\ket[1]{\left|#1\right\rangle}
\newcommand\op[2]{\left|#1\right\rangle\left\langle#2\right|}
\newcommand\id{\mathbb{I}}
\DeclareMathOperator\tr{Tr}
\renewcommand{\Re}{\textnormal{Re}}
\DeclareMathOperator*{\Prob}{Prob}
\DeclareMathOperator{\affh}{AffH}
\definecolor{dkgreen}{rgb}{0,0.6,0}
\theoremstyle{definition}
\newtheorem{definition}{Definition}[section]
\theoremstyle{plain}
\newtheorem{theorem}{Theorem}[section]
\newtheorem{corollary}{Corollary}[theorem]
\newtheorem*{theorem*}{Theorem}
\newtheorem{proposition}{Proposition}[section]
\newcommand*{\defeq}{\mathrel{\vcenter{\baselineskip0.5ex \lineskiplimit0pt
                     \hbox{\scriptsize.}\hbox{\scriptsize.}}}%
                     =}
\newcommand{\error}{\epsilon}
\newcommand{\failure}{\delta}
\newcommand{\epsZero}{\epsilon_{\circ}}
\newcommand{\jnrisk}{\varepsilon_*}
\newcommand{\adderr}{\varepsilon}
\newcommand{\qsverr}{\varepsilon_{\text{q}}}
\newcommand{\epsSDP}{\epsilon_m}
\begin{document}

\title{Theory of versatile fidelity estimation with confidence}

\author{Akshay Seshadri}
\affiliation{Department of Physics, University of Colorado Boulder, Boulder, USA}
\author{Martin Ringbauer}
\affiliation{Universit\"{a}t Innsbruck, Institut f\"{u}r Experimentalphysik, Technikerstrasse 25, 6020 Innsbruck, Austria}
\author{Jacob Spainhour}
\affiliation{Department of Applied Mathematics, University of Colorado Boulder, Boulder, USA }
\author{Thomas Monz}
\affiliation{Universit\"{a}t Innsbruck, Institut f\"{u}r Experimentalphysik, Technikerstrasse 25, 6020 Innsbruck, Austria}
\affiliation{Alpine Quantum Technologies GmbH, 6020 Innsbruck, Austria}
\author{Stephen Becker}
\affiliation{Department of Applied Mathematics, University of Colorado Boulder, Boulder, USA}

\date{\today}

\begin{abstract}
    Estimating the fidelity with a target state is important in quantum information tasks. Many fidelity estimation techniques present a suitable measurement scheme to perform the estimation. In contrast, we present techniques that allow the experimentalist to choose a convenient measurement setting. Our primary focus lies on a method that constructs an estimator with nearly minimax optimal confidence intervals for any specified measurement setting. We demonstrate, through a combination of theoretical and numerical results, various desirable properties of the method: robustness against experimental imperfections, competitive sample complexity, and accurate estimates in practice. We compare this method with Maximum Likelihood Estimation and the associated Profile Likelihood method, a Semi-Definite Programming based approach, direct fidelity estimation, quantum state verification, and classical shadows. Our method can also be used for estimating the expectation value of any observable with the same guarantees.
\end{abstract}

\maketitle

\section{Introduction}
Quantum information, though a relatively nascent branch of research, has pervaded into many areas of physics and even other subjects like computer science. This prevalence is owed to a large part to the wide range of applications that benefit from a quantum approach, such as computation~\cite{ekert1996quantum}, simulation~\cite{Georgescu2014Review}, communication~\cite{gisin2007quantum}, and metrology and sensing~\cite{degen2017quantum, giovannetti2011advances}. All of these applications require us to accurately and reliably prepare a desired quantum state. Yet, \emph{verifying} that the experimentally prepared state is indeed what we intended is a non-trivial task. A commonly used metric to judge the success of state preparation is the fidelity between the target state $\rho$ and the prepared state $\sigma$. When the target state is pure, the fidelity is simply given as $F(\rho, \sigma) = \tr(\rho \sigma)$~\cite{Nielsen2000}. While we focus on fidelity estimation here, our method can be used to estimate the expectation value of any observable.

A common way to estimate the fidelity is to first reconstruct the experimentally prepared quantum state and then compute the fidelity between the reconstruction and the target state. Quantum state tomography is an active area of research, and several methods have been proposed to reconstruct the quantum state \cite{Vogel1989,Hradil1997, Blume-Kohout2010BME, Gross2010, Cramer2010}. However, these methods suffer from an exponential resource requirement as the Hilbert space grows, making them infeasible for all but the smallest systems. Particularly, when the task is merely to estimate the fidelity, a method known as Direct Fidelity Estimation (DFE)~\cite{Flammia2011, DaSilva2011} has been shown to achieve the task with exponentially fewer resources. DFE and more recent approaches like classical shadows~\cite{Huang2020} estimate the fidelity directly from measurement outcomes without going through the intermediate step of reconstructing the state. A related, but slightly different approach, is quantum state verification~\cite{pallister2018optimal}, where one \textit{tests} whether the fidelity is larger than a given threshold. These methods specify measurement protocols, typically involving sampling measurement settings randomly, that one needs to follow in order to estimate the fidelity or perform state verification. In practice, however, experimenters tend to use a fixed set of measurement settings instead of randomly sampling them, which would result in a loss of theoretical guarantees. Furthermore, there can be cases where it is preferable to implement a different measurement scheme than the ones prescribed by these approaches.
This is particularly true if one wishes to the estimate the expectation value of a single observable, where performing measurements tailored to that observable can give exponential improvements over a method like classical shadows that performs random measurements independent of the observable to be estimated.

On the other hand, methods like maximum likelihood estimation (MLE) work for arbitrary measurement schemes, yet, in addition to the need for a costly reconstruction of the quantum state, they do not provide any guarantees on the estimate. Rigorous confidence intervals for estimated fidelities are crucial for applications from quantum error correction~\cite{Gottesman1997} to quantum key distribution, where the security of a protocol rests on proving the presence of entanglement~ \cite{curty2004entanglement, van2007experimental}. Entanglement verification in turn is typically achieved using an entanglement witness~\cite{brandao2005quantifying, bourennane2004experimental}, which is intimately tied to fidelity estimation~\cite{bourennane2004experimental}, again highlighting the need for rigorous confidence intervals. Even in cases where state reconstruction with a good confidence interval is possible, such as a bound in trace distance as given in Ref.~\cite{guctua2020FastTomography}, the number of measurements needed for tomography can be much larger than what is needed for fidelity estimation~\cite{Flammia2011}.
One can also use quantum algorithms to estimate the fidelity, for example, Ref.~\cite{cerezo2020variational, gilyen2022improved}, which work with a different premise than what we consider in our study.
To summarize, we have identified two concerns with standard approaches for fidelity estimation: most do not allow for arbitrary measurement settings, and/or do not provide rigorous confidence intervals.

Here, we address both of these concerns. We provide a versatile approach for estimating the fidelity directly from raw data for \emph{any} measurement scheme, without the need for state reconstruction. More precisely, given any target state, measurement scheme, and desired confidence level, our method constructs an estimator for the fidelity that takes raw data and gives a fidelity estimate almost instantaneously. Furthermore, the estimator comes equipped with a rigorous confidence interval that is guaranteed to be close to minimax optimal.\footnote{Larger at most by a factor of order 1 for sufficiently large confidence levels; see section~\ref{secn:minimax_method_theory}.} Turning this around, for a target confidence level $1 - \failure$, our method achieves a confidence interval of size $2\jnrisk$ --- where $\jnrisk$ is called the \emph{risk} (or additive error) of the estimate --- with a sample complexity of $\approx \ln(2/\failure)/(2\jnrisk^2)$ independent of the target state $\rho$ when measuring in the basis defined by $\rho$ (see section \ref{secn:minimax_method_optimal_risk}). While such a measurement scheme may often be impractical, we will introduce an alternative scheme using Pauli measurements that achieves a similar sample complexity for stabilizer states. This finding demonstrates that our method can give a scalable sample complexity with a judicious choice of measurement protocol.

We will begin by describing the theory behind our approach in section \ref{secn:minimax_method_theory}. Then, we compute the sample complexity of the method in section \ref{secn:minimax_method_optimal_risk} and describe a Pauli measurement scheme similar to DFE (in procedure and sample complexity) in section \ref{secn:minimax_method_RPM_scheme}. In section \ref{secn:minimax_method_robustness}, we demonstrate the robustness of the estimator generated by our method against noise. Finally, we compare our method with DFE, MLE, profile likelihood, a semidefinite programming approach, quantum state verification, and classical shadows in section \ref{secn:other_methods_comparison}.

\section{Minimax method\label{secn:minimax_method}}
Our approach is based on recent results in statistics by Juditsky \& Nemirovski~\cite{Juditsky2009,goldenshluger2015hypothesis,juditsky2018near}, who describe the estimation of linear functionals in a general setting. The risk associated with the estimate is nearly minimax optimal, and so we call it the minimax method. Roughly speaking, a minimax optimal estimator gives the smallest possible symmetric confidence interval in the worst possible scenario (this will be made precise later).

In section \ref{secn:minimax_method_theory}, we elaborate on the main ideas involved in estimating fidelity using the minimax method. An abstract version of the method with the theoretical details is given in Appendix~\ref{app:minimax_theory}, while an application-oriented presentation of the key results is given in the accompanying Ref.~\cite{PRL}.

\subsection{Fidelity estimation using the minimax method\label{secn:minimax_method_theory}}
Suppose that we are working with a quantum system that has a $d$-dimensional Hilbert space over $\mathbb{C}$, $d \in \mathbb{N}$. The states of this system can be described by density matrices, which are $d \times d$ positive semidefinite (complex-valued) matrices with unit trace. We denote the set of density matrices by $\mathcal{X}$. Our goal is to estimate the fidelity between the state $\sigma \in \mathcal{X}$ prepared in the lab and a pure target state $\rho \in \mathcal{X}$, where pure means that $\rho$ is a rank-one matrix. In this case, the fidelity between $\rho$ and $\sigma$ is given as $F(\rho, \sigma) = \tr(\rho \sigma)$, which is a linear function of $\sigma$ for a fixed $\rho$.

In practice, one needs to estimate the fidelity from partial information about the state obtained through measurements. Any quantum measurement can be described by a positive operator-valued measure (POVM), which comprises a set of positive semidefinite matrices that sum to the identity. We consider the case where an experimenter uses $L$ different measurement settings, where the POVM for the $l^\text{th}$ setting ($l = 1, \dotsc, L$), is described by $\{E^{(l)}_1, \dotsc, E^{(l)}_{N_l}\}$. The experimenter performs $R_l$ repetitions (shots) of the $l^\text{th}$ POVM. The probability $p_\sigma^{(l)}(k)$ that a particular outcome $k \in \{1, \dotsc, N_l\}$ is obtained upon measuring the $l^{\text{th}}$ POVM when the state of the system is $\sigma$ is given by Born's rule as $p_\sigma^{(l)}(k) = \tr(E^{(l)}_k \sigma)$.

Note that Born's rule can give zero probabilities for some of the outcomes depending on the state and the POVM. On the other hand, a technical requirement of Juditsky \& Nemirovski's method \cite{Juditsky2009} is that all outcome probabilities are non-zero. For this reason, we include a small positive parameter $\epsZero \ll 1$ to make the outcome probabilities positive, i.e.,
\begin{equation*}
    p_\sigma^{(l)}(k) = \frac{\tr(E^{(l)}_k \sigma) + \epsZero/N_l}{1 + \epsZero}.
\end{equation*}
We choose $\epsZero = 10^{-5}$ in the numerical simulations so that the Born probabilities are practically unaffected.
Because $\epsZero$ is small with respect to typical experimental imperfections, we will usually drop terms proportional to $\epsZero$ in the theoretical calculations.

Our goal is to construct an estimator $\widehat{F}$ for the fidelity $F(\rho, \sigma)$ between the experimental state $\sigma$ and the target state $\rho$, using the observed outcomes corresponding to the chosen measurement settings. By an estimator, we mean any function that takes the observed outcomes as input and gives an estimate for the fidelity as an output. Constructing a good estimator first requires a measure of error to judge the performance of the estimator. For this purpose, we use the $\failure$-risk defined by Juditsky \& Nemirovski \cite{Juditsky2009}. Intuitively speaking, having an $\failure$-risk of $\adderr$ means that the error in the estimation is at most $\adderr$ with a probability larger than $1 - \failure$. Therefore, the $\failure$-risk defines a symmetric confidence interval for a confidence level of $1 - \failure$. More precisely, given a fidelity estimator $\widehat{F}$ and a confidence level $1 - \failure$, the $\failure$-risk of $\widehat{F}$ is given by
\begin{align}
    \mathcal{R}(\widehat{F}; \failure) = \inf \bigg\{&\adderr'\ \big\vert \sup_{\chi \in \mathcal{X}} \nonumber \\ \Prob_{\text{outcomes} \sim p_\chi}\bigg[ 
    &\left|\widehat{F}(\text{outcomes}) - \tr(\rho \chi)\right| > \adderr' \bigg] < \failure\bigg\}.
    \label{eqn:eps_risk}
\end{align}
Here, ``$\text{outcomes} \sim p_\chi$'' means that the outcomes for $l^\text{th}$ measurement are given by the ($\epsZero$-modified) 
Born rule probabilities $p_\chi^{(l)}$ for a state $\chi$, for all $l = 1, \dotsc, L$. The definition of the $\failure$-risk says that $\mathcal{R}(\widehat{F}; \failure)$ is the smallest number such that the probability that the estimator $\widehat{F}$ has an error larger than $\mathcal{R}(\widehat{F}; \failure)$ is less than $\failure$, irrespective of the underlying state of the system.

Importantly, the $\failure$-risk for any given estimator $\widehat{F}$ can be pre-computed, even before performing a measurement. This is possible because it depends only on the target state, the confidence level, and the chosen measurement settings, while implicitly accounting for all possible measurement outcomes in all possible states. We can thus construct estimators that nearly achieve the minimum possible risk before taking any data.

This minimum possible risk $\mathcal{R}_*(\failure)$ for a chosen confidence level is called the \textit{minimax optimal risk}, and is obtained by minimizing $\mathcal{R}(\widehat{F}; \failure)$ over all possible estimators $\widehat{F}$, i.e., $\mathcal{R}_*(\failure) = \inf_{\widehat{F}} \mathcal{R}(\mathcal{F}; \failure)$. Therefore, the minimax optimal risk gives the smallest possible error by minimizing over all the estimators $\widehat{F}$, while maximizing over all the states $\chi \in \mathcal{X}$ (see Eq.~\eqref{eqn:eps_risk}).

In practice, however, it is computationally very difficult to sift through all possible estimators and choose the optimal one. For this purpose, Juditsky \& Nemirovski~\cite{Juditsky2009} restrict their attention to a subset $\mathcal{F}$ of possible estimators called affine estimators. Any affine estimator $\phi \in \mathcal{F}$ is of the form $\phi = \sum_{l = 1}^L \phi^{(l)}$, where $\phi^{(l)}$ is an estimator for the $l^{\text{th}}$ measurement setting. That is, $\phi^{(l)}$ takes an outcome of the $l^\text{th}$ POVM as input and returns a number as the output. Remarkably, Juditsky \& Nemirovski~\cite{Juditsky2009} show how to construct an affine estimator that achieves nearly minimax optimal performance, and therefore restricting attention to affine estimators is not a problem. Specifically, if $\widehat{F}_* \in \mathcal{F}$ is the near-optimal affine estimator constructed by Juditsky \& Nemirovski's procedure, then the risk $\jnrisk(\failure)$ of $\widehat{F}_*$ computed by their procedure satisfies
\begin{align}
    \mathcal{R}(\widehat{F}_*; \failure) &\leq \jnrisk(\failure) \leq \vartheta(\failure) \mathcal{R}_*(\failure), \label{eqn:risk_minimax_guarantee} \\
    \vartheta(\failure) &= \frac{2 \ln(2/\failure)}{\ln(1/(4\failure))} = 2 + \frac{\ln(64)}{\ln(0.25/\failure)} \label{eqn:risk_minimax_guarantee_factor}
\end{align}
for $\failure \in (0, 0.25)$. We restrict $\failure$ to the interval $(0, 0.25)$ (or equivalently, to confidence levels greater than $75\%$) so that $\vartheta(\failure)$ is well-defined. From Eq.~\eqref{eqn:risk_minimax_guarantee}, we see that the risk $\jnrisk(\failure)$ computed by Juditsky \& Nemirovski's procedure is actually an upper bound on the (unknown) $\failure$-risk $\mathcal{R}(\widehat{F}_*, \failure)$ of the near-optimal affine estimator $\widehat{F}_*$. Since the probability that the estimator $\widehat{F}_*$ fails by an error more than $\mathcal{R}(\widehat{F}_*, \failure)$ is less than $\failure$, and because $\mathcal{R}(\widehat{F}_*, \failure) \leq \jnrisk(\failure)$ we can conclude that the probability that $\widehat{F}_*$ fails by more than $\jnrisk(\failure)$ is also less than $\failure$. Therefore, the risk $\jnrisk(\failure)$ defines a confidence interval corresponding to the chosen confidence level $1 - \failure$.

Eq.~\eqref{eqn:risk_minimax_guarantee} guarantees that the confidence interval defined by $\jnrisk(\failure)$ is nearly minimax optimal. This is because the risk $\jnrisk(\failure)$ of the near-optimal affine estimator $\widehat{F}_*$ computed by Juditsky \& Nemirovski's procedure is at most a constant times the minimax optimal risk, where the constant factor $\vartheta(\failure)$ depends only on the chosen confidence level. Note that $\vartheta(0.1) < 6.54$ and that $\vartheta(\failure)$ is a decreasing function of $\failure$, converging to $2$ as $\failure \to 0$. Therefore, for confidence levels greater than $90\%$, $\jnrisk$ is guaranteed to be close to the minimax optimal risk by a factor less than $6.5$. In practice, the estimator typically performs better than the theoretically guaranteed bound of Eq.~\eqref{eqn:risk_minimax_guarantee}. We use the star symbol as a subscript in $\jnrisk(\failure)$ to emphasize that it is nearly minimax optimal.

In the remainder of this study, we omit the argument $\failure$ when writing $\jnrisk$ for the sake of notational simplicity. A short summary of the different types of risk mentioned in this section can be found in Table~\ref{tab:types_of_risk}.

\begin{table}[h]
    \begin{tabular}{l p{7.3cm}}
        $\mathcal{R}_*(\failure)$ & Minimax optimal risk, which is the minimum possible additive error with a confidence level of $1 - \failure$, accounting for all possible states, measurement outcomes, and all possible estimators. \\[0.3cm]
        $\mathcal{R}(\widehat{F}_*; \failure)$ & $\failure$-risk of the estimator $\widehat{F}_*$ as defined in Eq.~\eqref{eqn:eps_risk}. This is the smallest possible additive error for the estimator $\widehat{F}_*$ with a confidence level of $1 - \failure$, accounting for all possible states and measurement outcomes. \\[0.3cm]
        $\jnrisk(\failure)$ & Risk (or additive error) of the estimator $\widehat{F}_*$ for a confidence level of $1 - \failure$ that is computed by Juditsky \& Nemirovski's procedure. This is an upper bound on the $\failure$-risk $\mathcal{R}(\widehat{F}_*; \failure)$ and it is nearly minimax optimal in the sense of Eq.~\eqref{eqn:risk_minimax_guarantee}.
    \end{tabular}
    \caption{Different types of risk considered in this study. In each case, the true state is unknown, the measurement settings are fixed, and all measurement outcomes consistent with the state and measurement settings are allowed. $\widehat{F}_*$ is the near-optimal affine estimator constructed by Juditsky \& Nemirovski's \cite{Juditsky2009} procedure.}
    \label{tab:types_of_risk}
\end{table}

A practically implementable version~\cite{seshadri2021computation} of the procedure outlined by Juditsky \& Nemirovski \cite{Juditsky2009} for constructing the near-optimal affine estimator $\widehat{F}_*$ and its associated risk $\jnrisk$ is as follows:
\begin{enumerate}[leftmargin=0.2cm]
    \item Find the saddle-point value of the function $\Phi\colon (\mathcal{X} \times \mathcal{X}) \times (\mathcal{F} \times \mathbb{R}_+) \to \mathbb{R}$ defined as
    \begin{align}
        \Phi(\chi_1, &\chi_2;\ \phi, \alpha) = \tr(\rho \chi_1) - \tr(\rho \chi_2) + 2\alpha \ln(2/\failure) \nonumber \\
            &+ \alpha \sum_{l = 1}^L R_l \Bigg[\ln\left(\sum_{k = 1}^{N_l} e^{-\phi^{(l)}_k/\alpha} \frac{\tr(E^{(l)}_k \chi_1) + \epsZero/N_l}{1 + \epsZero}\right) \nonumber \\
            &\hspace{1cm} + \ln\left(\sum_{k = 1}^{N_l} e^{\phi^{(l)}_k/\alpha} \frac{\tr(E^{(l)}_k \chi_2) + \epsZero/N_l}{1 + \epsZero}\right)\Bigg] \label{eqn:Phi_quantum}
    \end{align}
    to a given precision, where $\rho$ is the target state, $\chi_1, \chi_2 \in \mathcal{X}$ are density matrices, $\phi \in \mathcal{F}$ is an affine estimator, and $\alpha > 0$ is a positive number. The number $\alpha$ as such has no intuitive meaning, and enters the function $\Phi$ for mathematical reasons. The function $\Phi$ is itself a mathematical device designed by Juditsky \& Nemirovski to compute the near-optimal affine estimator and the associated risk.
    
    Recall that any affine estimator $\phi \in \mathcal{F}$ can be written as $\phi = \sum_{l = 1}^L \phi^{(l)}$. Here, $\phi^{(l)}$ is a function taking outcomes of the $l^{\text{th}}$ POVM as input, and giving a real number as an output. The $l^{\text{th}}$ POVM has $N_l$ measurement outcomes, and $\phi_k^{(l)}$ denotes the output of $\phi^{(l)}$ for the $k^{\text{th}}$ measurement outcome, for each $k = 1, \dotsc, N_l$.
    
    Juditsky \& Nemirvoski \cite{Juditsky2009} prove that the function $\Phi$ has a well-defined saddle-point value (see Appendix~\ref{app:minimax_theory} for more details). We present an algorithm to compute the saddle-point value of $\Phi$ to a given precision using convex optimization in Appendix~\ref{app:minimax_numerical}.

    \item We denote the saddle-point value of $\Phi$ by $2\jnrisk$, i.e.,
        \begin{align}
            \jnrisk &= \frac{1}{2} \sup_{\chi_1, \chi_2 \in \mathcal{X}} \inf_{\phi \in \mathcal{F}, \alpha > 0} \Phi(\chi_1, \chi_2; \phi, \alpha) \nonumber \\
                                    &= \frac{1}{2} \inf_{\phi \in \mathcal{F}, \alpha > 0} \max_{\chi_1, \chi_2 \in \mathcal{X}} \Phi(\chi_1, \chi_2; \phi, \alpha) . \label{eqn:JN_risk_saddle_point}
        \end{align}
        Suppose that the saddle-point value is attained at $\chi_1^*, \chi_2^* \in \mathcal{X}$, $\phi_* \in \mathcal{F}$ and $\alpha_* > 0$ to within the given precision. Since $\phi^* \in \mathcal{F}$ is an affine estimator, we can write $\phi_* = \sum_{l = 1}^L \phi^{(l)}_*$. Suppose that independent and identically distributed outcomes $\{o^{(l)}_1,\dotsc, o^{(l)}_{R_l}\}$ are observed upon measurement of the  $l^\text{th}$ POVM. Then, the optimal estimator $\widehat{F}_* \in \mathcal{F}$ for estimating the fidelity of the state prepared in the lab with the target state is given as
        \begin{align}
            \widehat{F}_*(\{o^{(l)}_1, \dotsc, &o^{(l)}_{R_l}\}_{l = 1}^L) = \sum_{l = 1}^L \sum_{r = 1}^{R_l} \phi^{(l)}_*(o^{(l)}_r) + c , \label{eqn:JN_estimator}
            \intertext{where the constant $c$ is}
            c &= \frac{1}{2} \left(\tr(\rho \chi_1^*) + \tr(\rho \chi_2^*)\right) . \label{eqn:JN_estimator_constant}
        \end{align}
        The procedure outputs $\widehat{F}_*$ and $\jnrisk$, such that $\mathcal{R}(\widehat{F}_*; \failure)~\leq~\jnrisk$.
        Then, from Eq.~\eqref{eqn:eps_risk}, we can infer that $|\widehat{F}_*(\text{outcomes}) - \tr(\rho \sigma)| \leq \jnrisk$ with probability greater than or equal to $1 - \failure$, where $\sigma$ is the actual state prepared in the lab.
        The guarantees given by Eqs.~\eqref{eqn:risk_minimax_guarantee} \& \eqref{eqn:risk_minimax_guarantee_factor} apply.
\end{enumerate}

Observe that $\widehat{F}_*$ is a function of the outcomes (see Eq.~\eqref{eqn:JN_estimator}), while, the risk $\jnrisk$ of the estimator is a function of the measurement protocol. We thus know how well the constructed estimator does even before performing an experiment, which can be used for benchmarking different measurement protocols.

It is not apparent from the above procedure how finding the saddle point of the function $\Phi$ defined in Eq.~\eqref{eqn:Phi_quantum} leads to constructing the fidelity estimator $\widehat{F}_*$ with error $\jnrisk$.
Fortunately, we can get some mathematical intuition on how this algorithm works by rewriting the function $\Phi$ in Eq.~\eqref{eqn:Phi_quantum}, following the proof of Juditsky \& Nemirovski~\cite{Juditsky2009}.
For simplicity, we restrict our attention to the case where we record a single outcome of a single POVM (i.e., $L = 1$ and $R_1 = 1$), and the more general case can be handled similarly.
Denoting the outcome probability as $p^{(1)}_k(\chi) = (\tr(E^{(1)}_k \chi) + \epsZero) / (1 + \epsZero)$, we can rearrange terms in Eq.~\eqref{eqn:Phi_quantum} to obtain
\begin{align*}
    &\frac{\Phi(\chi_1, \chi_2; \phi_*, \alpha_*) - 2 \jnrisk}{\alpha_*} \\
    &= \Bigg[\ln\left(\sum_{k = 1}^{N_1} \exp\left((\tr(\rho \chi_1) - (\phi_{* k}^{(1)} + c) - \jnrisk) / \alpha_*\right) p_k(\chi_1)\right) \\
    &\qquad + \ln(2/\delta)\Bigg] \\
    &+ \Bigg[\ln\left(\sum_{k = 1}^{N_1} \exp\left((-\tr(\rho \chi_2) + (\phi_{* k}^{(1)} + c) - \jnrisk) / \alpha_*\right) p_k(\chi_2)\right) \\
    &\qquad + \ln(2/\delta)\Bigg].
\end{align*}
Since $\max_{\chi_1, \chi_2} \Phi(\chi_1, \chi_2; \phi_*, \alpha_*) = 2 \jnrisk$ to within some precision, we have $\Phi(\chi_1, \chi_2; \phi_*, \alpha_*) - 2 \jnrisk \leq 0$ for all $\chi_1, \chi_2$.
Then, noting that $\widehat{F}_*(k) = \phi_{* k}^{(1)} + c$ (for one measurement setting), the choice of $c$ in Eq.~\eqref{eqn:JN_estimator_constant} ensures that
both the terms in square brackets above are less than or equal to zero~\cite{Juditsky2009}.
In other words, for all $\chi_1, \chi_2$, we have
\begin{align*}
    &\sum_{k = 1}^{N_1} \exp\left((\tr(\rho \chi_1) - \widehat{F}_*(k) - \jnrisk) / \alpha_*\right) p_k(\chi_1) \leq \delta/2 \\
    &\sum_{k = 1}^{N_1} \exp\left((-\tr(\rho \chi_2) + \widehat{F}_*(k) - \jnrisk) / \alpha_*\right) p_k(\chi_2) \leq \delta/2,
\end{align*}
where by $\widehat{F}_*(k)$ we mean $\widehat{F}_*$ evaluated on the $k$th outcome.
However, since $e^x \geq 1$ for $x \geq 0$ and $\delta \in (0, 0.25)$, for the above inequalities to be true, we must have
$\tr(\rho \chi) - \widehat{F}_* - \jnrisk \leq 0$ and $-\tr(\rho \chi) + \widehat{F}_* - \jnrisk \leq 0$ with high probability for any state~$\chi$.
This is equivalent to saying $|\tr(\rho \chi) - \widehat{F}_*| \leq \jnrisk$ with high probability for all states~$\chi$.
Note that this idea can be formalized using Markov's inequality.
Therefore, $\widehat{F}_*$ gives a valid estimate for the fidelity $\tr(\rho \chi)$ within error $\jnrisk$ with high probability, no matter what the underlying state $\chi$ is.

Although the formal procedure above is rather abstract, in section~\ref{secn:minimax_method_affine_estimator}, we show that the estimator $\widehat{F}_*$ can be understood as an appropriately weighted sum of the observed frequencies, where the weights depend on the target state and measurement scheme. In section~\ref{secn:minimax_method_optimal_risk}, we show that the risk $\jnrisk$ can be written in terms of classical fidelities determined by the POVMs in the measurement protocol. This is helpful in performing theoretical calculations involving the risk, and potentially useful for benchmarking measurement protocols.

Importantly, the above procedure can not only be used to construct an estimator for the fidelity, but for the expectation value of any observable. This can be achieved by replacing the target state $\rho$ in the above equations with the Hermitian operator $\mathcal{O}$ corresponding to the observable whose expectation value we wish to estimate.

We discuss the theoretical underpinnings of the minimax method in Appendix~\ref{app:minimax_theory}, starting with a brief overview of Juditsky \& Nemirovski's general set-up \cite{Juditsky2009}, followed by details on adapting their method to estimating fidelity and expectation values. Subsequently, details regarding the numerical implementation of the minimax method are given in Appendix~\ref{app:minimax_numerical}. In particular, we outline a convex optimization algorithm for constructing the optimal estimator $\widehat{F}_*$ and the associated risk $\jnrisk$ given any target state and measurement settings.

\subsection{Toy problem: 1-qubit target state}
We explain in detail a simple setting where it is possible to understand the estimator that is constructed by the procedure described above. Let the target state be $\rho = \op{1}{1}$, where $\{\ket{0}, \ket{1}\}$ are the eigenstates of Pauli $Z$. Suppose that our experiment consists of performing Pauli $Z$ measurements, and we want to estimate the fidelity of the lab state $\sigma$ with the target state $\rho$. This problem is essentially classical: given the quantum state $\sigma$, we can write it in the computational basis as $\sigma = (1 - p) \op{0}{0} + q \op{0}{1} + q^* \op{1}{0} + p \op{1}{1}$. Since we measure $Z$, we will obtain the outcome $\ket{0}$ with a probability of $1 - p$ and $\ket{1}$ with a probability of $p$, and the problem is to estimate the fidelity $\tr(\rho \sigma) = p$. This is the same as having a Bernoulli random variable with parameter $p$, which we want to estimate. Note that the parameter $p$ corresponds to the probability of observing the outcome $1$ for the Bernoulli random variable.

In order to numerically compute the fidelity estimator, we choose a confidence level of $95\%$ and $R = 100$ repetitions of Pauli $Z$ measurement. Since we measure only one POVM, the estimator given in Eq.~\eqref{eqn:JN_estimator} can be written as $\widehat{F}_*(\{o_1, \dotsc, o_R\}) = \sum_{i = 1}^R \phi_*(o_i) + c$, where $o_1, \dotsc, o_R \in \{0, 1\}$ are the measurement outcomes. Here, $\phi_* \in \mathcal{F}$ is an affine estimator corresponding to the saddle-point of the function $\Phi$ (see Eq.~\eqref{eqn:Phi_quantum}) that accepts a measurement outcome ($0$ or $1$) as input and gives a number as an output, while the constant $c$ is as given in Eq.~\eqref{eqn:JN_estimator_constant}. Numerically, we find that the affine estimator $\phi_*$ gives the values $\phi_*(0) \approx -0.476 \times 10^{-2}$ and $\phi_*(1) \approx 0.476 \times 10^{-2}$ for the measurement outcomes $0$ and $1$, respectively. The value of the constant $c$ computed numerically is $c = 0.5$.

Now, we show that the estimator $\widehat{F}_*$ constructed by the minimax method is essentially the sample mean estimator for a Bernoulli random variable, but with a small additive constant that pushes the estimate away from the boundary of the parameter space. Note that the sample mean estimator is also the Maximum Likelihood estimator for estimating the parameter of a Bernoulli random variable. Recall that our estimator for fidelity is given as $\widehat{F}_*(\{o_1, \dotsc, o_R\}) = \sum_{i = 1}^R \phi_*(o_i) + c$ for a given set of outcomes $\{o_i\}_{i = 1}^R$. We can then write the affine estimator $\phi_*$ as $\phi_*(x) = (\phi_*(1) - \phi_*(0)) x + \phi_*(0)$ for any outcome $x \in \{0, 1\}$. Therefore,
\begin{align}
    \widehat{F}_*(\{o_i\}_{i = 1}^R) &= \sum_{i = 1}^R \left((\phi_*(1) - \phi_*(0)) o_i + \phi_*(0)\right) + c \notag \\
                                     &= (\phi_*(1) - \phi_*(0)) \sum_{i = 1}^R o_i + (R \phi_*(0) + c) \notag
    \intertext{where $o_i \in \{0, 1\}$. Using the values obtained numerically, we find that}
    \widehat{F}_*(\{o_i\}_{i = 1}^{100}) &= \frac{0.952}{100} \sum_{i = 1}^{100} o_i + 0.024 \nonumber
\end{align}
Thus, we can interpret $\widehat{F}_*$ as (approximately) the mean of the sample $o_1, \dotsc, o_{100}$, but with a small additive constant of $0.024$. For finite sample sizes the constant term is justified, because even upon observing all $0$ outcomes, we cannot be certain that the probability $p$ of observing outcome $1$ is $p = 0$. For a similar reason, we have $0.952/100$ as the coefficient for the sum instead of $1/100$. We observe numerically that this coefficient approaches $1 / R$ and the additive constant goes to $0$ as $R$ increases.

What if we decide to measure Pauli $X$ (measurement~$2$) in addition to Pauli $Z$ (measurement~$1$)? In this case, the estimator $\widehat{F}_*$ can be written as $\widehat{F}_*(\{o_1, \dotsc, o_R\}) = \sum_{i = 1}^R \phi^{(1)}_*(o_i) + \sum_{i = 1}^R \phi^{(2)}_*(o_i) + c$. Here, $\phi^{(1)}_*$ is the affine estimator at the saddle-point corresponding to the Pauli $Z$ measurement, while $\phi^{(2)}_*$ is the affine estimator at the saddle-point corresponding to the Pauli $X$ measurement. In such a scenario, we find that the saddle-point value corresponding to the Pauli $Z$ measurement is unchanged. That is, $\phi^{(1)}_*(0) \approx -0.476 \times 10^{-2},\ \phi^{(1)}_*(1) \approx 0.476 \times 10^{-2}$, and the constant $c = 0.5$ is also the same as before. On the other hand, the saddle-point value corresponding to Pauli $X$ measurement is $\phi^{(2)}_*(0) = \phi^{(2)}_*(1) = 0$. Therefore, the outcomes from the $Z$ measurement are weighted as before, but those from $X$ measurement are discarded. The reason is simple: for estimating the fidelity with $\rho = \op{1}{1}$, measurement of Pauli $X$ gives no useful information, irrespective of what the actual state $\sigma$ is. Indeed, upon measuring $X$, we get $\ket{+}$ with probability $1/2 + \Re(q)$ and $\ket{-}$ with probability $1/2 - \Re(q)$, both of which are independent of $\tr(\rho \sigma) = p$ that we want to estimate. Thus, our method properly incorporates the available measurements to give an estimator for fidelity.

\subsection{Affine estimator for fidelity\label{secn:minimax_method_affine_estimator}}
An affine function is a linear function plus a constant term. In the previous section, we saw that the estimator for the $1$-qubit toy problem can be expressed as an affine function of the sample mean. In this section, we show that such a result holds more generally. Our estimator can be written as an affine function of the observed frequencies.

To begin, recall that the outcomes of the $l$th measurement setting is described by the set $\Omega^{(l)} = \{1, \dotsc, N_l\}$, where the index $k$ corresponds to the POVM element $E^{(l)}_k$. Corresponding to each outcome $k \in \Omega^{(l)}$, we associate a canonical basis element $\bm{e}^{(l)}_k \in \mathbb{R}^{N_l}$. From Eq.~\eqref{eqn:JN_estimator}, we know that the fidelity estimator is given as
\begin{equation*}
    \widehat{F}_*(\{o^{(l)}_1, \dotsc, o^{(l)}_{R_l}\}_{l = 1}^{R_l}) = \sum_{l = 1}^L \sum_{r = 1}^{R_l} \phi^{(l)}_*(o^{(l)}_r) + c
\end{equation*}
where $\{o^{(l)}_1, \dotsc, o^{(l)}_{R_l}\}$ are the outcomes corresponding to the $l^\text{th}$ measurement setting, and the constant $c$ is given by Eq.~\eqref{eqn:JN_estimator_constant}. In the form written above, $\phi^{(l)}_* \in \mathcal{F}^{(l)}$ are real-valued functions defined on the set $\Omega^{(l)}$. Regarding $\phi^{(l)}_*$ as an $N_l$-dimensional real vector with ``coefficient'' vector
\begin{equation*}
    \bm{a}^{(l)} = \begin{pmatrix} \phi^{(l)}_*(1), & \dotsc, & \phi^{(l)}_*(N_l) \end{pmatrix} \in \mathbb{R}^{N_l}
\end{equation*}
we can write $\phi^{(l)}_*(k) = \langle\bm{a}^{(l)}, \bm{e}^{(l)}_k\rangle$ for ${k\in\Omega^{(l)}}$. Then, the estimator can be written as
\begin{equation*}
    \widehat{F}_*(\{o^{(l)}_1, \dotsc, o^{(l)}_{R_l}\}_{l = 1}^{R_l}) = \sum_{l = 1}^L \sum_{r = 1}^{R_l} \ip{\bm{a}^{(l)}, \bm{e}^{(l)}_{o^{(l)}_r}} + c
\end{equation*}
which is affine in $\bm{e}^{(l)}_{o^{(l)}_r}$. However, in the above equation, $\widehat{F}_*$ is not an affine function of the input. To remedy this, we define the vector of experimentally observed frequencies $\bm{f}^{(l)} \in \mathbb{R}^{N_l}$, obtained by binning the outcomes $o^{(l)}_1, \dotsc, o^{(l)}_{R_l}$, such that $f^{(l)}_k$ denotes the relative frequency of the outcome $k \in \Omega^{(l)}$ observed in the experiment for the $l$th measurement setting. We can then write
\begin{equation}
    \bm{f}^{(l)} = \frac{1}{R_l} \sum_{r = 1}^{R_l} \bm{e}^{(l)}_{o^{(l)}_r} . \notag
\end{equation}
We can then express the fidelity estimator in a way that is indeed affine in the observed frequencies as
\begin{equation}
    \widehat{F}_*(\bm{f}^{(1)}, \dotsc, \bm{f}^{(L)}) = \sum_{l = 1}^L R_l \ip{\bm{a}^{(l)}, \bm{f}^{(l)}} + c. \label{eqn:JN_estimator_affine}
\end{equation}
Eq.~\eqref{eqn:JN_estimator_affine} gives an intuitive understanding of the estimator constructed by the minimax method --- it is simply an appropriate weighting of the relative frequencies observed in the experiment. The weights are obtained from the saddle-point of the function $\Phi$ defined in Eq.~\eqref{eqn:Phi_quantum}.

\subsection{Risk and the best sample complexity\label{secn:minimax_method_optimal_risk}}
We now learned about the estimator, but what about the risk? We found that the risk $\jnrisk$ given by Eq.~\eqref{eqn:JN_risk_saddle_point} is half the saddle-point value of the function $\Phi$. However, in this form, it is difficult to infer what this quantity is. In Appendix~\ref{app:minimax_theory}, we show that one can write the risk $\jnrisk$ given by the minimax method in a form that is more amenable to interpretation as
\begin{align}
    &\jnrisk = \max_{\chi_1, \chi_2 \in \mathcal{X}} \frac{1}{2} \left(\tr(\rho \chi_1) - \tr(\rho \chi_2)\right) \nonumber \\
    &\hspace{1.6cm} \text{s.t.}\ \prod_{l = 1}^L \left[F_C(\chi_1, \chi_2, \{E^{(l)}_k\})\right]^{R_l/2} \geq \frac{\failure}{2} \label{eqn:JNriskprop3.1}
    \intertext{where}
    &F_C(\chi_1, \chi_2, \{E^{(l)}_k\}) = \left(\sum_{k = 1}^{N_l} \sqrt{\tr\left(E^{(l)}_k \chi_1\right) \tr\left(E^{(l)}_k \chi_2\right)}\right)^2 \label{eqn:classicalfidelity}
\end{align}
denotes the classical fidelity between the probabilities determined by states $\chi_1$ and $\chi_2$ corresponding to the POVM measurement $\{E^{(l)}_k\}$. Here, the measurement protocol chosen by the experimenter corresponds to measuring $L$ different POVMs, where the $l^{\text{th}}$ POVM $\{E^{(l)}_1, \dotsc, E^{(l)}_{N_l}\}$ is measured $R_l$ times, for $l = 1, \dotsc, L$. Thus, the risk $\jnrisk$ can be intuitively interpreted as half the largest deviation $F(\rho, \chi_1) - F(\rho, \chi_2)$ of the fidelity of the target state $\rho$ with the states $\chi_1, \chi_2$ compatible with the measurement protocol and confidence level. The factor of half comes from the fact that we are looking for a symmetric confidence interval.

Note that the fidelity between any two quantum states must lie between $0$ and $1$, and thus, $0 \leq \tr(\rho \chi) \leq 1$ for any density matrix $\chi$. Then, we can infer from Eq.~\eqref{eqn:JNriskprop3.1} that the maximum possible value of risk is $0.5$. Indeed, an uncertainty of $\pm 0.5$ gives a confidence interval of length $1$, which is also the size of interval for fidelity.

On the other hand, it can be shown from Eq.~\eqref{eqn:JNriskprop3.1} that the risk decreases when the number of measurement settings or the number of shots for any measurement setting is increased. To that end, note that $F_C$ ranges between $0$ and $1$, so raising it to a larger power makes it smaller. Thus, increasing the number of shots $R_l$ makes it harder to satisfy the constraint that $\prod_l F_C^{R_l/2}$ must be larger than $\failure/2$. Similarly, if we include another measurement setting (in effect, increasing $L$), we have one more fraction $F_C$ multiplying the left-hand side of the constraint, thereby making the constraint tighter. Since a tighter constraint implies a more restricted search space for maximization, we can infer that the risk will become smaller (or stay the same) in either case. This also shows that the risk is dependent on the chosen measurement protocol, and protocols that make the constraint tighter will have a smaller risk in estimating the fidelity.

We numerically quantify the variation of the risk with the number of measurement settings and the number of repetitions for a $4$-qubit randomly chosen target state. We apply $10\%$ depolarizing noise to obtain the actual state, and perform Pauli measurements. We see from Fig.~\ref{fig:JN_risk_analysis} that the risk decreases with the number of Pauli measurements as well as the number of repetitions.
\begin{figure}[!ht]
    \includegraphics[width=0.5\textwidth]{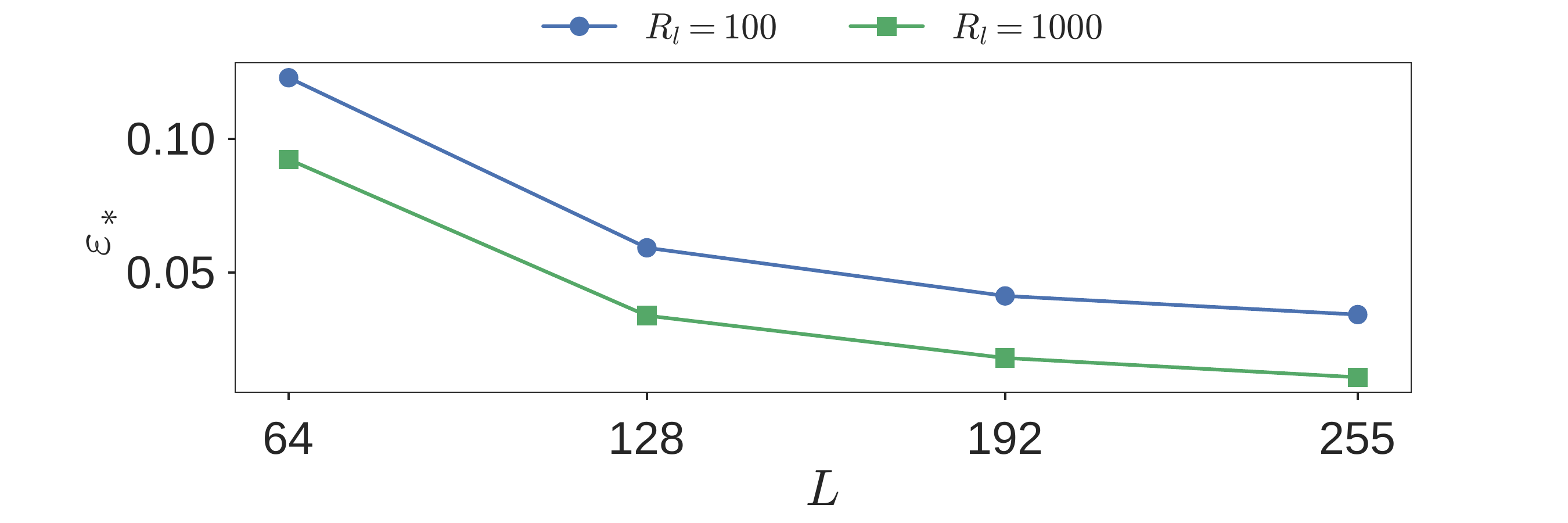}
    \caption{Variation of the risk with the number of Pauli measurements $L$ and number of repetitions $R_L$ of each Pauli measurement for a $4$-qubit random target state. All the computed risks correspond to a $95\%$ confidence level.}
    \label{fig:JN_risk_analysis}
\end{figure}

A natural question that arises is how low the risk can be. To find a lower bound, recall that we can write the fidelity between any two states as a minimization of the classical fidelity over all POVMs \cite{fuchs1996distinguishability}.
\begin{equation*}
    F(\chi_1, \chi_2) = \min_{\text{POVM } \{F_i\}} F_C(\chi_1, \chi_2, \{F_i\})
\end{equation*}
In particular, we have $F(\chi_1, \chi_2) \leq F_C(\chi_1, \chi_2, \{E^{(l)}_k\})$ for every POVM $\{E^{(l)}_k\}$ that we are using. Thus, we obtain the following lower bound on our risk
\begin{align*}
    \jnrisk \geq &\max_{\chi_1, \chi_2 \in \mathcal{X}} \frac{1}{2} \left(\tr(\rho \chi_1) - \tr(\rho \chi_2)\right) \nonumber \\
                                 &\qquad \text{s.t.}\quad F(\chi_1, \chi_2) \geq \left(\frac{\failure}{2}\right)^{\frac{2}{R}}
\end{align*}
where $R = \sum_{l = 1}^L R_l$ is the total number of shots. Evaluating the right-hand side of the above equation, we obtain a lower bound for the risk. This result is summarized in the following theorem.
\begin{theorem}
    \label{thm:minimax_method_best_sample_complexity}
    Let $\rho$ be any pure target state. Suppose that $\jnrisk$ is the risk associated with the fidelity estimator given by the minimax method corresponding to a confidence level of $1 - \failure \in (0.75, 1)$ and any measurement scheme. Then, the risk is bounded below as
    \begin{equation}
        \jnrisk \geq \frac{1}{2} \sqrt{1 - \left(\frac{\failure}{2}\right)^{2/R}}  \label{eqn:optimalJNrisk}
    \end{equation}
    where $R$ is the total number of measurement outcomes \textnormal{(}all measurement settings combined\textnormal{)} to be supplied to the estimator.
    This lower bound can be achieved by measuring in the basis defined by $\rho$. That is, $R$ repetitions of the POVM $\{\rho, \id - \rho\}$ achieves the risk $\jnrisk$.

    Stated differently, the best sample complexity that can be obtained using the minimax method corresponding to a risk of $\jnrisk \in (0, 0.5)$ and confidence level $1 - \failure$ is
    given by
    \begin{align}
        R &\geq \frac{2 \ln(2/\failure)}{\left|\ln(1 - 4\jnrisk^2)\right|}  \label{eq:riskBound} \\
          &\approx \frac{\ln(2/\failure)}{2\jnrisk^2} \text{ when } \jnrisk^2\ll 1. \nonumber
    \end{align}
\end{theorem}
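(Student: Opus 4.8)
The proof has two parts: establishing the lower bound \eqref{eqn:optimalJNrisk} for an arbitrary measurement scheme, and showing it is attained by the POVM $\{\rho, \id - \rho\}$. The plan is to build directly on the scheme-independent reduction already carried out in the excerpt, namely
\[
    \widehat{\mathcal{R}}_* \geq V := \max_{\chi_1, \chi_2 \in \mathcal{X}} \tfrac{1}{2}\left(\tr(\rho\chi_1) - \tr(\rho\chi_2)\right) \quad \text{s.t.}\quad F(\chi_1,\chi_2) \geq (\epsilon/2)^{2/R},
\]
and then to lower bound the scalar quantity $V$.

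First I would lower bound $V$ by an explicit feasible construction, which by itself already yields \eqref{eqn:optimalJNrisk}. Writing $\rho = \op{\psi}{\psi}$, I work inside the two-dimensional span of $\ket{\psi}$ and a fixed orthogonal vector and take pure states $\chi_1 = \op{a}{a}$, $\chi_2 = \op{b}{b}$ with opposite tilt, so that $\tr(\rho\chi_1) = \cos^2\theta_1$, $\tr(\rho\chi_2) = \cos^2\theta_2$, and $F(\chi_1,\chi_2) = |\ip{a|b}|^2 = \cos^2(\theta_1 + \theta_2)$. Using the identity $\cos^2\theta_1 - \cos^2\theta_2 = \sin(\theta_1+\theta_2)\sin(\theta_2-\theta_1)$, the objective becomes a product of sines; choosing $\theta_1+\theta_2$ so that $\cos^2(\theta_1+\theta_2) = (\epsilon/2)^{2/R}$ (saturating the constraint) and $\theta_2 - \theta_1 = \pi/2$ (maximizing the second sine) gives objective value $\tfrac{1}{2}\sqrt{1 - (\epsilon/2)^{2/R}}$. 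Since this point is feasible, $V \geq \tfrac12\sqrt{1-(\epsilon/2)^{2/R}}$; and because $V$ depends only on $\rho$, $\epsilon$, and the total shot count $R$ -- not on the chosen POVMs -- inequality \eqref{eqn:optimalJNrisk} holds for \emph{every} measurement scheme.

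Next I would prove achievability by evaluating the risk \eqref{eqn:JNriskprop3.1} for the single POVM $\{\rho, \id-\rho\}$ measured $R$ times. Here the classical fidelity \eqref{eqn:classicalfidelity} reduces to the Bhattacharyya coefficient of two Bernoulli distributions, $F_C = (\sqrt{f_1 f_2} + \sqrt{(1-f_1)(1-f_2)})^2$ with $f_i = \tr(\rho\chi_i)$, so the whole optimization collapses to two scalar variables $f_1, f_2 \in [0,1]$ (every value being attainable since $\rho$ is pure). The constraint $[F_C]^{R/2} \geq \epsilon/2$ becomes $\sqrt{f_1 f_2} + \sqrt{(1-f_1)(1-f_2)} \geq (\epsilon/2)^{1/R}$; substituting $f_1 = \cos^2\alpha$, $f_2 = \cos^2\gamma$ turns the left side into $\cos(\alpha-\gamma)$ and the objective into $\tfrac12\sin(\alpha+\gamma)\sin(\gamma-\alpha)$, and the same trigonometric maximization gives exactly $\tfrac12\sqrt{1 - (\epsilon/2)^{2/R}}$. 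Combined with the universal lower bound, this shows $\{\rho, \id-\rho\}$ is optimal and the bound \eqref{eqn:optimalJNrisk} is tight.

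Finally, the sample-complexity form \eqref{eq:riskBound} is a direct algebraic inversion: from $\widehat{\mathcal{R}}_* \geq \tfrac12\sqrt{1-(\epsilon/2)^{2/R}}$ I obtain $4\widehat{\mathcal{R}}_*^2 \geq 1 - (\epsilon/2)^{2/R}$ and take logarithms, tracking that both $\ln(\epsilon/2)$ and $\ln(1-4\widehat{\mathcal{R}}_*^2)$ are negative (which flips the inequality and produces the absolute values), while the approximation follows from $|\ln(1-x)| \approx x$ at $x = 4\widehat{\mathcal{R}}_*^2 \ll 1$. I expect the only genuine obstacle to be the temptation to evaluate $V$ in closed form over arbitrary mixed, high-dimensional density matrices; this is cleanly circumvented by the two-sided strategy above -- a universal lower bound from one explicit feasible point, plus exact achievability for $\{\rho,\id-\rho\}$ -- so the full optimization $V$ never needs to be solved exactly. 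Were one to insist on the exact value of $V$, the natural tool would be the triangle inequality for the Bures angle $\arccos\sqrt{F(\cdot,\cdot)}$, which bounds the individual fidelities $\tr(\rho\chi_i)$ in terms of $F(\chi_1,\chi_2)$ and reduces the problem to the same trigonometric maximization.
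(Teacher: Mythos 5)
Your proof is correct, and while it follows the paper's overall skeleton---the scheme-independent reduction $\widehat{\mathcal{R}}_* \geq V$ via $F \leq F_C$, evaluation of $V$, achievability for $\{\rho, \id - \rho\}$, and algebraic inversion---both technical pillars are executed by a genuinely different route. For the lower bound, the paper computes $V$ exactly: it gets the upper bound $\tr(\rho \chi_1) - \tr(\rho \chi_2) \leq \frac{1}{2}\norm{\chi_1 - \chi_2}_1 \leq \sqrt{1 - F(\chi_1, \chi_2)}$ from the Fuchs--van de Graaf inequality, and the matching feasible point from the commuting mixed states $\chi^*_{1,2} = \frac{1 \pm \sqrt{1-\gamma}}{2}\rho + \frac{1 \mp \sqrt{1-\gamma}}{2}\frac{\id - \rho}{d-1}$, diagonal in a basis containing $\rho$. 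You instead exhibit an oppositely tilted pure-state pair in a two-dimensional subspace, and you correctly observe that only the feasible-point direction $V \geq \frac{1}{2}\sqrt{1-\gamma}$ is needed for \eqref{eqn:optimalJNrisk}, so Fuchs--van de Graaf can be skipped entirely. For achievability, the paper invokes Corollary \ref{corr:minimax_method_optimal_risk}, i.e.\ the $\omega_1 = 1$, $\omega_2 = 0$ specialization of the two-outcome-POVM Theorem \ref{thm:minimax_sample_complexity_2outcomePOVM}, whose proof runs through a Lagrangian/KKT analysis; your direct collapse of \eqref{eqn:JNriskprop3.1} to the scalars $f_i = \tr(\rho \chi_i)$---legitimate because for this POVM both the objective and the Bhattacharyya constraint depend on $\chi_i$ only through $f_i$, and every $f_i \in [0,1]$ is attainable---followed by the substitution $f_i = \cos^2\theta_i$ is considerably more elementary. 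The trade-off: your argument is shorter and self-contained, whereas the paper's general two-outcome theorem is reusable machinery (it also yields the stabilizer and randomized-Pauli sample complexities of Corollary \ref{corr:minimax_sample_complexity_stabilizer} and Theorem \ref{thm:minimax_method_pauli_scheme_sample_complexity}) and delivers the exact value of the risk for a whole family of measurements, not just $\{\rho, \id - \rho\}$. Two small points to tighten: avoid using $\gamma$ both for $(\epsilon/2)^{2/R}$ and as an angle in the achievability step, and record the one-line check that the optimizing angles ($\alpha + \beta = \pi/2$, $\beta - \alpha = \arccos\sqrt{\gamma}$) are jointly feasible in $[0, \pi/2]^2$, so the claimed maximum of the product of sines is actually attained.
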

\begin{proof}
    See Appendix~\ref{proof:minimax_method_best_sample_complexity}.
\end{proof}
Note that sample complexity refers to the smallest number of measurements outcomes required for achieving the desired risk for the chosen confidence level.

Observe that the lower bound on the risk in Eq.~\eqref{eqn:optimalJNrisk} is independent of the system dimension, the target state and the true state. It only depends on the confidence level and total number of repetitions.
Note that it is possible to obtain dimension-independent sample complexity because we assume that the target state $\rho$ is pure.
Without this assumption, the sample complexity for fidelity estimation is expected to scale with the dimension (see, for example, Ref.~\cite{buadescu2019quantum, haah2016sample}).
That said, the assumption that the target state is pure is well-motivated from a practical standpoint as we usually seek to prepare pure states in experiments (e.g., Bell states or GHZ states).
Indeed, this is a common assumption in many fidelity estimation studies~\cite{Flammia2011, DaSilva2011, Huang2020}.
Since the true state is allowed to be mixed, our method is applicable in practical scenarios.
Assuming that the target state is pure, one can, in principle, achieve the optimal sample complexity by measuring in a basis defined by the target state $\rho$.
We acknowledge that, in practice, it is often impractical to implement the POVM $\{\rho, \id - \rho\}$ (or an orthonormal basis containing $\rho$) that achieves this sample complexity.
However, next we show that we can achieve something very close to this optimal sample complexity for stabilizer states using a practical measurement scheme.

\subsection{Stabilizer states\label{secn:minimax_method_stabilizer_states}}
Stabilizer states are the cornerstone of numerous applications, ranging from measurement-based quantum computing~\cite{BriegelMBQC2009} to quantum error correction~\cite{Gottesman1997}. An $n$-qubit stabilizer state $\rho$ is the unique $+1$ eigenstate of exactly $n$ Pauli operators that generate a stabilizer group of size $d = 2^n$. The state $\rho$ can thus be written as \cite{klieschcharacterization}
\begin{equation}
    \rho = \frac{1}{d} \left(\id + \sum_{S \in \mathcal{S}_{n} \setminus \{\id\}} S\right)
\end{equation}
where $\mathcal{S}_n$ is the stabilizer subgroup corresponding to the stabilizer state $\rho$.

To estimate the fidelity with a target stabilizer state efficiently, we implement the minimax optimal strategy \cite{pallister2018optimal, klieschcharacterization} while restricting to Pauli measurements. 
The measurement strategy is simple: uniformly sample an element from the stabilizer subgroup (excluding the identity), and record the eigenvalue of the outcome (whether $+1$ or $-1$) of the stabilizer measurement. This strategy is very similar to DFE, with the exception that we exclude the identity. This measurement scheme can be implemented by an effective POVM with elements $\{\Theta,\Delta_\Theta\}$ given by
\begin{align}
    \Theta &= \rho + \frac{d/2 - 1}{d - 1} \Delta_\rho \nonumber \\
    \Delta_\Theta &= \frac{d/2}{d - 1} \Delta_\rho , \label{eqn:stabilizer_minimaxoptimal_POVM}
\end{align}
where $\Delta_\rho = \id - \rho$. See the proof of Theorem \ref{thm:minimax_method_pauli_scheme_sample_complexity} for details on how this is obtained (also see Ref.~\cite{pallister2018optimal, klieschcharacterization}). We can see that the effective POVM is a combination of the target state $\rho$ and $\Delta_\rho$ (which is orthogonal to $\rho$). This is similar to measuring in the basis defined by the target state, so from our results in section~\ref{secn:minimax_method_optimal_risk}, we can expect the risk to be small. Indeed, we find that for a sufficient number of repetitions of this measurement (independent of the dimension or the stabilizer state), the risk of the estimator is at most four times the optimal risk described in Eq.~\eqref{eqn:optimalJNrisk}. We summarize this result below.
\begin{proposition}
    \label{prop:minimax_method_stabilizer_sample_complexity}
    Let $\rho$ be an $n$-qubit stabilizer state. Suppose that we uniformly sample $R$ elements from the stabilizer group (with replacement) and measure them. Then, for any risk $\jnrisk \in (0, 0.5)$ and confidence level ${1-\failure} \in (0.75, 1)$,
    \begin{align}
        R &\geq 2 \frac{\ln\left(2/\failure\right)}{\left|\ln\left(1 - \left(\frac{d}{d - 1}\right)^2 \jnrisk^2\right)\right|} \label{eqn:sample_complexity_stabilizers} \\
          &\approx 2 \left(\frac{d - 1}{d}\right)^2 \frac{\ln(2/\failure)}{\jnrisk^2}
          \text{ when } \jnrisk^2\ll 1
    \end{align}
    suffices to build an estimator using the minimax method that achieves this risk. Here, $d = 2^n$ is the dimension of the system.
\end{proposition}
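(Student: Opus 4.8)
The plan is to reduce the proposition to the same one-parameter computation that underlies Theorem~\ref{thm:minimax_method_best_sample_complexity}, with the effective two-outcome POVM $\{\Theta, \Delta_\Theta\}$ of Eq.~\eqref{eqn:stabilizer_minimaxoptimal_POVM} replacing $\{\rho, \id - \rho\}$. First I would argue that recording only the $\pm 1$ eigenvalue of a uniformly sampled stabilizer $S \in \mathcal{S}_n \setminus \{\id\}$ is, shot by shot, a measurement of a single effective POVM: averaging the eigenprojectors $\tfrac12(\id \pm S)$ over $S \neq \id$ and using $\rho = \tfrac1d(\id + \sum_{S \neq \id} S)$ collapses the randomized scheme to $\{\Theta, \Delta_\Theta\}$. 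Hence in Eq.~\eqref{eqn:JNriskprop3.1} we have a single setting ($L = 1$, $R_1 = R$), and the risk is
\[
    \widehat{\mathcal{R}}_* = \max_{\chi_1, \chi_2 \in \mathcal{X}} \tfrac12\left(\tr(\rho\chi_1) - \tr(\rho\chi_2)\right) \ \text{ s.t. } \ F_C(\chi_1, \chi_2, \{\Theta, \Delta_\Theta\}) \geq (\epsilon/2)^{2/R} .
\]

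The key simplification I would exploit next is that, since $\Theta$ and $\Delta_\Theta$ are affine combinations of $\rho$ and $\id - \rho$, the outcome probabilities depend on each $\chi_i$ only through the scalar $a_i := \tr(\rho\chi_i) \in [0,1]$. Writing $\beta = \tfrac{d}{2(d-1)}$, one finds $\tr(\Theta\chi_i) = 1 - \beta(1 - a_i)$ and $\tr(\Delta_\Theta\chi_i) = \beta(1 - a_i)$. Because every value $a \in [0,1]$ is attained by some density matrix (a mixture of $\rho$ with a state orthogonal to it), the matrix optimization collapses to a two-variable problem over $a_1, a_2 \in [0,1]$, with $F_C$ the classical fidelity of the two resulting Bernoulli distributions.

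To solve this I would use the Bhattacharyya-angle substitution, writing each distribution as $(\cos^2\theta_i, \sin^2\theta_i)$ with $\theta_i \in [0, \pi/2]$, so that $\sqrt{F_C} = \cos(\theta_1 - \theta_2)$ and the constraint becomes $|\theta_1 - \theta_2| \leq \arccos\big((\epsilon/2)^{1/R}\big) =: \Delta$. Since $a_i = 1 - \sin^2\theta_i/\beta$, the objective is $\tfrac{1}{2\beta}(\sin^2\theta_2 - \sin^2\theta_1) = \tfrac{1}{2\beta}\sin(\theta_1 + \theta_2)\sin(\theta_2 - \theta_1)$, which is maximized by saturating the constraint ($\theta_2 - \theta_1 = \Delta$) and balancing the sum ($\theta_1 + \theta_2 = \pi/2$). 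This gives the closed form
\[
    \widehat{\mathcal{R}}_* = \frac{\sin\Delta}{2\beta} = \frac{d-1}{d}\sqrt{1 - (\epsilon/2)^{2/R}} .
\]
Inverting this relation at equality yields Eq.~\eqref{eqn:sample_complexity_stabilizers}, and because the closed form is strictly decreasing in $R$, every larger $R$ achieves a risk at most $\widehat{\mathcal{R}}_*$, which is exactly the ``suffices'' claim.

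I expect the main obstacle to be the feasibility of the angular optimum rather than the optimization itself. The balancing choice forces $\theta_2 = \pi/4 + \Delta/2$, hence $a_2 = 1 - \sin^2\theta_2/\beta$, which lies in $[0,1]$ only when $\sqrt{1 - (\epsilon/2)^{2/R}} \leq 1/(d-1)$, i.e.\ when $\widehat{\mathcal{R}}_* \leq 1/d$; this is precisely the ``sufficient number of repetitions'' regime, and it is automatic in the stated $\widehat{\mathcal{R}}_*^2 \ll 1$ limit. Outside it the maximizer sits on the boundary $a_2 = 0$ and would require a separate, more delicate analysis. The other point needing care is making the reduction to the effective POVM rigorous within the Juditsky--Nemirovski framework---confirming that recording eigenvalues of randomly sampled stabilizers is genuinely one POVM with $R$ i.i.d.\ shots, as established in the proof of Theorem~\ref{thm:minimax_method_pauli_scheme_sample_complexity}---whereas verifying that $\{\Theta, \Delta_\Theta\}$ is a valid POVM and checking the optimality (KKT) conditions at the interior maximizer are routine.
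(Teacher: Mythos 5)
Your proposal is correct in substance and arrives at exactly the closed form the paper uses, but the core computation follows a genuinely different route. The paper proves Proposition~\ref{prop:minimax_method_stabilizer_sample_complexity} by specializing ($\delta = d$) Corollary~\ref{corr:minimax_sample_complexity_stabilizer}, which itself instantiates Theorem~\ref{thm:minimax_sample_complexity_2outcomePOVM}: there, after the same reduction to two scalars $\alpha_1,\alpha_2 = \tr(\rho\chi_i)$, the authors form the Lagrangian, invoke the KKT conditions to show the Hellinger-affinity constraint is active when $R > R_o$, and then reduce to a one-dimensional optimization over an allowed set $\mathcal{A}_a$ via algebraic substitutions, obtaining the exact risk in all regimes (including the boundary cases $b_- < 1$). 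You replace all of this with the Bhattacharyya-angle parametrization: writing the two Bernoulli distributions as $(\cos^2\theta_i, \sin^2\theta_i)$ turns the constraint $F_C^{R/2} \geq \epsilon/2$ into $|\theta_1 - \theta_2| \leq \Delta$ and the objective into $\tfrac{1}{2\beta}\sin(\theta_1+\theta_2)\sin(\theta_2-\theta_1)$, which is maximized by inspection. This is more elementary and more transparent than the KKT route; what the paper's approach buys in exchange is generality (arbitrary $\omega_1,\omega_2$, hence also Theorem~\ref{thm:minimax_method_pauli_scheme_sample_complexity} and Corollary~\ref{corr:minimax_method_optimal_risk}) and the exact value of the risk even when the balanced optimum is infeasible.

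The one place where your write-up falls short of the stated claim is the regime $\widehat{\mathcal{R}}_* \in (1/d, 0.5)$, which you defer as ``requiring a separate, more delicate analysis.'' The proposition asserts sufficiency for all $\widehat{\mathcal{R}}_* \in (0,0.5)$, so as written your proof covers only part of the claimed range. But no separate analysis is needed: since $\sin(\theta_1+\theta_2) \leq 1$ and $\sin(\theta_2-\theta_1) \leq \sin\Delta$ on the feasible set, your angular objective satisfies
\begin{equation*}
    \frac{1}{2\beta}\sin(\theta_1+\theta_2)\sin(\theta_2-\theta_1) \;\leq\; \frac{1}{2\beta}\sin\Delta \;=\; \frac{d-1}{d}\sqrt{1 - (\epsilon/2)^{2/R}}
\end{equation*}
in \emph{every} regime, whether or not the balanced point $\theta_1+\theta_2 = \pi/2$ lies in the allowed range. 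Hence the closed form is always an upper bound on the risk, and inverting the inequality yields Eq.~\eqref{eqn:sample_complexity_stabilizers} as a sufficient condition throughout $(0, 0.5)$ --- when the balanced point is infeasible the true risk is simply smaller than the bound, which only helps. This is precisely how the paper itself closes the argument (the final sentence of the proof of Theorem~\ref{thm:minimax_sample_complexity_2outcomePOVM} uses $\widehat{\mathcal{R}}_* \leq \sqrt{1-\gamma}/(2(\omega_1-\omega_2))$ for all $R > R_o$), so your proof is complete once this one-line observation, already implicit in your setup, is stated.
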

\begin{proof}
    Take $\xi = d$ in corollary \ref{corr:minimax_sample_complexity_stabilizer} in the appendix.
\end{proof}

Since $(d - 1)/d < 1$, the dimension dependence actually improves the sample complexity, and for large dimensions one needs to measure only a constant number of stabilizers. This sample complexity is of the same order as that obtained using DFE.

Furthermore, following the ideas given in the proof of Theorem \ref{thm:minimax_sample_complexity_2outcomePOVM}, we can simplify the algorithm given in Appendix \ref{app:minimax_numerical} to compute the estimator. In the simplified algorithm, we only need to perform a two-dimensional optimization regardless of the dimension of the system. Consequently, building the estimator is both time and memory efficient.

In the following we present simulated results for $2$, $3$ and $4$ qubit stabilizer states, for a risk of $0.05$ and a confidence level of $95\%$. We implement the measurement strategy discussed above, namely randomly sampling $R$ stabilizers and recording the outcome eigenvalue. The estimated fidelity in each case is summarized in Table~\ref{tab:JNstabilizers}. Even for the $4$-qubit stabilizer state, constructing the estimator and generating measurements to compute the estimate together took only a few seconds on a personal computer.
\begin{table}[!ht]
    \centering
        \begin{tabular}{l c c c c}
        \toprule
        & & \multicolumn{3}{c}{$n$} \\
        \cmidrule(l){3-5}
            &  & 2 & 3 & 4 \\
            \midrule
            True fidelity & $F(\rho, \sigma)$ & 0.925 & 0.912 & 0.906 \\
            Estimate of fidelity & $\widehat{F}_*(\text{outcomes})$ & 0.933 & 0.911 & 0.898 \\
            Number samples & $R$ & 1657 & 2256 & 2591 \\
            \bottomrule
        \end{tabular}
    \caption{True fidelity, the fidelity estimate, and the number of stabilizers sampled corresponding to a risk of $0.05$ and $95\%$ confidence level for $n$-qubit stabilizer states.}
    \label{tab:JNstabilizers}
\end{table}

Now, on the other extreme, we consider the case where an insufficient number of stabilizer measurements are provided. Specifically, we consider a measurement protocol where we measure only $n - 1$ generators of an $n$-qubit stabilizer state. We assume these to be subspace measurements, i.e., the measurements correspond to projecting on the eigenspaces of the generators with eigenvalue $\pm 1$. This is an insufficient measurement protocol because the measurement of $n - 1$ stabilizers cannot uniquely identify the stabilizer state, as there are two orthogonal states that will be consistent with the measurement statistics. As a consequence, any reasonable fidelity estimation protocol should give complete uncertainty for the estimated fidelity. Indeed, the following result shows that the minimax method gives a risk of $0.5$ (implying total uncertainty) for this measurement protocol.
\begin{proposition}
    \label{prop:minimax_method_stabilizer_insufficient_measurements}
    Let $\rho$ be an $n$-qubit stabilizer state generated by $S_1, \dotsc, S_n$. Suppose that the measurement protocol consists of measuring the first $n - 1$ generators $S_1, \dotsc, S_{n - 1}$, such that the measurements correspond to projecting on eigenspaces of $S_i$ with eigenvalue $\pm 1$ for $i = 1, \dotsc, n$. Then, irrespective of the number of shots, the minimax method gives a risk of $0.5$.
\end{proposition}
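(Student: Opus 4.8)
The plan is to exhibit an explicit pair of states $\chi_1, \chi_2$ that attain the maximal value $1/2$ in the risk formula \eqref{eqn:JNriskprop3.1}, thereby forcing $\widehat{\mathcal{R}}_* = 0.5$ (the upper bound of $0.5$ on the risk having already been established from Eq.~\eqref{eqn:JNriskprop3.1}). The entire argument rests on finding a state orthogonal to $\rho$ that is nonetheless statistically indistinguishable from $\rho$ under the given $n-1$ measurements, so that each classical fidelity $F_C$ equals $1$ and the constraint in \eqref{eqn:JNriskprop3.1} holds for every choice of shot counts.

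First I would construct the companion state. The subspace measurement of $S_i$ has POVM elements $P_i^\pm = (\id \pm S_i)/2$, the projectors onto the $\pm 1$ eigenspaces. Let $V$ be the common $+1$ eigenspace of $S_1, \dotsc, S_{n-1}$; since each independent generator halves the dimension, $\dim V = d/2^{n-1} = 2$. The remaining generator $S_n$ commutes with $S_1, \dotsc, S_{n-1}$ and hence preserves $V$, acting on it as a Hermitian involution with the stabilizer state $\rho$ as its $+1$ eigenvector. I would then define $\rho'$ to be the (pure) $-1$ eigenstate of $S_n$ within $V$. Being eigenvectors of $S_n|_V$ for distinct eigenvalues, $\rho$ and $\rho'$ are orthogonal, so $\tr(\rho \rho') = 0$; and $\rho' \in V$ means $\rho'$ is also a $+1$ eigenstate of each of $S_1, \dotsc, S_{n-1}$. (Equivalently, $\rho'$ is the stabilizer state generated by $S_1, \dotsc, S_{n-1}, -S_n$.)

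Next I would set $\chi_1 = \rho$ and $\chi_2 = \rho'$ and check the two ingredients of \eqref{eqn:JNriskprop3.1}. For the objective, $\tr(\rho \chi_1) = \tr(\rho^2) = 1$ and $\tr(\rho \chi_2) = \tr(\rho \rho') = 0$, so the objective equals $\tfrac{1}{2}(1 - 0) = \tfrac{1}{2}$. For the constraint, note that for each $i \le n-1$ both states are $+1$ eigenstates of $S_i$, hence $\tr(P_i^+ \chi_1) = \tr(P_i^+ \chi_2) = 1$ and $\tr(P_i^- \chi_1) = \tr(P_i^- \chi_2) = 0$. Substituting into \eqref{eqn:classicalfidelity} gives $F_C(\chi_1, \chi_2, \{P_i^\pm\}) = (\sqrt{1 \cdot 1} + \sqrt{0 \cdot 0})^2 = 1$ for every measurement setting. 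The product of classical fidelities is therefore $1$ raised to any power, which exceeds $\epsilon/2$ no matter how large the $R_l$ are---this is exactly why the conclusion is independent of the number of shots. Hence the feasible pair $(\rho, \rho')$ achieves the objective value $1/2$, and since $1/2$ is the largest value the risk can take, $\widehat{\mathcal{R}}_* = 0.5$.

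The only genuinely nontrivial step is the existence of the orthogonal-yet-indistinguishable state $\rho'$, i.e.\ the dimension count showing that $n-1$ independent generators leave a two-dimensional common $+1$ eigenspace. Once $\rho'$ is in hand, the verification that the classical fidelities equal $1$ and that the objective equals $1/2$ is immediate. I would make sure to state explicitly that the independence of $S_1, \dotsc, S_{n-1}$ (guaranteed since they form part of a minimal generating set of the stabilizer group) is what forces $\dim V = 2$ rather than something larger; this is the precise sense in which the protocol is \emph{insufficient} to pin down $\rho$.
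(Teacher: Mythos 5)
Your proposal is correct and follows essentially the same route as the paper: both exhibit the companion stabilizer state generated by $S_1, \dotsc, S_{n-1}, -S_n$ (your $\rho'$, the paper's $\tilde{\rho}$), observe that it is orthogonal to $\rho$ yet yields identical outcome statistics for every measured generator so that each classical fidelity equals $1$, and conclude that the feasible pair $(\rho, \rho')$ attains the maximal objective value $1/2$ in Eq.~\eqref{eqn:JNriskprop3.1} regardless of the shot counts. Your eigenspace dimension count is a slightly more explicit justification of the companion state's existence than the paper provides, but it is the same argument.
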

\begin{proof}
    See the end of Appendix~\ref{proof:minimax_method_stabilizer_insufficient_measurements}.
\end{proof}

We remark that if we consider eigenbasis measurements (in contrast to subspace measurements considered above), computing the risk analytically is more complicated. By eigenbasis measurement, we mean that the measurement of the stabilizer $S$ has the POVM $\{E_1, \dotsc, E_d\}$ where $E_i$ is the projection on $i^{\text{th}}$ eigenvector of $S$. Note that the eigenbasis for $S$ is not unique since $S$ has a degenerate spectrum and, therefore, the risk can depend on the choice of eigenbasis used for the measurement. For example, if the target state $\rho$ happens to coincide with one of the eigenvectors used in the measurement, then that measurement has sufficient information to accurately estimate the fidelity. However, when this does not happen, one can expect the risk to be $0.5$; see section \ref{secn:mle_pl} for an example.

\subsection{Randomized Pauli measurement scheme\label{secn:minimax_method_RPM_scheme}}
In the spirit of the randomized measurement strategy for stabilizer states we discussed above, we present a generalized version of such a Pauli measurement scheme for arbitrary (pure) states. The idea comes from the simple observation that any density matrix $\chi$ can be written as
\begin{align}
    \chi &= \frac{\id}{d} + \sum_{i = 1}^{d^2 - 1} \frac{\tr(W_i \chi)}{d} W_i \\
         &= \frac{\id}{d} + \sum_{i = 1}^{d^2 - 1} \frac{|\tr(W_i \chi)|}{d} S_i
\end{align}
where $S_i = \text{sign}(\tr(W_i \chi)) W_i$ are the Pauli operators appended with a sign. This is, in particular, valid for the target state $\rho$. Then, consider the probability distribution
\begin{equation}
    p_i = \frac{|\tr(W_i \rho)|}{\sum_{i = 1}^{d^2 - 1} |\tr(W_i \rho)|}, \quad i = 1, \dotsc, d^2 - 1 \label{eqn:randomized_pauli_measurement_probability}
\end{equation}
where $\rho$ is the target state. The measurement scheme is as follows:
\begin{tcolorbox}[colback=white, title={\hypertarget{box:pauli_measurement_scheme}{Box \ref*{secn:minimax_method}.1}: Pauli Measurement Scheme}]
\begin{enumerate}
    \item Sample a (non-identity) Pauli operator $W_i$ with probability $p_i$ ($i = 1, \dotsc, d^2 - 1$) given in Eq.~\eqref{eqn:randomized_pauli_measurement_probability} and record the outcome ($\pm 1$) of the measurement.
    \item Flip the measurement outcome $\pm 1 \to \mp 1$ if $\tr(\rho W_i) < 0$, else retain the original measurement outcome.
    \item Repeat this procedure $R$ times and feed the outcomes into the estimator given by the minimax method.
\end{enumerate}
\end{tcolorbox}
Because we exclude measurement of identity, $i$ runs from $1$ to $d^2 - 1$. We flip the outcomes because we need to measure $S_i = \text{sign}(\tr(W_i \rho)) W_i$. In Theorem~\ref{thm:minimax_method_pauli_scheme_sample_complexity}, we show how to choose the number of repetitions $R$ so as to obtain a desired value of the risk. Note that the Pauli measurement can either be a projection on the subspace with eigenvalue $\pm 1$ or a projection on the eigenvectors. For the estimator given by the minimax method, the values $+1$, $-1$ are inconsequential; all that matters is how many times $+1$ and $-1$ are observed.

Note that a very similar measurement scheme has been considered for verifying the ground state of a class of Hamiltonians \cite{takeuchi2018verification}. Further, the random sampling scheme described above is very similar to the measurement strategy used in DFE, except that we sample the Pauli operators using a different probability distribution.

For the proposed measurement strategy, the minimax method gives the following sample complexity.
\begin{theorem}
    \label{thm:minimax_method_pauli_scheme_sample_complexity}
    Let $\rho$ be an $n$-qubit pure target state. Suppose that we perform $R$ Pauli measurements as described in Box \hyperlink{box:pauli_measurement_scheme}{\ref*{secn:minimax_method}.1}. Then, for a given risk $\jnrisk \in (0, 0.5)$ and a confidence level $1 - \failure \in (0.75, 1)$,
    \begin{align}
        R &\geq 2\frac{\ln(2/\failure)}{\left|\ln\left(1 - \frac{d^2}{\mathcal{N}^2}\jnrisk^2\right)\right|} \label{eqn:minimax_method_pauli_scheme_sample_complexity} \\
          &\approx 2 \left(\frac{\mathcal{N}}{d}\right)^2 \frac{\ln(2/\failure)}{\jnrisk^2}
          \text{ when } \jnrisk^2\ll 1 \label{eqn:minimax_method_pauli_scheme_sample_complexity_approx}
    \end{align}
    measurements are sufficient to achieve the risk. Here,
    \begin{equation*}
        \mathcal{N} = \sum_{i = 1}^{d^2 - 1} |\tr(W_i \rho)|
    \end{equation*}
    and for any pure target state $\rho$, we have
    \begin{align*}
        \mathcal{N} &\leq (d - 1)\sqrt{d + 1} .
        \intertext{Therefore}
        2 \left(\frac{\mathcal{N}}{d}\right)^2 \frac{\ln(2/\failure)}{\jnrisk^2} &\leq 2 (d + 1) \left(1 - \frac{1}{d}\right)^2 \frac{\ln(2/\failure)}{\jnrisk^2}
    \end{align*}
\end{theorem}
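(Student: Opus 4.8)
The plan is to collapse the randomized scheme of Box~\hyperlink{box:pauli_measurement_scheme}{\ref*{secn:minimax_method}.1} into a single effective two-outcome POVM and then read off the risk from Eq.~\eqref{eqn:JNriskprop3.1}, paralleling the stabilizer case. First I would identify that effective POVM. Each repetition independently samples a Pauli $W_i$ with probability $p_i = |\tr(W_i\rho)|/\mathcal{N}$ and records the sign-corrected eigenvalue of $S_i = \text{sign}(\tr(W_i\rho))\,W_i$, so the ``$+1$'' outcome is described by $\Theta = \sum_{i=1}^{d^2-1} p_i\,\tfrac12(\id + S_i)$. Since the estimator depends only on the $\pm1$ counts and not on which Pauli was drawn, this genuinely is a single two-outcome measurement $\{\Theta,\id-\Theta\}$ repeated $R$ times. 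Using $p_i S_i = \tr(W_i\rho)\,W_i/\mathcal{N}$ together with the Pauli expansion of $\rho$, the sum telescopes to $\sum_i p_i S_i = (d\rho - \id)/\mathcal{N}$, giving
\[
\Theta = \frac{\mathcal{N}+d-1}{2\mathcal{N}}\,\rho + \frac{\mathcal{N}-1}{2\mathcal{N}}\,\Delta_\rho ,
\]
with $\Delta_\rho = \id - \rho$. This has the same form as Eq.~\eqref{eqn:stabilizer_minimaxoptimal_POVM} and reduces to it when $\mathcal{N} = d-1$.

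Next I would evaluate Eq.~\eqref{eqn:JNriskprop3.1} with $L=1$ and $R_1 = R$. Because $\Theta$ is a combination of $\rho$ and $\Delta_\rho$, the outcome probabilities of any $\chi$ depend on $\chi$ only through its fidelity $f = \tr(\rho\chi)$: one finds $\tr(\Theta\chi) = (\mathcal{N}-1+df)/(2\mathcal{N})$ and $\tr((\id-\Theta)\chi) = (\mathcal{N}+1-df)/(2\mathcal{N})$. Writing $p_j = \tr(\Theta\chi_j)$, the objective becomes $\tfrac12(\tr(\rho\chi_1)-\tr(\rho\chi_2)) = \tfrac{\mathcal{N}}{d}(p_1-p_2)$, and the constraint becomes $F_C(p_1,p_2)^{R/2} \geq \epsilon/2$ with $F_C$ the binary classical fidelity of Eq.~\eqref{eqn:classicalfidelity}. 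Every $f\in[0,1]$ is realized by some density matrix, so the risk is the maximum of $\tfrac{\mathcal{N}}{d}(p_1-p_2)$ over the feasible $(p_1,p_2)$.

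To bound this I would relax the feasible set to all $p_1,p_2\in[0,1]$ and substitute $p_j = \cos^2\theta_j$, under which $F_C = \cos^2(\theta_1-\theta_2)$ and $p_1 - p_2 = \sin(\theta_1+\theta_2)\sin(\theta_2-\theta_1)$. Saturating the fidelity constraint fixes $|\theta_1-\theta_2| = \arccos\bigl((\epsilon/2)^{1/R}\bigr)$, and maximizing the remaining factor at $\theta_1+\theta_2 = \pi/2$ gives $p_1 - p_2 \le \sqrt{1-(\epsilon/2)^{2/R}}$, hence
\[
\widehat{\mathcal{R}}_* \le \frac{\mathcal{N}}{d}\sqrt{1-(\epsilon/2)^{2/R}} .
\]
Requiring the right-hand side to be at most the desired risk and solving for $R$ reproduces Eq.~\eqref{eqn:minimax_method_pauli_scheme_sample_complexity}, which proves sufficiency. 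I expect the main subtlety to be exactly this feasibility point: when $\mathcal{N} > d-1$ the symmetric maximizer $p_1 + p_2 = 1$ can leave the physically attainable range $[\,(\mathcal{N}-1)/2\mathcal{N},\,(\mathcal{N}-1+d)/2\mathcal{N}\,]$ of $p$, so one must argue that the relaxation, while possibly loose, still furnishes a valid \emph{upper} bound on the risk --- which is all that a sufficient-sample claim needs.

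Finally, for the bound on $\mathcal{N}$ I would invoke purity. Expanding $\rho$ in the Pauli basis and using $\tr(W_iW_j) = d\,\delta_{ij}$ yields $\tr(\rho^2) = \tfrac1d\bigl(1 + \sum_{i=1}^{d^2-1}\tr(W_i\rho)^2\bigr)$, so a pure state satisfies $\sum_{i=1}^{d^2-1}\tr(W_i\rho)^2 = d-1$. Cauchy--Schwarz then gives
\[
\mathcal{N} = \sum_{i=1}^{d^2-1}|\tr(W_i\rho)| \le \sqrt{d^2-1}\,\sqrt{d-1} = (d-1)\sqrt{d+1},
\]
and inserting this into the approximate form~\eqref{eqn:minimax_method_pauli_scheme_sample_complexity_approx} produces the stated final inequality via $(\mathcal{N}/d)^2 \le (d+1)(1-1/d)^2$. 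This last part is routine; the conceptual content lies entirely in the POVM reduction and the relaxed optimization above.
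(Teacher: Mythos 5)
Your proposal is correct, and its skeleton matches the paper's: both derive the same effective two-outcome POVM $\{\Theta,\id-\Theta\}$ with $\Theta = \tfrac{\mathcal{N}+d-1}{2\mathcal{N}}\rho + \tfrac{\mathcal{N}-1}{2\mathcal{N}}\Delta_\rho$, and both then extract the sample complexity from the risk characterization in Eq.~\eqref{eqn:JNriskprop3.1}. Where you genuinely diverge is in the middle step. The paper does not analyze this POVM directly; it substitutes $\omega_1 = \tfrac{\mathcal{N}+d-1}{2\mathcal{N}}$, $\omega_2 = \tfrac{\mathcal{N}-1}{2\mathcal{N}}$ into the general Theorem~\ref{thm:minimax_sample_complexity_2outcomePOVM}, whose proof is a Lagrangian/KKT analysis showing the Hellinger-affinity constraint is active when $R > R_o$, followed by an explicit parametrization that yields the \emph{exact} risk as a one-dimensional optimization over the set $\mathcal{A}_a$. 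You instead exploit the fact that both the objective and the constraint depend on $\chi_j$ only through $f_j = \tr(\rho\chi_j)$ (an exact reduction, since every $f\in[0,1]$ is attainable), relax the induced box constraint $p_j \in [\omega_2,\omega_1]$ to $p_j\in[0,1]$, and use the substitution $p_j = \cos^2\theta_j$ to get the clean bound $\widehat{\mathcal{R}}_* \leq \tfrac{\mathcal{N}}{d}\sqrt{1-(\epsilon/2)^{2/R}}$, which coincides with the bound the paper extracts from its exact formula, so the resulting Eq.~\eqref{eqn:minimax_method_pauli_scheme_sample_complexity} is identical. Your route is shorter and avoids the KKT case analysis and the $R > R_o$ case distinction entirely (the relaxation bound holds for all $R$ and is simply vacuous when it exceeds $0.5$); you correctly observe that a relaxation-induced upper bound is all a sufficiency claim needs. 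What you give up is the exact risk formula, which the paper reuses elsewhere (e.g., Corollaries~\ref{corr:minimax_method_optimal_risk} and \ref{corr:minimax_sample_complexity_stabilizer}, and the claimed reduction of the estimator computation to a low-dimensional optimization). Finally, your Cauchy--Schwarz derivation of $\mathcal{N} \leq (d-1)\sqrt{d+1}$ from the purity identity $\sum_{i=1}^{d^2-1}\tr(W_i\rho)^2 = d-1$ is more elementary than, and equivalent to, the paper's KKT treatment of the constrained maximization of $\sum_i x_i$.
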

\begin{proof}
    See the end of Appendix \ref{proof:minimax_method_pauli_scheme_sample_complexity}.
\end{proof}

Notably, the sample complexity is upper bounded by $O(d) \ln(2/\failure) / \jnrisk^2$, which is comparable to the upper bound on (the expected value of) the sample complexity in DFE (see Eq.~10 in Ref.~\cite{Flammia2011}). We can show that the sample complexity indeed improves when considering a subset of well-conditioned states defined by Flammia \& Liu \cite{Flammia2011}. To that end, suppose that for any given $i$, the state $\rho$ satisfies either $|\tr(\rho W_i)| = \alpha$ or $\tr(\rho W_i) = 0$. Then, using Eq.~\eqref{eqn:minimax_method_pauli_scheme_sample_complexity_approx} and Eq.~\eqref{eqn:pure_state_pauli_weights_constraint}, we obtain the bound of
\begin{equation}
    \frac{2}{\alpha^2} \left(\frac{d - 1}{d}\right)^2 \frac{\ln(2/\failure)}{\jnrisk^2}
\end{equation}
on the sample complexity. As before, this is comparable to DFE. In addition to giving a good sample complexity, we can obtain the estimator efficiently by reducing the optimization to a two-dimensional problem, following the ideas in Theorem \ref{thm:minimax_sample_complexity_2outcomePOVM}. Thus, if $\mathcal{N}$ can be computed efficiently, we can efficiently construct the estimator even for large dimensions.

\subsection{Robustness against errors\label{secn:minimax_method_robustness}}
For any fidelity estimator to be useful in practice, it is crucial for it to be robust to common types of experimental errors. Systematic errors in the measurements will ultimately perturb the observed outcome frequencies. Since the estimator given by the minimax method is affine, we can ensure that small changes in the observed frequencies only lead to small changes in the fidelity estimate. These changes can be quantified as follows.

Let $\bm{f}^{(1)}, \dotsc, \bm{f}^{(L)}$ denote the ``ideal" observed frequencies, i.e., when there are no measurement errors in the measurement settings $l = 1, \dotsc, L$. Let $\widetilde{\bm{f}}^{(1)}, \dotsc, \widetilde{\bm{f}}^{(L)}$ denote the actual observed frequencies due to the presence of measurement errors. From Eq.~\eqref{eqn:JN_estimator_affine}, we know that the minimax estimator can be written as
\begin{equation*}
    \widehat{F}_*(\bm{f}^{(1)}, \dotsc, \bm{f}^{(L)}) = \sum_{l = 1}^L R_l \ip{\bm{a}^{(l)}, \bm{f}^{(l)}} + c.
\end{equation*}
For simplicity, denote $\bm{f} = (\bm{f}^{(1)}, \dotsc, \bm{f}^{(l)})$ and $\bm{a} = (R_1 \bm{a}^{(1)}, \dotsc, R_L \bm{a}^{(L)})$ a larger vector obtained by concatenating all vectors from $l = 1, \dotsc, L$.
Denote $\Delta \bm{f} = \widetilde{\bm{f}} - \bm{f}$ the deviation of the observed frequencies in the erroneous measurement case from the observed frequencies in the ideal measurement case.
Using H\"older's inequality, we can infer that
\begin{align}
    |\widehat{F}_*(\bm{f}) - \widehat{F}_*(\widetilde{\bm{f}})| \leq \norm{\bm{a}}_1 \norm{\Delta \bm{f}}_\infty, \label{eqn:observed_frequency_fidelity_estimate_deviation}
\end{align}
where for any vector $\bm{x}$, we have $\norm{\bm{x}}_1 = \sum_i |x_i|$ and $\norm{\bm{x}}_\infty = \max_i |x_i|$.
We obtain
\begin{equation*}
    \norm{\bm{a}}_1 = \sum_{l = 1}^L R_l \sum_{k = 1}^{N_l} |a^{(l)}_k|
\end{equation*}
and
\begin{equation*}
    \norm{\Delta \bm{f}}_\infty = \max_{l = 1, \dotsc, L} \max_{k = 1, \dotsc, N_l} |\widetilde{f}^{(l)}_k - f^{(l)}_k|.
\end{equation*}
The quantity $\norm{\bm{a}}_1$ depends only on the minimax estimator and is finite.
Thus, for small enough perturbations, we can ensure that the fidelity estimates in the erroneous measurement case are close to the fidelity estimates in the error-free case.
This argument qualitatively shows that the estimator given by the minimax method is robust against perturbations.

We illustrate this property by considering a simple example of an erroneous measurement setup that one could encounter in practice.
Suppose we wish to perform Pauli measurements, but due to some issue with the experimental setup,
each qubit undergoes a systematic rotation at the beginning of the measurement process, changing the POVM.
The ideal measurements correspond to measuring Pauli operators and recording the observed eigenvalues.
This procedure amounts to choosing POVMs that project onto the $\pm 1$ eigenspace of the respective Pauli operator (we call this subspace projection).
To simulate erroneous measurements, we suppose that each qubit undergoes an erroneous rotation along a given axis by a fixed angle.
For the purposes of our example, we generate a randomly chosen axis (unit vector) $\bm{n}_j$ for the $j$th qubit.
The rotation angle $\theta_j$ for the $j$th qubit is chosen from a normal distribution with zero mean and standard deviation $\sigma_\theta$ that controls the amount of noise.
Thus, the unitary describing this erroneous rotation is given by $U_{\text{error}} = \otimes_{j = 1}^n \exp(i \theta_j \bm{n}_j \cdot \bm{\sigma})$, where $\bm{\sigma} = (X, Y, Z)$ denotes the $1$-qubit Pauli operators.
We assume that this error is systematic, i.e., the same error acts on all the POVM elements.
In other words, if the ideal POVMs are $E^{(l)}_k$, the actual POVMs being implemented are $U_{\text{error}} E^{(l)}_k U_{\text{error}}^\dagger$.
Since we choose $E^{(l)}_{\pm} = (\id \pm P_l) / 2$ as projections onto eigenspaces of the Pauli operator $P_l$, the erroneous POVM corresponds to $(\id + U_{\text{error}} P_l U_{\text{error}}^\dagger) / 2$.
This is just a slight rotation of the Pauli observable to be measured,  due to calibration issues for example.

To study the performance of our fidelity estimation method under such a systematic error, we consider a $3$-qubit GHZ state as the target state, and the actual state is obtained by applying $10 \%$ depolarizing noise to the target state. We measure $L = 7$ (non-identity) Pauli operators that have non-zero expectation with the target state. Each of these operators is measured $R_l = 300$ times. We choose a confidence level of $95 \%$. In Table~\ref{tab:JN_robustness}, we list the fidelity estimate given by our method using ideal Pauli measurements, the fidelity estimate due to erroneous Pauli measurements, as well as the bound on the deviation of the estimates given by Eq.~\eqref{eqn:observed_frequency_fidelity_estimate_deviation}. We find that the deviations in the fidelity estimate always lie within this bound. In particular, for small rotation angles, the fidelity estimates with and without error are nearly the same.
This substantiates our claim that our method is robust to small amounts of measurement errors without any modifications to the estimation procedure.

\begin{table}[!ht]
    \begin{center}
        \begin{tabular}{c >{\centering\arraybackslash}p{2.3cm} c c c}
            \toprule
            $\sigma_\theta$ & estimates from erroneous measurements & deviation & bound \\[2pt]
            \midrule
            0.05 rad (\hphantom{0}$2.86^\circ$) & 0.911 & 0.004 & 0.004 \\
            0.1\hphantom{0} rad (\hphantom{0}$5.73^\circ$) & 0.901 & 0.014 & 0.058 \\
            0.15 rad (\hphantom{0}$8.59^\circ$) & 0.878 & 0.037 & 0.069 \\
            0.2\hphantom{0} rad ($11.45^\circ$) & 0.849 & 0.066 & 0.111 \\
            \bottomrule
        \end{tabular}
    \end{center}
    \caption{Analysis of fidelity estimates given by minimax method under systematic measurement error. The true fidelity with a $3$-qubit GHZ target state is $0.912$. Ideal Pauli measurements give a fidelity estimate of $0.915$. A systematic measurement error is present due to erroneous rotation of Pauli operators, with the size of the rotation controlled by $\sigma_\theta$. The calculations in the table correspond to one randomly drawn instance of these erroneous rotations. Fidelity estimates using erroneous measurements, the absolute value of deviation of estimates from erroneous and ideal measurements, and the bound given by Eq.~\eqref{eqn:observed_frequency_fidelity_estimate_deviation} are listed as a function of $\sigma_\theta$.}
    \label{tab:JN_robustness}
\end{table}

Depending on the experimental setup, the measurement noise encountered in the system can be different from the one we have considered above. However, even for other forms of noise, our method is guaranteed to be robust, although the exact dependence on the noise may vary. If one is interested in precise quantitative bounds for a specific error model, these can be computationally estimated following the procedure outlined above.

\section{Comparison with other methods\label{secn:other_methods_comparison}}
We now compare the minimax method with two commonly used techniques to estimate the fidelity: direct fidelity estimation (DFE) and maximum likelihood estimation (MLE) (and a related approach called Profile Likelihood (PL)). We further compare it with a simple semidefinite programming (SDP) based approach.

\subsection{Direct fidelity estimation}
Flammia \& Liu \cite{Flammia2011} and Silva \textit{et al.}\ \cite{DaSilva2011} construct a fidelity estimator by judiciously sampling Pauli measurements. If the target state $\rho$ is well-conditioned, i.e., $|\tr(\rho W)| \geq \alpha$ or $\tr(\rho W) = 0$ for each Pauli operator $W$ and some fixed $\alpha > 0$, then DFE gives a good sample complexity. Specifically, their method uses $O(\log(1/\failure)/\alpha^2 \adderr^2)$ measurement outcomes to obtain an estimate for fidelity within an additive error of $\adderr$ and a confidence level of $1 - \failure$ \cite{Flammia2011}. Their rigid measurement scheme, however, can have disadvantages in practice, compared to a more flexible approach that works for arbitrary POVMs.

Consider a random $4$-qubit state as the target state, with the actual state obtained by applying $10\%$ depolarizing noise to the target state. We choose a risk of $0.05$ in DFE and a confidence level of $95\%$, and obtain the Pauli measurements that need to be performed as prescribed by DFE. However, to perform these Pauli measurements, we choose two different types of POVM: $(1)$ projection on $+1$ and $-1$ eigenvalue subspaces of the Pauli operator, and $(2)$ projection on each eigenstate of the Pauli operator, which is common in experiments. For DFE, it doesn't matter much which of the two POVMs is implemented because the estimator there only uses the expectation value, while the minimax method makes a distinction between these POVMs. From Table~\ref{tab:DFE_JN_comparison}, we can see that the risk ($\approx 0.023$) for the minimax method is less than half the DFE risk ($0.05$) when using projection on subspaces $(1)$. When projecting on each eigenbasis element $(2)$, we obtain a risk ($\approx 0.012$), which is even lower than before, and clearly much better than the DFE risk. That is, we are able to use the larger expressive power of POVM $(2)$ compared to POVM $(1)$ to lower the risk. Since the minimax method has already computed an estimator with a low risk (as per the prescription of the DFE method), all subsequent experiments can use the \textit{same} measurement settings to estimate the fidelity. Random sampling of Pauli measurements is not necessary.

\begin{table}[!ht]
    \begin{center}
        \begin{tabular}{l c c c c}
            \toprule
            \multicolumn{5}{c}{Random state} \\[2pt]
            \cmidrule{2-5}
             & \multicolumn{2}{c}{Subspace projection} & \multicolumn{2}{c}{Eigenbasis projection} \\[2pt]
             \cmidrule(r){2-3}
             \cmidrule(l){4-5}
             & DFE & Minimax & DFE & Minimax \\[2pt]
            \midrule
            True fidelity & \multicolumn{2}{c}{0.906} & \multicolumn{2}{c}{0.906} \\
            Estimate & 0.895 & 0.895 & 0.900 & 0.902 \\
            Risk & 0.050 & 0.023 & 0.050 & 0.012 \\[2pt]
            \cmidrule[\heavyrulewidth]{1-5}
            \multicolumn{5}{c}{GHZ state} \\[2pt]
            \cmidrule{2-5}
             & \multicolumn{2}{c}{Subspace projection} & \multicolumn{2}{c}{Eigenbasis projection} \\[2pt]
             \cmidrule(r){2-3}
             \cmidrule(l){4-5}
             & DFE & Minimax & DFE & Minimax \\[2pt]
            \midrule
            True fidelity & \multicolumn{2}{c}{0.906} & \multicolumn{2}{c}{0.906} \\
            Estimate & 0.907 & 0.907 & 0.904 & 0.906 \\
            Risk & 0.050 & 0.022 & 0.050 & 0.018 \\[2pt]
            \bottomrule
        \end{tabular}
    \end{center}
    \caption{Comparison of DFE method with the minimax method for a $4$-qubit random state and $4$-qubit GHZ state as target states. The Pauli measurements are performed as per the prescription of DFE, but two different types of POVMs are used: projection on subspace with $+1$ and $-1$ eigenvalue, and projection on each eigenstate of the Pauli operator. We find that the minimax method has lower risk in most cases, while the estimates are comparable to those of DFE method.}
    \label{tab:DFE_JN_comparison}
\end{table}

Next, we consider a $4$-qubit GHZ state as the target state, which is a well-conditioned state as per DFE. As before, we perform measurements as prescribed by DFE by choosing a risk of $0.05$ for DFE method and a confidence level of $95\%$. We summarize the estimated fidelity and risks in the second half of Table~\ref{tab:DFE_JN_comparison}. Similarly to the case of the random state, we observe that the risk given by the minimax method for the GHZ state is always lower than the DFE risk. Changing from POVM $(1)$ to POVM $(2)$ again leads to an improvement in the risk.

Finally, we argue that the Pauli measurement scheme prescribed by DFE can be far from optimal for certain target states. To demonstrate this, we choose a rather extreme example of a random $4$-qubit target state with the following measurement scheme: perform as many measurements and repetitions as prescribed by DFE for achieving a risk of $0.05$ at a confidence level of $95\%$, but instead of performing Pauli measurements, we choose random POVMs, generated as per Ref.~\cite{heinosaari2020random}. In this case, we obtain a risk of $\approx 0.023$ from the minimax method, which is larger than the risk for Pauli measurements using the minimax method (see Tab.~\ref{tab:DFE_JN_comparison}), but still a factor of 2 lower than the DFE risk. This suggests that when the target state is random, performing random measurements is almost as good as performing Pauli measurements. Hence there is no real advantage in using DFE in terms of sample complexity. If, instead, we used the optimal (though impractical) measurement scheme described in section \ref{secn:minimax_method_optimal_risk}, we can obtain a risk of $0.05$ using just $\approx 700$ total measurements (independent of the target state and dimension) as opposed to $\approx 248000$ total measurements required by DFE for the random target state. While this random state example is rather atypical, it stands to demonstrate that there are cases where the DFE measurement scheme is outperformed by a more tailored scheme, which the minimax method can benefit from.

A potential disadvantage for the minimax method is that once the dimension of the system becomes very large, depending on the measurement scheme, constructing the estimator can become inefficient. One reason for this inefficiency is that the algorithm we use for optimization requires projection on the set of density matrices. Because we perform this projection through diagonalization, the projection can cost up to $O(d^3)$ iterations, where $d$ is the dimension of the system. In contrast, DFE can handle large dimensions well because the classical computations involved are simple and most of the complexity is absorbed into performing the experiments. This drawback of the minimax method is alleviated if we use the Pauli measurement scheme described in Box \hyperlink{box:pauli_measurement_scheme}{\ref*{secn:minimax_method}.1}, which is similar to DFE. We have written an efficient algorithm to construct the estimator given by the minimax method for this measurement scheme, as noted in section~\ref{secn:minimax_method_RPM_scheme}. Moreover, the estimator can be pre-computed once the measurement scheme has been defined.

\subsection{MLE and Profile Likelihood\label{secn:mle_pl}}
Maximum Likelihood Estimation (MLE) is a popular approach used for quantum tomography \cite{Hradil1997}. One can think of MLE as minimizing the Kullback-Liebler divergence between the observed frequencies and the Born probabilities \cite{vrehavcek2001iterative}. This naturally leads to the (negative) log-likelihood function
\begin{equation}
    \ell\left(\{\bm{f}^{(l)}\}_{l = 1}^L \big| \chi\right) = - \sum_{l = 1}^L \frac{R_l}{R} \sum_{k = 1}^{N_l} f^{(l)}_k \ln(\tr(E^{(l)}_k \chi)), \label{eqn:MLE_negative_log_likelihood}
\end{equation}
where $R = \sum_{l = 1}^L R_l$.
One then minimizes the function $\ell$ (or equivalently, maximizes the likelihood function) to obtain the MLE estimate $\widehat{\sigma}$ for the experimental quantum state. We calculate $\tr(\rho \widehat{\sigma})$ to estimate the fidelity with a target state $\rho$. While the estimates are usually good enough when a sufficient number of measurements is used for the reconstruction, a major disadvantage is that the method provides no confidence intervals for the estimated fidelity.

A common approach to compute an uncertainty of the MLE estimate is Monte-Carlo (MC) re-sampling. Herein, one numerically generates (artificial) outcomes based on the observed frequencies and the expected statistical noise distribution, referred to as re-sampling. For each set of outcomes one reconstructs the state and estimates the fidelity as before. This process is repeated many times to obtain a large number of MLE fidelity estimates which can be used to calculate an (asymmetric) interval around the median that corresponds to the chosen confidence level. If necessary, hedging \cite{Blume-Kohout2010HedgedMLE} can be implemented to deal with zero probabilities. Note that the MC approach is similar to the nonparametric bootstrap method used for finding confidence intervals \cite{diciccio1996bootstrap}.

A disadvantage of such an MLE based approach is that we need to reconstruct the state, which can be costly. Moreover, there are certain freedoms in the re-sampling and hedging definitions, which can affect the results. We argue that the MC approach can give overconfident uncertainty bounds. This issue can be seen in the following example: take a $2$-qubit Bell state, stabilized by $XX$ and $ZZ$, as the target state. We measure only $XX$ with $500$ repetitions, which is not enough to uniquely determine the state. The true state is obtained by applying $10\%$ depolarizing noise to the Bell state, so that the actual fidelity is $0.925$. We generate $100$ MLE fidelity estimates (with new measurement outcomes generated at every repetition), and we find that the average MLE fidelity estimate is $0.44$ when the measurements correspond to projection onto an eigenbasis of $XX$.

The large error (an estimate of $0.44$ compared to the true value of $0.925$) is expected since the measurements are chosen poorly with respect to the state. This problem, however, is not detected by the MC approach, where we find that the average uncertainty corresponding to a confidence level of $95\%$ is $(0.32, 0.38)$. This indicates that MC uncertainty is overconfident (because $0.44 + 0.38 = 0.82 < 0.925$). Indeed, this is seen through the empirical coverage probability which turns out to be $0$, in stark contrast with the high confidence level of $95\%$ that was chosen. By empirical coverage probability, we mean the fraction of cases where the true fidelity lies inside the computed confidence interval. We obtain very similar results when using subspace measurements (average fidelity of $0.45$, average uncertainty of $(0.31, 0.37)$, and empirical coverage probability of $0$). Therefore, we cannot always trust the uncertainty given by the MC approach. In contrast, the minimax approach gives a risk of $0.5$ for this example (implying total uncertainty) when insufficient number of stabilizer measurements are provided irrespective of whether eigenbasis or subspace measurements are used.

We note that such a problem with the MC approach is also present in more realistic scenarios. For example, consider a $3$-qubit W-state as the target state, and suppose that the true state has a fidelity of $99.1\%$ with the target state. We measure $N = 19$ Pauli operators (subspace measurements), choosing all the non-identity Pauli operators that have non-zero overlap with the $W$-state. Each Pauli measurement is repeated $100$ times, and the outcomes together are used to estimate the fidelity corresponding to a confidence level of $95 \%$. Using a total of $150$ such estimates, we find that the MC method gives an empirical coverage probability of $72 \%$, which is much smaller than the chosen confidence level of $95 \%$. Thus, the MC approach is overconfident and gives tighter bounds than warranted in practical situations of interest.

In addition to the statistical problems noted above, we remark that MC re-sampling is also computationally costly. This is because MC re-sampling requires performing MLE for each (artificial) fidelity estimate generated, which is costly. Moreover, this costly computation needs to be performed after the experimental data has been collected. In contrast, our method only needs to compute the estimator once (potentially costly computation), and this can be done before collecting measurement data. Once the estimator has been constructed, it runs very efficiently on the experimental data. Thus, we not only obtain statistical but also computational benefits over MLE with MC re-sampling.

An alternate approach to obtain a bound for the MLE estimate is calculating the Profile Likelihood (PL) function \cite{murphy2000profile}. Given any value $F \in [0, 1]$, PL corresponds to the solution of the optimization problem
\begin{align}
    \text{PL}(F) = &\min_{\chi \in \mathcal{X}}\ \ell\left(\{\bm{f}^{(l)}\}_{l = 1}^L \big| \chi\right) \nonumber \\
                   &\text{s.t.} \quad \tr(\rho \chi) = F.
\end{align}
Note that we define the profile likelihood in terms of negative log-likelihood instead of the likelihood function. Our definition of PL amounts to solving the MLE optimization problem, except for the added constraint that the fidelity with the target state must be equal to $F$. The MLE solution can be obtained from PL by adding an additional layer of optimization: $\text{MLE} = \min_{F \in [0, 1]} \text{PL}(F)$. It can be shown that $\text{PL}(F)$ is convex in $F$. The advantage of calculating PL is that given a cut-off value, one can obtain a bound on the fidelity similar to an error bar. To this end, we draw a horizontal line (the cut-off) on the PL vs $F$ plot, and the locations along the $F$ axis where this line intersects the curve gives a bound on estimated fidelity. The MLE estimate lies inside this interval because it corresponds to the minimum. A schematic of a typical PL plot is shown in Fig.~\ref{fig:PL_schematic}. A somewhat similar idea for obtaining confidence regions was proposed by Faist \& Renner~\cite{faist2016practical}.
\begin{figure}[ht]
    \begin{center}
        \begin{tikzpicture}
          \draw[->] (-0.5, 0) -- (3.5, 0) node[below = 5pt] {$F$};
          \draw[->] (0, -0.5) -- (0, 3) node[left = 5pt] {PL};
          \draw[scale=0.5, domain=0.75:4, smooth, variable=\x, black] plot ({\x}, {(\x - 2)*(\x - 3.5) + 1.5});
          \draw[dashed, red] (-0.5, 1) -- (3, 1) node {};
          \draw[dashed, blue] (0.86, 1) -- (0.86, 0) node[below = 7pt, left = 0.5pt] {$F_1$};
          \draw[dashed, blue] (1.9, 1) -- (1.9, 0) node[below = 7pt, right = 0.5pt] {$F_2$};
          \draw[dashed, dkgreen] (1.4, 0.47) -- (1.4, 0) node[below = 2pt] {{\scriptsize MLE}};
        \end{tikzpicture}
    \end{center}
    \caption{A schematic of a typical Profile Likelihood (PL) plotted against the parameter $F$ (representing fidelity). The MLE estimate (green) corresponds to the minimum of the PL curve. A cut-off for PL (red dashed horizontal line) gives a bound $[F_1, F_2]$ for the estimated fidelity.}
    \label{fig:PL_schematic}
\end{figure}
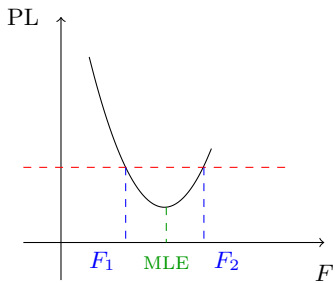

Using PL is a natural way of providing a bound for MLE, since it returns an interval of fidelity estimates that correspond to large enough likelihood. However, it does not provide a true confidence interval as the location of the cut-off value is unknown.
The issue is that likelihoods can be considered a generic notion of ``plausibility'' but not, in the technical sense, ``uncertainty;'' and that 
PL gives a \emph{likelihood region}, which is not the same as a coverage probability~\cite{Sprott2000}.

What we want is that given many MLE estimates (from actual experimental data), the true fidelity must lie $1 - \failure$ fraction of times inside the bound calculated from PL, where $1 - \failure$ is the chosen confidence level. One general heuristic that is used to give such a cut-off for PL is based on Wilks' Theorem~\cite{wilks1938large}. However, this theorem is only valid asymptotically as the sample size becomes arbitrarily large, and furthermore, it doesn't work well for quantum states, especially those that are low rank~\cite{Scholten2016, glancy2012gradient}. Note that Scholten \& Blume-Kohout~\cite{Scholten2016} give an alternative to Wilks' Theorem for quantum states, but like Wilks' Theorem, their alternative is only exact asymptotically. Furthermore, it uses models different from what we need for PL (they consider nested Hilbert spaces of increasing dimension, while we require density matoices of the same dimension but with an added constraint). Therefore, as of now, we do not know how to obtain a confidence interval using PL.

Hence, for the purpose of demonstration, we choose different cut-off values for the PL and compute the bounds (as described below) on the MLE estimate corresponding to each chosen value. We take a $3$-qubit random target state and apply $10\%$ depolarizing noise to obtain the actual state. We measure $L = 48 = 0.75 \times 4^3$ Pauli operators, chosen in the decreasing order of weights given by DFE, repeating each measurement $100$ times. Using the knowledge of the true state, we check in which of these bounds the true fidelity lies. This process is repeated $10^3$ times by generating different observed frequencies using the true state, and subsequently, we obtain the coverage probability corresponding to each value of cut-off. We also find the average width of the bound for each cut-off value. We plot the coverage probability against this average width in Fig.~\ref{fig:PL_JN_SDP_ROC}, finding that the bounds are reasonably tight. Note that the true fidelity is usually not known, so the coverage probability cannot be computed in practice.

\subsection{An SDP-based approach}
Finally, we compare with a simple Semi-Definite Programming (SDP) method to obtain a bound on fidelity. This involves solving the following intuitive optimization problem to obtain bounds on fidelity.
\begin{align}
    &F_{\min} = \min_{\chi \in \mathcal{X}} \tr(\rho \chi) \nonumber \\
    &\hspace{1cm} \text{s.t.}\ \sum_{i = 1}^L \sum_{k = 1}^{N_i} \left(\tr(E^{(i)}_k \chi) - f^{(i)}_k\right)^2 \leq \epsSDP \\
    &F_{\max} = \max_{\chi \in \mathcal{X}} \tr(\rho \chi) \nonumber \\
    &\hspace{1cm} \text{s.t.}\ \sum_{i = 1}^L \sum_{k = 1}^{N_i} \left(\tr(E^{(i)}_k \chi) - f^{(i)}_k\right)^2 \leq \epsSDP
\end{align}
In essence, we find the minimum and maximum fidelity with the target state over density matrices that satisfy the measurement statistics up to an error of $\epsSDP$. The advantage of such an approach is that the bounds obtained are independent of the method used to estimate fidelity. The drawback, however, is that the parameter $\epsSDP$ needs to be chosen ``by hand". Since we need to tune the parameter $\epsSDP$ similar to the cut-off in PL method, we plot the coverage probability against average width of the bound as before in the Fig.~\ref{fig:PL_JN_SDP_ROC}. We use the same state and measurement settings as for PL, and $10^3$ estimates to compute the coverage probability. We can see that the average width of the bound given by SDP method is typically larger than both PL and the minimax method.

\begin{figure}[!ht]
    \includegraphics[width=0.5\textwidth]{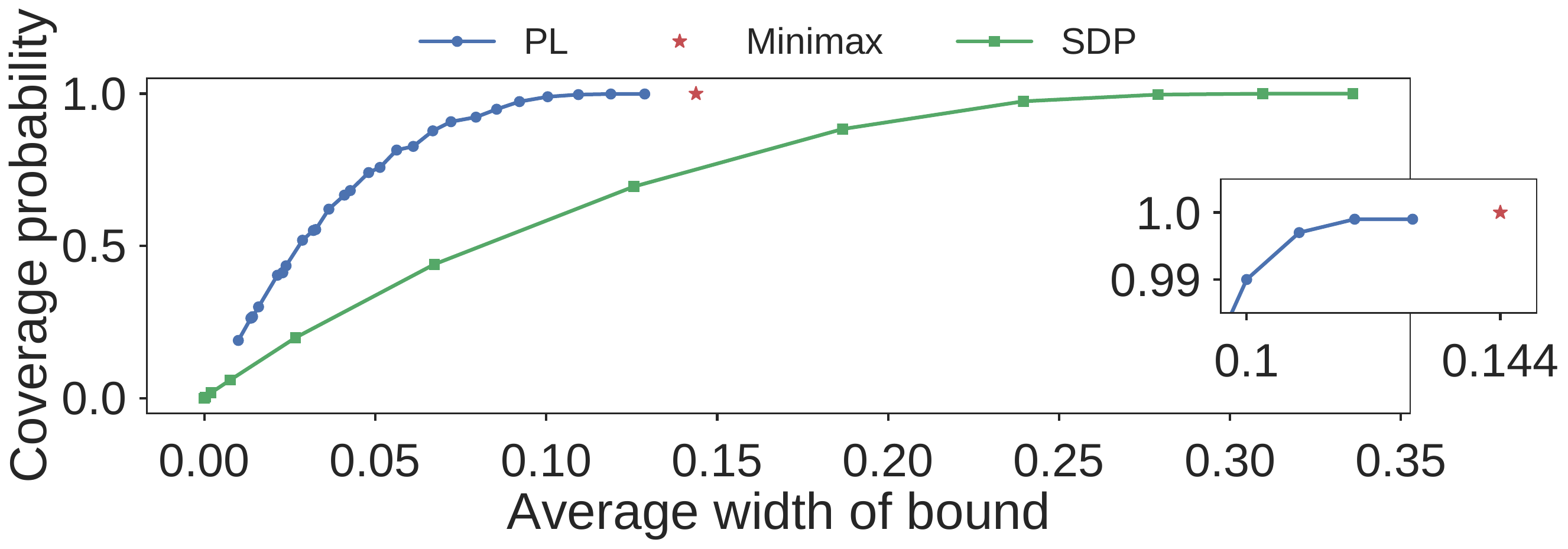}
\caption{Plot of empirical coverage probability against the average width of the bound given by the Profile Likelihood (PL) and Semi-Definite Programming (SDP) methods for a $3$-qubit random target state and $48$ Pauli measurements. It can be seen that PL gives tighter bounds than the SDP method. For reference, the minimax bound corresponding to a confidence level of $95\%$ is shown as well.}
    \label{fig:PL_JN_SDP_ROC}
\end{figure}

We remark that the minimax method gives a wider confidence interval than PL in Fig.~\ref{fig:PL_JN_SDP_ROC} due to its conservative definition. In turn, the minimax method's confidence interval is guaranteed to hold irrespective of what the true state of the system is. Indeed, we can construct the fidelity estimator using the minimax method even before taking any data. In contrast, PL and SDP methods compute bounds on the fidelity after the experimental data is collected, so they can be tighter in principle. On the other hand, the PL method requires choosing a cut-off while the SDP method requires choosing the parameter $\epsSDP$. For experimental data, we have no systematic way to choose these quantities to ensure that the bound corresponds to a genuine confidence interval. Therefore, we cannot use PL and SDP methods to generate confidence intervals in practice. In contrast, the minimax method gives rigorous confidence intervals without free parameters.

Table \ref{tab:fidelity_estimaiton_methods_comparison} provides a quick overview of comparison of these different methods.
{
\renewcommand{\arraystretch}{1.5}
\begin{table*}[ht]
    \begin{tabular}{>{\raggedright}p{4cm}p{2cm}p{2cm}p{1.8cm}p{1.5cm}p{1.8cm}}
    & Minimax & MLE, MC & PL & SDP & DFE \tabularnewline
    \hline
        No unknown parameters required? &  \yes$^a$ & \yes & \no & \no & \yes \tabularnewline
        Provides a rigorous confidence interval (never overconfident)? & \yes & \no & \no$^b$ & \no$^b$ & \yes \tabularnewline
        Risk level known before seeing the outcomes? & \yes & \no & \no & \no & \yes \tabularnewline
        Applies to any measurement setting? & \yes & \yes & \yes & \yes & \no  \tabularnewline
        No significant computation required in practice? & \no$^c$ & \no & \no & \no & \yes \tabularnewline
    \hline
    \end{tabular}
    
    \begin{minipage}[t]{0.75\textwidth}
        \begin{flushleft}
            $^a${\footnotesize \parbox[t]{\textwidth}{\raggedright Technically, $\epsZero$ can be considered as a free parameter, but we fix it at $\epsZero = 10^{-5}$. Since we don't have to tune $\epsZero$, we don't list it as an unknown parameter.}}
            $^b${\footnotesize Because no systematic method of obtaining a confidence interval is known.}\\
            $^c${\footnotesize \parbox[t]{\textwidth}{\raggedright The computational time required for the minimax method depends on the system dimension, the target state, the measurement settings, and the algorithm used. For the Pauli measurement scheme in section~\ref{secn:minimax_method_RPM_scheme}, we have a fast algorithm irrespective of the target state or the dimension.}}
        \end{flushleft}
    \end{minipage}

    \caption{Comparison of different methods for estimation of fidelity: Minimax method, Maximum Likelihood Estimation (MLE) with Monte-Carlo (MC) sampling for estimating the fidelity and uncertainty, Profile Likelihood (PL) and Semi-Definite Programming (SDP) methods for calculating bounds on fidelity, and the direct fidelity estimation (DFE) method.}
    \label{tab:fidelity_estimaiton_methods_comparison}
\end{table*}
}

\subsection{Quantum state verification\label{secn:qsv_comparison}}
Quantum state verification (QSV) is a procedure to certify that the fidelity of the quantum state prepared in the lab with a pure target state is greater than some pre-specified value with high probability~\cite{pallister2018optimal, zhu2019efficient, wang2019optimal, jiang2020towards}. QSV corresponds to a hypothesis testing problem, which is related to, but different from, our problem of estimation of fidelity. Nevertheless, the approach adopted by Pallister \textit{et al.}~\cite{pallister2018optimal} and related/subsequent work~\cite{zhu2019efficient, zhu2019general, wang2019optimal, jiang2020towards} shares similarities with our fidelity estimation method, and therefore, a comparison of these methods is helpful.

Suppose that $\rho = \op{\psi}{\psi}$ is the pure target state one wishes to prepare, but one prepares the states $\sigma_1, \sigma_2, \dotsc, \sigma_R$ in the lab. For comparison with our method, we assume that $\sigma_i = \sigma$ are identically prepared states and that measurements are independent.
One is given the promise that either $F(\rho, \sigma) = 1$  or $F(\rho, \sigma) \leq 1 - \qsverr$ for some $\qsverr \in (0, 1)$.
If one can reject the hypothesis $F(\rho, \sigma) \leq 1 - \qsverr$ with probability $1 - \failure$, then we can infer that $F(\rho, \sigma) = 1$ with a confidence level of $1 - \failure$.
The error $\qsverr$ denotes (a bound on) the deviation of the true fidelity from the maximum fidelity of $1$.
This is different from the additive error $\jnrisk$ in fidelity estimation, which 
bounds the deviation of the true fidelity from the estimated fidelity.

The goal of QSV~\cite{pallister2018optimal} is to obtain a minimax optimal protocol for rejecting the hypothesis $F(\rho, \sigma) \leq 1 - \qsverr$. Pallister \textit{et al.}~\cite{pallister2018optimal} consider random measurement strategies to achieve this goal, and show that one can obtain the minimax optimal strategy by solving an optimization problem.
They demonstrate their method by obtaining a minimax optimal measurement strategy for arbitrary two-qubit states. In particular, to reject $F(\rho, \sigma) \leq 1 - \qsverr$ with a confidence level of $1 - \delta$, their strategy needs
\begin{equation*}
    O\left(\frac{\ln(1/\failure)}{\qsverr}\right)
\end{equation*}
samples. The scaling of $1/\qsverr$ is optimal for QSV under the aforementioned assumptions.
We note that quantum state certification~\cite{buadescu2019quantum} also achieves $O(1/\qsverr)$ scaling but using joint measurements, which are difficult to implement with current technology. In contrast, Pallister \textit{et al.}~\cite{pallister2018optimal} present a local, single-copy measurement protocol for achieving this scaling for two-qubit states.

We show in Appendix~\ref{app:qsv_comparison} that using the same measurement strategy, our method can estimate fidelity to an additive error $\jnrisk$ with a confidence level of $1 - \delta$ using
\begin{equation*}
    O\left(\frac{\ln(2/\failure)}{\jnrisk^2}\right)
\end{equation*}
samples. Note that the scaling of $1/\jnrisk^2$ is optimal for any fidelity estimation protocol using only independent single-copy measurements~\cite{Huang2020}.
Thus, we demonstrate for the two-qubit example, that the optimal QSV measurement protocol is also optimal for fidelity estimation using our method.  

Notice, however, that the scaling of the optimal sample complexity is different for QSV ($O(1/\qsverr)$) and fidelity estimation ($O(1/\jnrisk^2)$).
Such a difference boils down to the differences in problem setup for QSV and fidelity estimation (for example, the different meanings of $\qsverr$ and $\jnrisk$)
and the underlying assumptions.
For instance, if we remove the assumption of QSV that either $F(\rho, \sigma) = 1$  or $F(\rho, \sigma) \leq 1 - \qsverr$, as one would in an actual experiment, the scaling of QSV can be worse than $1/\epsilon_q$ (see Ref.~\cite{jiang2020towards} and its supplementary material).
Despite these differences, the approach adopted in QSV can help with designing optimal measurement protocols that our method can benefit from for fidelity estimation, as demonstrated by the two-qubit example above.
Additional details on QSV and how it compares with our method can be found in Appendix~\ref{app:qsv_comparison}.

\subsection{Classical shadows}
Recently, a method called classical shadows was introduced for the estimation of expectation values of observables~\cite{Huang2020}.
We begin by remarking that the philosophy of classical shadows differs from our method.
In classical shadows, one seeks to find an optimal estimation procedure, assuming that there is no a priori information about the observables whose expectation values need to be computed.
In contrast, we seek to find the optimal estimation procedure, given a priori knowledge of the observable, which is often known in practice.
In the context of fidelity estimation, we know the target state $\rho$ with which we wish to estimate the fidelity.

To obtain a classical shadow, one randomly selects a unitary $U$ from a given ensemble $\mathcal{U}$ according to some probability distribution,
rotates the state by $U$ (i.e., $\sigma \to U \sigma U^\dagger$), and performs a computational basis measurement. 
If $\ket{b}$ is the computation basis element observed after one such measurement, then in expectation, we obtain a quantum channel $\mathcal{M}_c$, defined by $\mathbb{E}[U^\dagger \op{b}{b} U] = \mathcal{M}_c(\sigma)$, where $\sigma$ is the underlying state.
If the unitary ensemble is tomographically complete, then the channel $\mathcal{M}_c$ is invertible~\cite{Huang2020},
and we call $\mathcal{M}_c^{-1}(U^\dagger \op{b}{b} U)$ a classical snapshot of $\sigma$.
After $R$ measurements, the array of classical snapshots $\left\{\mathcal{M}_c^{-1}(U_1^\dagger \op{b_1}{b_1} U_1), \,\ldots\,, \mathcal{M}_c^{-1}(U_R^\dagger \op{b_r}{b_r} U_R)\right\}$ (where $U_i$ is the randomly sampled unitary and $\ket{b_i}$ is the computational basis element observed at the $i$th time step) is called a classical shadow.
Then, given an observable $\mathcal{O}$, an estimate of the expectation value $\tr(\mathcal{O} \sigma)$ using a single classical snapshot is given by $\tr(\mathcal{O} \mathcal{M}_c^{-1}(U_1^\dagger \op{b_1}{b_1} U_1))$. Multiple estimates are combined using the median of means to give the final estimate~\cite{Huang2020}.

Classical shadows are minimax optimal up to a constant factor in the worst-case over all observables, and thus, are suitable to use when the observable is not known a priori.
Our method is minimax optimal up to a constant factor for a given observable and measurement protocol.
Thus, given the observable whose expectation needs to be estimated and the measurement protocol being implemented, our method can match the performance (sample complexity) of classical shadows up to a constant factor.

Both these approaches have their respective advantages.
Since classical shadows use a fixed randomized measurement protocol, they can simultaneously estimate $N$ observables to a precision of $\adderr$ with $O(\mathcal{S} \log(N)/ \adderr^2)$ samples, where $\mathcal{S}$ is the maximum shadow norm of the observables~\cite{Huang2020}.
However, since the observables are not assumed to be known beforehand, the measurements are not tailored to the observables being estimated.
In contrast, for a given observable, we are allowed to choose an appropriate measurement protocol to estimate the expectation value of that observable.
As a result, our method, in principle, can use exponentially fewer samples than classical shadows for estimating the expectation value.
A simple example is estimating the expectation with a Pauli observable acting non-trivially on all qubits.
The
pre-factor (shadow norm) $\mathcal{S}$ scales exponentially using randomized global Clifford measurements as well as randomized Pauli measurements, as was pointed out in Huang \textit{et al.}~\cite{Huang2020}. As a result, one has to take exponentially many samples to estimate this expectation value using classical shadows.
In contrast, measuring in the eigenbasis of the given Pauli observables only needs $O(1/\adderr^2)$ samples to estimate the expectation to an error of $\adderr$.

For fidelity estimation with a pure target state $\rho$, we begin by studying the case when the unitary ensemble $\mathcal{U}$ is the Clifford group $\mathcal{C}_n$ on $n$ qubits, and we sample uniformly at random from $\mathcal{C}_n$. It has been shown~\cite{Huang2020} that classical shadows only need $O(1/\adderr^2)$ measurements to estimate the fidelity with $\rho$ to an error of $\epsilon$, which matches the optimal sample complexity up to a constant factor.
Owing to minimax optimality, our method can match the sample complexity of classical shadows using global Clifford measurements using $O(1/\adderr^2)$ measurements.
However, since the size of the Clifford group scales as $|\mathcal{C}_n| = 2^{n^2 + 2n} \prod_{i = 1}^n (4^i - 1)$ with the number of qubits $n$~\cite{koenig2014efficiently}, it is computationally challenging to compute our estimate for global Clifford measurements by directly optimizing over POVM elements.
To circumvent this problem, we need to simplify the optimization problem involved in computing our estimator analytically, as done, for example, for the randomized Pauli measurement scheme described in Sec.~\ref{secn:minimax_method_RPM_scheme}. We leave the problem of simplifying the optimization to a future study.

We remark that the computational difficulty with computing estimates is not specific to our method, but also shared by classical shadows. For Clifford measurements, the classical shadow stores the stabilizer states $U_i^\dagger \op{b_k}{b_k} U_i$, which can be done efficiently. However, the target state (or more generally, the observable $\mathcal{O}$) with which the expectation needs to be computed need not be a stabilizer state (or Clifford unitary). For efficient computation of the estimate $\tr(\mathcal{O} U_i^\dagger \op{b_k}{b_k} U_i)$, one needs to be able to write the observable as a linear combination of polynomially many Cliffords (or stabilizer states). If this is not possible, there may be no efficient way of storing the observable in memory or computing the estimate.
Thus, for many observables and target states of interest, even classical shadows suffers from exponential classical computation time, albeit in a different manner than our method.
We remark that for stabilizer states, we have an efficient algorithm to compute our estimator that also achieves optimal sample complexity as described in Sec.~\ref{secn:minimax_method_stabilizer_states}.
We conjecture this algorithm can be generalized to states that can expressed as a linear combinations of polynomially many stabilizer states, with at most a polynomial overhead.

Note, however, that Clifford measurements can be considered as an exceptional case, especially in the near term. Not only does the size of the Clifford group grow extremely quickly with qubit number, even the experimental complexity for realizing n-qubit Clifford gates grows rapidly, making this a rather impractical approach.
In practice, experiments almost always rely on local measurements, such as Pauli measurements.

For fidelity estimation using Pauli measurements, we can get an exponential advantage over classical shadows.
A simple example is estimating the fidelity with the target state $\ket{0...0}$ (computational basis state with all zeros).
Since the shadow norm using Pauli measurements scales exponentially with the locality of the observable~\cite{Huang2020}, and because $\ket{0...0}$ has a non-trivial support on all $n$ qubits, classical shadows need exponentially many samples to estimate the fidelity with this state using random Pauli measurements.
In contrast, our method (and even DFE~\cite{Flammia2011}) can estimate the fidelity with $\ket{0...0}$, using computational basis measurements that does not scale with the number of qubits.
This is because the measurement protocol is tailored to the target state, which is known a priori.

\section{Concluding remarks and future research}
The minimax method can be used to obtain an estimator for the fidelity with a pure target state for any measurement scheme. For a given setting, the estimator only needs to be computed once, and can subsequently be evaluated on raw measurement outcomes instantaneously. This gives our method a practical advantage over other estimation protocols which require random sampling of measurement settings.

Crucially, the minimax method not only constructs an estimator but also provides rigorous confidence intervals that are nearly minimax optimal. We showed that this property translates to practical sample complexity when the measurement scheme is carefully chosen. Notably, the risk is a property of the chosen measurement scheme (including the number of repetitions) and target state and is thus computed before seeing any data. Because the risk is pre-computed, it is taken to be symmetric around the fidelity estimate. As a consequence, it might be sub-optimal in practice, as the confidence interval might include unphysical values when the estimate is close to 0 or 1.

On the other hand, the fact that the risk is known beforehand allows us to use the method for benchmarking experimental protocols without having to take any data. This can, therefore, be useful in guiding the design of experiments. Further, when extending the method to quantum channels using the Choi-Jamio\l{}kowski isomorphism~\cite{choi1975completely, jamiolkowski1972linear, jiang2013channel}, such benchmarking can also be done for protocols that estimate gate fidelity.

The computation involved in finding the fidelity estimator is practical for relatively small dimensions, but can become computationally intensive for larger systems because intermediate steps in our algorithm need $O(d^3)$ iterations. Therefore, finding more efficient ways to do the optimization needed to find the estimator would prove very useful in practice. We show that this is possible for a specific measurement protocol involving Pauli measurements (see section \ref{secn:minimax_method_RPM_scheme}), where the optimization is reduced to two-dimensions irrespective of dimension of the system. The only challenge is efficiently computing the Pauli weights for the measurement protocol, but that is not a drawback of the algorithm itself. For example, when these weights can be efficiently computed, as possible for well-conditioned states, the estimator can be efficiently computed in very large dimensions. It would be interesting to extend such an approach to more general measurement settings in order to efficiently compute the fidelity estimator. Similarly, approaches to compute the estimator efficiently for randomized Clifford measurements needs to be investigated. This problem essentially reduces to efficiently computing the classical fidelity between two probability distributions determined by randomized Clifford measurements on two different states.
Furthermore, it would be interesting to see if using conjugate gradient methods for the initial descent and then using accelerated projected gradient methods like Nesterov's method~\cite{nesterov1988approach} can give significant improvements in performing the optimization for target states and measurement settings. Such an approach was proposed by Ref.~\cite{shang2017superfast} to give a fast algorithm for MLE.

An obvious advantage of the minimax method over tomographic methods like MLE is that the state need not be reconstructed. We showed that the uncertainty given by Monte-Carlo method (for MLE), although typically tighter, need not correspond to a genuine confidence interval as it can be overconfident. PL also gives tighter bounds than the minimax method because it is computed after seeing the data, but we are not aware of any systematic method to obtain a confidence interval using PL. Finding an approach to obtain confidence intervals from PL and similarly for the SDP approach outlined above are interesting directions for future research. Finally the confidence intervals from the minimax method, though guaranteed to be correct, are often not as tight as they could be. This could potentially be improved by generalizing to an asymmetric risk or by computing the risk after seeing the data, or both.

\section{Code availability}
An open source implementation of the minimax method can be found on \href{https://github.com/akshayseshadri/minimax-fidelity-estimation}{https://github.com/akshayseshadri/minimax-fidelity-estimation}.

\begin{acknowledgments}
The authors thank Emanuel Knill, Scott Glancy, Yanbao Zhang, and Rainer Blatt for helpful discussions on the manuscript.

This material is based upon work supported by the National Science Foundation under Grant Nos. 1819251 and 2112901.
This work utilized the Summit supercomputer, which is supported by the National Science Foundation (awards ACI-1532235 and ACI-1532236), the University of Colorado Boulder, and Colorado State University. The Summit supercomputer is a joint effort of the University of Colorado Boulder and Colorado State University.

We gratefully acknowledge support by the Austrian Science Fund (FWF Grant-DOI 10.55776/F71) (SFB BeyondC) and the Institut f\"ur Quanteninformation GmbH. We also acknowledge funding from the EU H2020-FETFLAG-2018-03 under Grant Agreement no. 820495, by the Office of the Director of National Intelligence (ODNI), Intelligence Advanced Research Projects Activity (IARPA), via US Army Research Office (ARO) grant no. W911NF-16-1-0070 and W911NF-21-1-0007, and the US Air Force Office of Scientific Research (AFOSR) via IOE Grant No. FA9550-19-1-7044 LASCEM. This project has received funding from the European Union’s Horizon 2020 research and innovation programme under the Marie Skłodowska-Curie grant agreement No 840450.
\end{acknowledgments}

\bibliographystyle{apsrev}
\bibliography{}

\appendix
\section{Minimax Method: Theory\label{app:minimax_theory}}
We discuss here the theory surrounding the minimax method for fidelity estimation. We begin by giving a short overview of Juditsky \& Nemirovski's framework \cite{Juditsky2009} in appendix~\ref{app:JN_premise}. Then, in appendix~\ref{app:JN_fidelity_estimation}, we describe how we adapt their general method for the purpose of fidelity estimation so as to obtain the procedure described in section~\ref{secn:minimax_method_theory}.

\subsection{Juditsky \& Nemirovski's premise\label{app:JN_premise}}
Suppose we are given a set of ``states" $\mathcal{X} \subseteq \mathbb{R}^{d}$ that is a compact and convex subset of $\mathbb{R}^d$. We wish to estimate the linear functional $\ip{g, x}$, where $g \in \mathbb{R}^d$ is some fixed vector, while the state $x \in \mathcal{X}$ of the system is unknown to us. We do not have direct access to the state $x$. However, we have access to a single measurement outcome determined by the state $x$. Measurements are modelled using random variables that assign probabilities to the possible outcomes depending on the state.

To that end, Juditsky \& Nemirovski~\cite{Juditsky2009} consider a family of random variables $\mathrm{Z}_\mu$ parametrized by $\mu \in \mathcal{M}$, where $\mathcal{M} \subseteq \mathbb{R}^m$ is some set of parameters. These random variables take values in a Polish space\footnote{A Polish space is a topological space that is homeomorphic to a separable complete metric space. We endow this with the Borel $\sigma$-algebra.} $(\Omega, \Sigma)$ equipped with a $\sigma$-finite Borel measure $\mathbb{P}$~\cite{Juditsky2009}. We assume that $\mathrm{Z}_\mu$ has a probability density $p_\mu$ with respect to this reference measure $\mathbb{P}$. The state $x \in \mathcal{X}$ determines the random variable $\mathrm{Z}_{A(x)}$ through an affine function $A\colon \mathcal{X} \to \mathcal{M}$, and we are given one outcome of this random variable for the purpose of estimation. Looking ahead to the specialization of this framework to the quantum setting, one can think of the set of observations $\Omega$ as a finite set, Borel measurable functions on $\Omega$ as functions on $\Omega$, and the integrals $\int_\Omega f(\omega) d\mathbb{P}$ as the finite sum $\sum_{\omega \in \Omega} f(\omega)$.

Our goal is to construct an estimator for $\ip{g, x}$ that uses an outcome of the random variable $\mathrm{Z}_{A(x)}$ to give an estimate. An estimator is a real-valued Borel measurable function on $\Omega$. The set of estimators $\mathcal{F}$ we are allowed to work with is any finite-dimensional vector space comprised of real-valued Borel measurable functions on $\Omega$ as long as it contains constant functions~\cite{Juditsky2009}. The mapping $\mathcal{D}(\mu) = p_\mu$ between the parameter $\mu$ and the corresponding probability density $p_\mu$ is called a parametric density family~\cite{Juditsky2009}. In order to be able to choose an appropriate estimator from $\mathcal{F}$ given that the probability density of the random variable is $p_{A(x)}$, we want that the set of estimators $\mathcal{F}$ and the parametric density family $\mathcal{D}$ interact well with each other. This gives rise to the notion of a good pair defined by Juditsky \& Nemirovski~\cite{Juditsky2009}.
\begin{definition}[Good pair]
    \label{defn:good_pair}
    We call a given pair $(\mathcal{D}, \mathcal{F})$ of parametric density family $\mathcal{D}$ and finite-dimensional space $\mathcal{F}$ of Borel functions on $\Omega$ a good pair if the following conditions hold.
    \begin{enumerate}
        \item $\mathcal{M}$ is a relatively open convex set in $\mathbb{R}^m$. By relatively open, we mean $\mathcal{M} = \text{relint}(\mathcal{M}) \equiv \{x \in \mathcal{M} \mid \exists r > 0 \text{ with } B(x, r) \cap \text{aff}(\mathcal{M}) \subseteq \mathcal{M}\}$, where $\text{aff}(\mathcal{M})$ is the affine hull of $\mathcal{M}$.
        \item Whenever $\mu \in \mathcal{M}$, we have $p_\mu(\omega) > 0$ for all $\omega \in \Omega$.
        \item Whenever $\mu, \nu \in \mathcal{M}$, $\phi(\omega) = \ln(p_\mu(\omega)/p_\nu(\omega)) \in \mathcal{F}$.
        \item Whenever $\phi \in \mathcal{F}$, the function
                \begin{equation*}
                    F_{\phi}(\mu) = \ln\left(\int_\Omega \exp\left(\phi(\omega)\right) p_\mu(\omega) d\mathbb{P}\right)
                \end{equation*}
              is well-defined and concave in $\mu \in \mathcal{M}$.
    \end{enumerate}
\end{definition}
Any estimator in $\widehat{g} \in \mathcal{F}$ is called an affine estimator (note, however, that $\widehat{g}$ need not be an affine function).

To judge the performance of an arbitrary estimator $\widehat{g}$, we define the $\failure$-risk as follows~\cite{Juditsky2009}.
\begin{definition}[$\failure$-risk]
    Given a confidence level $1 - \failure \in (0, 1)$, we define the $\failure$-risk associated with an estimator $\widehat{g}$ as
    \begin{equation*}
        \mathcal{R}(\widehat{g}; \failure) = \inf\left\{\error: \sup_{x \in \mathcal{X}} \Prob_{\omega \sim p_{A(x)}}\left\{\omega: |\widehat{g}(\omega) - \ip{g, x}| > \error\right\} < \failure\right\},
    \end{equation*}
    where $\omega \sim p_{A(x)}$ means that $\omega$ is sampled according to $p_{A(x)}$. The corresponding minimax optimal risk is defined as
    \begin{equation*}
        \mathcal{R}_*(\failure) = \inf_{\widehat{g}} \mathcal{R}(\widehat{g}; \failure)
    \end{equation*}
    where the infimum is taken over \textit{all} Borel functions $\widehat{g}$ on $\Omega$. Restricting to just the affine estimators, the affine risk is defined as
    \begin{equation*}
        \mathcal{R}_{\text{aff}}(\failure) = \inf_{\widehat{g} \in \mathcal{F}} \mathcal{R}(\widehat{g}; \failure) .
    \end{equation*}
\end{definition}
It turns out that we don't lose much by restricting our attention to affine estimators. Indeed, Juditsky \& Nemirovski~\cite{Juditsky2009} prove that if $(\mathcal{D}, \mathcal{F})$ is a good pair, there is an estimator $\widehat{F}_* \in \mathcal{F}$ with $\failure$-risk at most $\jnrisk$, such that
\begin{align*}
    \mathcal{R}_{\text{aff}}(\failure) &\leq \jnrisk \leq \vartheta(\failure) \mathcal{R}_*(\failure) \\
    \vartheta(\failure) &= 2 + \frac{\ln(64)}{\ln(0.25/\failure)}
\end{align*}
for $\failure \in (0, 0.25)$.

The estimator $\widehat{F}_*$ and the risk $\jnrisk$ are constructed as follows~\cite{seshadri2021computation, Juditsky2009}.
\begin{enumerate}[leftmargin=0.2cm]
    \item Find the saddle-point value of the function $\Phi\colon (\mathcal{X} \times \mathcal{X}) \times (\mathcal{F} \times \mathbb{R}_+) \to \mathbb{R}$ defined as
    \begin{align}
        \Phi(x, y;&\ \phi, \alpha) = \ip{g, x} - \ip{g, y} + 2\alpha \ln(2/\failure) \nonumber \\
                                    &+ \alpha \Bigg[\ln\left(\int_\Omega \exp(-\phi(\omega)/\alpha) p_{A(x)}(\omega) d\mathbb{P}\right) \nonumber \\
                                    &\hspace{1cm} + \ln\left(\int_\Omega \exp(\phi(\omega)/\alpha) p_{A(y)}(\omega) d\mathbb{P}\right)\Bigg] . \label{eqn:Phi_general}
    \end{align}
    to a given precision. Juditsky \& Nemirovski~\cite{Juditsky2009} show that $\Phi$ has the following properties. $\Phi$ is concave in $(x, y)$ and convex in $(\phi, \alpha)$, and also $\Phi \geq 0$. Further, $\Phi$ has a well-defined saddle-point. See Ref.~\cite{Juditsky2009} for other properties and a more general treatment of the problem.

    \item Denote the saddle-point value of $\Phi$ by $2\jnrisk$:
        \begin{align}
            \jnrisk &= \frac{1}{2} \sup_{x, y \in \mathcal{X}} \inf_{\phi \in \mathcal{F}, \alpha > 0} \Phi(x, y; \phi, \alpha) \nonumber \\
                                    &= \frac{1}{2} \inf_{\phi \in \mathcal{F}, \alpha > 0} \max_{x, y \in \mathcal{X}} \Phi(x, y; \phi, \alpha) \label{eqn:JN_risk_saddle_point_general}.
        \end{align}
        Say the saddle-point value is achieved at some points $x^*, y^* \in \mathcal{X}$, $\phi_* \in \mathcal{F}$ and $\alpha_* > 0$ to a precision $\error~>~0$. Suppose that an outcome $\omega \in \Omega$ is observed upon measurement of the random variable $\mathrm{Z}_{A(x)}$. Then, the estimator $\widehat{F}_* \in \mathcal{F}$ is given as
        \begin{align}
            &\widehat{F}_*(\omega) = \phi_*(\omega) + c \nonumber
            \intertext{where the affine estimator $\phi_*$ is given by}
            &\frac{\phi_*}{\alpha_*} = \frac{1}{2} \ln\left(\frac{p_{A(x^*)}}{p_{A(y^*)}}\right) \label{eqn:phi_alpha_opt}
            \intertext{and the constant $c$ is}
            &c = \frac{1}{2} \left(\ip{g, x^*} + \ip{g, y^*}\right) \label{eqn:JN_estimator_constant_general}
        \end{align}
        The $\failure$-risk associated with this estimator satisfies $\mathcal{R}(\widehat{F}_*; \failure) \leq \jnrisk + \error$, so the final output of the procedure is $\widehat{F}_*(\omega) \pm (\jnrisk + \error)$. See Refs.~\cite{seshadri2021computation, Juditsky2009} for details.
\end{enumerate}
Importantly, the estimator $\widehat{F}_*$ is a function that can accept any outcome $\omega \in \Omega$. In other words, the estimate is provided depending on the outcome, but the risk $\jnrisk$ is computed before seeing any outcome.

So far we have described the one-shot scenario, i.e., producing an estimate for $\ip{g, x}$ from one outcome of a single random variable $\mathrm{Z}_{A(x)}$. In practice, we will need to consider outcomes of different random variables $\mathrm{Z}_{A^{(l)}(x)}$, which corresponds to $l = 1, \dotsc, L$ different measurement settings. More precisely, we are given Polish spaces $(\Omega^{(l)}, \Sigma^{(l)})$ equipped with a $\sigma$-finite Borel measure $\mathbb{P}^{(l)}$ for $l = 1, \dotsc, L$. We are also given a set of parameters $\mathcal{M}^{(l)}$ for $l = 1, \dotsc, L$. For each $l = 1, \dotsc, L$, we are given a family of random variables $\mathrm{Z}_{\mu_l}$ taking values in $\Omega^{(l)}$, where $\mu_l \in \mathcal{M}^{(l)}$. The random variable $\mathrm{Z}_{\mu_l}$ has a probability density $p^{(l)}_{\mu_l}$ with respect to the reference measure $\mathbb{P}^{(l)}$. As before, we are given affine mappings $A^{(l)}\colon \mathcal{X} \to \mathcal{M}^{(l)}$ for $l = 1, \dotsc, L$ that map the state $x \in \mathcal{X}$ of the system to a corresponding parameter. For each $l = 1, \dotsc, L$, we can choose estimators for the $l^{\text{th}}$ measurement from the set $\mathcal{F}^{(l)}$, which is a finite-dimensional vector space of real-valued Borel measurable functions on $\Omega^{(l)}$ that contains constant functions. To incorporate the outcomes of these different random variables, Juditsky \& Nemirovski~\cite{Juditsky2009} define the direct product of good pairs, which essentially constructs one large good pair from many smaller ones.
\begin{definition}[Direct product of good pairs]
    \label{defn:direct_product_good_pair}
    Considering the following quantities for $l = 1, \dotsc, L$. Let $(\Omega^{(l)}, \Sigma^{(l)})$ be a Polish space endowed with a Borel $\sigma$-finite measure $\mathbb{P}^{(l)}$. Let $\mathcal{D}^{(l)}(\mu_l) = p^{(l)}_{\mu_l}$ be the parametric density family for $\mu_l \in \mathcal{M}^{(l)}$. Let $\mathcal{F}^{(l)}$ be a finite-dimensional linear space of Borel functions on $\Omega^{(l)}$ containing constants, such that the pair $(\mathcal{D}^{(l)}, \mathcal{F}^{(l)})$ is good. Then the direct product of these good pairs $(\mathcal{D}, \mathcal{F}) = \bigotimes_{l = 1}^L (\mathcal{D}^{(l)}, \mathcal{F}^{(l)})$ is defined as follows.
    \begin{enumerate}
        \item The large space is $\Omega = \Omega^{(1)} \times \dotsb \times \Omega^{(L)}$ endowed with the product measure $\mathbb{P} = \mathbb{P}^{(1)} \times \dotsb \times \mathbb{P}^{(L)}$.
        \item The set of parameters is $\mathcal{M} = \mathcal{M}^{(1)} \times \dotsb \times \mathcal{M}^{(L)}$, and the associated parametric density family is $\mathcal{D}(\mu) = p_\mu \equiv \prod_{l = 1}^L p^{(l)}_{\mu_l}$ for $\mu =(\mu_1, \dotsc, \mu_L) \in \mathcal{M}$.
        \item The linear space $\mathcal{F}$ comprises of all functions $\phi$ defined as $\phi(\omega_1, \omega_2, \dotsc, \omega_L) = \sum_{l = 1}^L \phi^{(l)}(\omega_l)$, where $\phi^{(l)} \in \mathcal{F}^{(l)}$ and $\omega_l \in \Omega^{(l)}$ for $l = 1, \dotsc, L$.
    \end{enumerate}
\end{definition}
It can be verified that the direct product of good pairs is a good pair~\cite{Juditsky2009}. Therefore, we can apply the above procedure for constructing an optimal estimator to the direct product of good pairs to obtain an optimal estimator that accounts for all the given measurement outcomes.

We now note a simplification of the risk $\jnrisk$ obtained using results of Juditsky \& Nemirovski \cite{Juditsky2009}. Recall that we defined the risk $\jnrisk$ as half the saddle-point value of the function $\Phi$ (see Eq.~\eqref{eqn:JN_risk_saddle_point_general}). However, this definition is not very amenable to theoretical calculations. Therefore, we note an alternate expression for the risk, given by Proposition~3.1 of Juditsky \& Nemirovski~\cite{Juditsky2009}.
\begin{equation}
    \jnrisk = \frac{1}{2} \max_{x, y \in \mathcal{X}} \left\{\ip{g, x} - \ip{g, y} \mid \affh(A(x), A(y)) \geq \frac{\failure}{2}\right\} \label{eqn:JNriskprop3.1_general}
\end{equation}
Juditsky \& Nemirovski~\cite{Juditsky2009} refer to the quantity
\begin{equation}
    \affh(\mu, \nu) = \int_\Omega \sqrt{p_\mu p_\nu} d\mathbb{P} \label{eqn:Hellinger_affinity}
\end{equation}
as Hellinger Affinity (this quantity is sometimes referred to as Bhattacharyya coefficient in the discrete case; see for example Ref.~\cite{fuchs1999cryptographic}). Juditsky \& Nemirovski~\cite{Juditsky2009} prove that the Hellinger affinity is continuous and log-concave in $(\mu, \nu) \in \mathcal{M} \times \mathcal{M}$ when $(\mathcal{D}, \mathcal{F})$ is a good pair.

\subsection{Fidelity estimation\label{app:JN_fidelity_estimation}}
Now we give the details on how we adapt Juditsky \& Nemirovski's general framework to fidelity estimation. In Table~\ref{tab:JN_quantities_quantum}, we provide a dictionary relating the general quantities defined in appendix~\ref{app:JN_premise} to our scenario of fidelity estimation. We assume that the quantum system has a $d$-dimensional Hilbert space over $\mathbb{C}$, $d \in \mathbb{N}$. The set of $d \times d$ complex-valued Hermitian matrices forms a real vector space that is isomorphic to $\mathbb{R}^{d^2}$, and thus, we can write $\mathcal{X} \subseteq \mathbb{R}^{d^2}$, where $\mathcal{X}$ is the set of density matrices. Note that $\mathcal{X}$ is a compact and convex set. For the $l^\text{th}$ measurement setting, we consider the positive operator-valued measure (POVM) $\{E^{(l)}_1, \dotsc, E^{(l)}_{N_l}\}$, where $l = 1, \dotsc, L$. Since the definition of a good pair requires that the probability of each outcome is non-zero, we add a small parameter $0 < \epsZero \ll 1$ to make the outcome probabilities positive, as noted in section~\ref{secn:minimax_method_theory}. These outcome probabilities are represented by the affine map $A^{(l)}\colon \mathcal{X} \to \mathcal{M}^{(l)}$ given in Table~\ref{tab:JN_quantities_quantum}.

\begin{table}[!ht]
    \begin{center}
        \begin{tabular}{l l}
        \toprule
            $\mathcal{X}$ & Set of density matrices \\
            $\Omega^{(l)}$ & Measurement outcomes $\{1, \dotsc, N_l\}$ \\
            $\mathbb{P}^{(l)}$ & Counting measure on $(\Omega^{(l)}, \Sigma^{(l)})$ with $\Sigma^{(l)} = 2^{\Omega^{(l)}}$ \\
            $\mathcal{M}^{(l)}$ & Relatively open simplex $\{x \in \mathbb{R}^{N_l} \mid x_i > 0,\ \sum_i x_i = 1\}$ \\
            $p_\mu$ & $p_\mu = (\mu_1, \dotsc, \mu_{N_l})$, $\mu \in \mathcal{M}^{(l)}$ \\
            $A^{(l)}$ & $A^{(l)}(\chi)_k = \frac{\tr(E^{(l)}_k \chi) + \epsZero/N_l}{1 + \epsZero}$, $k = 1, \dotsc, N_l$, $\epsZero > 0$ \\
            $\mathcal{F}^{(l)}$ & Set of estimators: real-valued functions on $\Omega^{(l)}$ \\
            $g$ & Pure target state $\rho$ \\
            \bottomrule
        \end{tabular}
    \end{center}
    \caption{A dictionary specifying the meaning of each quantity appearing in the text for the purpose of fidelity estimation. The index $l$ varies from $1$ to $L$, where $L$ denotes the number of measurement settings. The $l^\text{th}$ measurement setting is described by the POVM $\{E^{(l)}_1, \dotsc, E^{(l)}_{N_l}\}$.}
    \label{tab:JN_quantities_quantum}
\end{table}

Note that the set of outcomes $\Omega^{(l)}$ is a finite set for each $l = 1, \dotsc, L$. We consider the discrete topology $\Sigma^{(l)} = 2^{\Omega^{(l)}}$ on $\Omega^{(l)}$, so that $(\Omega^{(l)}, \Sigma^{(l)})$ forms a Polish space. The Borel $\sigma$-algebra coincides with the topology, so we use the same symbol for both. Because $\Sigma^{(l)}$ is discrete, any real-valued function on $\Omega^{(l)}$ is Borel measurable, and therefore, we omit the phrase ``Borel measurable" when talking about functions (or estimators) on $\Omega^{(l)}$. Furthermore, since each $\Omega^{(l)}$ is a finite set, real-valued functions defined on it can be considered as $|\Omega^{(l)}| = N_l$ dimensional real vectors. Thus, we treat elements of $\mathcal{F}^{(l)}$ as vectors.

Using these facts, we show that $\mathcal{D}^{(l)}$ and $\mathcal{F}^{(l)}$ as defined in Table~\ref{tab:JN_quantities_quantum} form a good pair. By definition of $\mathcal{M}^{(l)}$ and $p_\mu$ in Table~\ref{tab:JN_quantities_quantum}, it is easy to see the first and second conditions for good pair given in definition~\ref{defn:good_pair} hold. We check that the last two conditions given in definition~\ref{defn:good_pair} hold. Since $\mathcal{F}^{(l)}$ contains all functions on $\Omega^{(l)}$, it contains $\ln(p_\mu/p_\nu)$ in particular. Next, we see that for $\phi^{(l)} \in \mathcal{F}^{(l)}$, we can write
\begin{equation*}
    F_{\phi^{(l)}}(\mu) = \ln\left(\sum_{k = 1}^{N_l} \exp\left(\phi^{(l)}_k\right) \mu_k\right)
\end{equation*}
because $\mathbb{P}^{(l)}$ is the counting measure. As noted above, we consider $\phi^{(l)} \in \mathcal{F}^{(l)}$ as an $N_l$-dimensional real vector. Then, since $\nabla_\mu^2 F_{\phi^{(l)}} = -\bm{e} \bm{e}^T/(\ip{\bm{e}, \mu})^2$, where $\bm{e} = (e^{\phi^{(l)}_1}, \dotsc, e^{\phi^{(l)}_{N_l}})$, $F_{\phi^{(l)}}$ is concave in $\mu$. Thus, $(\mathcal{D}^{(l)}, \mathcal{F}^{(l)})$ forms a good pair.

Now, suppose that we perform $R_l$ repetitions (shots) of the $l^\text{th}$ measurement setting. Then, the space to be considered for all measurement settings put together is $\Omega = (\Omega^{(1)})^{R_1} \times \dotsm \times (\Omega^{(L)})^{R_L}$, the parameter space for probability distributions is $\mathcal{M} = (\mathcal{M}^{(1)})^{R_1} \times \dotsm \times (\mathcal{M}^{(L)})^{R_L}$, and the set of estimators $\mathcal{F} \subseteq (\mathcal{F}^{(1)})^{R_1} \times \dotsm \times (\mathcal{F}^{(L)})^{R_l}$ is chosen as per definition~\ref{defn:direct_product_good_pair}. The mapping $A: \mathcal{X} \to \mathcal{M}$ is given by $A(\chi) = \bigoplus_{l = 1}^L \bigoplus_{r = 1}^{R_l} A^{(l)}(\chi) \equiv (A^{(1)}(\chi), A^{(1)}(\chi), \dotsc, A^{(L)}(\chi), A^{(L)}(\chi))$, where $A^{(l)}(\chi)$ is repeated $R_l$ times. Then, we can use the direct product of good pairs (definition~\ref{defn:direct_product_good_pair}) to compute the function $\Phi$ defined in Eq.~\eqref{eqn:Phi_general} when all the measurement settings are considered together.

Note that $\phi \in \mathcal{F} \subseteq (\mathcal{F}^{(1)})^{R_1} \times \dotsm \times (\mathcal{F}^{(L)})^{R_L}$ implies $\phi = \sum_{l = 1}^L \sum_{r = 1}^{R_l} \phi^{(l, r)}$, where $\phi^{(l, r)}$ belongs to the $r^\text{th}$ copy of $\mathcal{F}^{(l)}$. Then, using Eq.~\eqref{eqn:Phi_general}, we obtain
\begin{align*}
    \Phi(&\chi_1, \chi_2; \phi, \alpha) = \tr(\rho \chi_1) - \tr(\rho \chi_2) + 2\alpha \ln(2/\failure) \nonumber \\
                        &+ \alpha \sum_{l = 1}^L \sum_{r = 1}^{R_l} \Bigg[\ln\left(\sum_{k = 1}^{N_l} e^{-\phi^{(l, r)}_k/\alpha} \frac{\tr(E^{(l)}_k \chi_1) + \epsZero/N_l}{1 + \epsZero}\right) \nonumber \\
                                            &\hspace{1cm} + \ln\left(\sum_{k = 1}^{N_l} e^{\phi^{(l, r)}_k/\alpha} \frac{\tr(E^{(l)}_k \chi_2) + \epsZero/N_l}{1 + \epsZero}\right)\Bigg]
\end{align*}
where we have used the fact that $\exp(\sum_{l, r} \phi^{(l, r)}) = \prod_{l, r} \exp(\phi^{(l, r)})$, $p_\mu = \prod_{l, r} p^{(l)}_{\mu_{l, r}}$, $A(\chi) = \bigoplus_{l, r} A^{(l)}(\chi)$, and that $\mathbb{P} = (\mathbb{P}^{(1)})^{R_1} \times \dotsm \times (\mathbb{P}^{(L)})^{R_L}$ is a product measure. By remark $3.2$ of Juditsky \& Nemirovski \cite{Juditsky2009}, we can just use $R_l$ copies of $\phi^{(l)}$ instead of $\phi^{(l, r)}$ in finding the saddle-point of the above function, and thus we obtain Eq.~\eqref{eqn:Phi_quantum}.

Suppose that the $(\chi_1, \chi_2)$ and the $\alpha$ components of the saddle-point of the function $\Phi$ are attained at $\chi_1^*, \chi_2^* \in \mathcal{X}$ and $\alpha_* > 0$, respectively, to a given precision. Then, from Eq.~\eqref{eqn:phi_alpha_opt}, definition~\ref{defn:direct_product_good_pair}, and preceding remarks, we can infer that the $\phi$-component of the saddle-point can be described by the function $\phi_* = \sum_{l = 1}^L \sum_{r = 1}^{R_l} \phi^{(l)}_*$, where for $l = 1, \dotsc, L$, we have
\begin{equation}
    \phi^{(l)}_* = \frac{\alpha_*}{2} \ln\left(\frac{p_{A^{(l)}(\chi_1^*)}}{p_{A^{(l)}(\chi_2^*)}}\right). \label{eqn:phi_opt_quantum}
\end{equation}

Note that replacing $g = \rho$ with $g = \mathcal{O}$ for any Hermitian operator (observable) $\mathcal{O}$, we can obtain an estimator for the expectation value of that observable.

Finally, to adapt the simplified expression for risk given in Eq.~\eqref{eqn:JNriskprop3.1_general} to the quantum case, we compute the Hellinger affinity (see Eq.~\eqref{eqn:Hellinger_affinity}). The Hellinger affinity for the quantum case is given as
\begin{align}
    \affh(A(\chi_1), A(\chi_2)) &= \prod_{l = 1}^L \Bigg[\sum_{k = 1}^{N_l} \left(\frac{\tr(E^{(l)}_k \chi_1) + \epsZero/N_l}{1 + \epsZero}\right)^{1/2} \nonumber \\
                                &\hspace{1.5cm} \left(\frac{\tr(E^{(l)}_k \chi_2) + \epsZero/N_l}{1 + \epsZero}\right)^{1/2}\Bigg]^{R_l} \label{eqn:Hellinger_affinity_quantum} \\
                                &\approx \prod_{l = 1}^L \left[F_C(\chi_1, \chi_2; \{E^{(l)}_k\})\right]^{R_l/2} \nonumber
\end{align}
where in the last step, we neglect $\epsZero \ll 1$ to simplify the equations (we include $\epsZero$ in the numerical simulations). Here, $F_C$ is the classical fidelity defined in Eq.~\eqref{eqn:classicalfidelity}. Then substituting the expression for the Hellinger affinity in Eq.~\eqref{eqn:JNriskprop3.1_general}, we obtain Eq.~\eqref{eqn:JNriskprop3.1}.

\section{Minimax Method: Numerical Implementation\label{app:minimax_numerical}}
We outline the procedure followed to find the saddle-point of the function $\Phi$ defined in Eq.~\eqref{eqn:Phi_quantum}, from which we can compute the fidelity estimator.

We present two algorithms to compute the saddle point. The first algorithm is based on Nesterov's method (accelerated projected gradient), and the second algorithm uses standard \texttt{cvxpy} software. The implementation based on Nesterov's method is more memory efficient, whereas the implementation based on \texttt{cvxpy} is faster in practice. A detailed comparison of computation time required by each method can be found in accompanying Ref.~\cite{PRL}. We now describe both our algorithms and show that they converge to the saddle point.

\subsection{Implementation using Nesterov's method}
This is done in two steps: find the $\chi_1^*, \chi_2^* \in \mathcal{X}$ and $\alpha_* > 0$ components of the saddle-point and then compute $\phi_*/\alpha_*$ to obtain the $\phi_* \in \mathcal{F}$ component of the saddle-point.

To execute the first step, we resort to the following expression for the saddle-point value noted in Ref.~\cite{seshadri2021computation}, which is obtained by appropriately re-writing the expression given by Juditsky \& Nemirovski~\cite{Juditsky2009},
\begin{align}
    2\jnrisk &= \inf_{\alpha > 0} \bigg\{2 \alpha \ln(2/\failure) + \max_{\chi_1, \chi_2 \in \mathcal{X}} \bigg[\ip{\rho, \chi_1} - \ip{\rho, \chi_2} \nonumber \\
                             &\hspace{3.5cm}+ 2 \alpha \ln(\affh(A(\chi_1), A(\chi_2)))\bigg]\bigg\} \label{eqn:saddle_point_expression_optimization}
\end{align}
where the Hellinger affinity $\affh(A(\chi_1), A(\chi_2))$ is given in Eq.~\eqref{eqn:Hellinger_affinity_quantum}.

Now, we note that as per the definitions used in Tab.~\ref{tab:JN_quantities_quantum}, the function $\affh(\mu, \nu) = \prod_{l = 1}^L \left[\sum_{i = 1}^{N_l} \sqrt{\mu^{(l)}_i \nu^{(l)}_i}\right]^{R_l}$ is smooth on its domain because $\mu, \nu > 0$ (component-wise) for each $\mu, \nu \in \mathcal{M}$. Since $\affh(\mu, \nu) > 0$, the function $\ln(\affh(\mu, \nu))$ is well-defined and smooth on its domain. Further, since $A\colon \mathcal{X} \to \mathcal{M}$ is affine, $\ln(\affh(A(\chi_1), A(\chi_2)))$ is smooth on $\mathcal{X} \times \mathcal{X}$. In particular, the derivatives of $\ln(\affh(A(\chi_1), A(\chi_2)))$ are continuous. Since $\mathcal{X} \times \mathcal{X}$ is compact, the Hessian of $\ln(\affh(A(\chi_1), A(\chi_2)))$ is bounded, and therefore, the gradient of $\ln(\affh(A(\chi_1), A(\chi_2)))$ is Lipschitz continuous. Moreover, $\ln(\affh(A(\chi_1), A(\chi_2)))$ is a jointly concave function of the density matrices (see Appendix~\ref{app:JN_premise}).

With this in mind, we use the following procedure to find a saddle-point of the function $\Phi$ defined in Eq.~\eqref{eqn:Phi_quantum} to any given precision.
\begin{enumerate}[leftmargin=0.2cm]
    \item For any fixed $\alpha > 0$, we solve the ``inner" convex optimization problem in Eq.~\eqref{eqn:saddle_point_expression_optimization}
            \begin{align*}
                \max_{\chi_1, \chi_2 \in \mathcal{X}} &\bigg[
                        \underbrace{\tr(\rho \chi_1) - \tr(\rho \chi_2) + 2 \alpha \ln(\affh(A(\chi_1), A(\chi_2)))}_{f(\chi_1, \chi_2)}
                    \bigg] \\
                    = \max_{\chi \in \overline{\mathcal{X}}}\ &f(\chi) \nonumber
            \end{align*}
            using the version of Nesterov's second method \cite{nesterov1988approach} given in Ref.\ \cite{tseng2010approximation}. For the second equation above, we define $\chi = (\chi_1, \chi_2)$ and $\overline{\mathcal{X}} = \mathcal{X} \times \mathcal{X}$. Nesterov's second method is suited to problems where the objective $f$ is a convex function\footnote{Or a concave function in the case of maximization.} with a Lipschitz continuous gradient (see Theorem 1 (c) in Ref.~\cite{tseng2010approximation} for convergence guarantee). In such scenarios, Nesterov's second method gives an accelerated version of projected gradient ascent/descent, such that each iterate lies in the domain $\overline{\mathcal{X}}$. This is a useful method to optimize convex functions of density matrices. When the Lipschitz constant is not known, a backtracking scheme can be used~\cite{tseng2010approximation}.
    \item We perform the ``outer" convex optimization over $\alpha$ in Eq.~\eqref{eqn:saddle_point_expression_optimization} using \texttt{scipy}'s \texttt{minimize\_scalar} routine. Through this optimization, we obtain the $\chi_1^*, \chi_2^* \in \mathcal{X}$ and $\alpha_* > 0$ components of the saddle-point.
    \item Using the so obtained $\chi_1^*, \chi_2^* \in \mathcal{X}$ and $\alpha_* > 0$, we find $\phi_* \in \mathcal{F}$ using Eq.~\eqref{eqn:phi_opt_quantum}.
\end{enumerate}

Once we have a saddle-point, the estimator can be easily computed using Eqs.~\eqref{eqn:JN_estimator} \& \eqref{eqn:JN_estimator_constant}. 

Note that we embed the Hermitian matrices into a real vector space before performing the above optimizations. This is possible because an isometric isomorphism exists between the set of Hermitian operators of a fixed size and a real vector space. We use $\epsZero  = 10^{-5}$ in the numerical implementation.

\subsection{Implementation with \texttt{cvxpy}\label{app:minimax_cvxpy}}
The optimal risk $\jnrisk$ as defined by the constrained optimization problem in Eq.~\eqref{eqn:JNriskprop3.1_general} can be evaluated directly using the \texttt{cvxpy} Python library, provided we take a logarithm of the constraint to make the problem convex, and multiply it by 2 to match the theory of Ref.~\cite{Juditsky2009}. This approach is faster than the above saddle point algorithm, at the cost of increased memory usage. A \texttt{cvxpy} implementation with a more balanced trade-off for larger dimensional problems is possible in the case of Pauli measurements by applying each POVM with a matrix-free algorithm. 

Regardless of such implementation details, however, one needs access to the optimal $\alpha_*$ from the saddle-point problem in Eq.~\eqref{eqn:Phi_quantum} to construct the estimator $\widehat{F}_*$. To do this without solving the saddle-point problem directly, we show that the parameter $\alpha$ of the saddle-point problem is actually the dual variable associated with our constraint. The optimal value of this dual variable is automatically returned by \texttt{cvxpy} alongside the optimal primal variables $\chi_1^*$ and $\chi_2^*$ as a consequence of the primal-dual algorithm with which it solves the problem.

Written in a standard form, Eq.~\eqref{eqn:JNriskprop3.1_general} becomes
\begin{align}
	\underset{\chi_1, \chi_2\,\in\, \mathcal{X}}{\text{minimize}} &\quad\quad \ip{\rho, \chi_2 - \chi_1}, \nonumber \\
	\text{subject to}&\quad\quad 2\ln(\epsilon/2) - 2\ln(\text{AffH}(A(\chi_1), A(\chi_2))) \leq 0, \label{eqn:JNriskprop3.1_stdform}
\end{align}
where the primal optimal value $p^*$ is related to the risk $\jnrisk$ by $-\tfrac{1}{2}p^* = \jnrisk$. Introducing $\widetilde{\alpha}$ as a dual variable, the Lagrangian for our nonlinear programming problem is given by
\begin{align*}
	\mathcal{L}(\chi_1, \chi_2; \widetilde{\alpha}) &= \langle \rho, \chi_2 - \chi_1\rangle\\
	 &\hspace{0.5cm}+\widetilde{\alpha}\Big(2\ln(\epsilon/2) - 2\ln(\text{AffH}(A(\chi_1), A(\chi_2)))\Big).
\end{align*}
This defines a concave Lagrange dual function $g: \mathbb{R} \to \mathbb{R}$ defined by
\begin{align*}
	g(\widetilde{\alpha}) = \min_{\chi_1, \chi_2\in \mathcal{X}} \mathcal{L}(\chi_1, \chi_2; \widetilde{\alpha}).
\end{align*}
Let $d^*$ denote the maximum value of the function $g$. Because our original problem in Eq.~\eqref{eqn:JNriskprop3.1_general} is convex and Slater's condition holds (as strict feasibility is achieved whenever $\chi_1 = \chi_2$), we have strong duality between the two problems. This means that $p^* = d^*$, and it follows that
\begin{align*}
	2\jnrisk &= -p^* = -d^* = -\max_{\widetilde{\alpha} \geq 0} g(\widetilde{\alpha})\\
	&= \min_{\widetilde{\alpha} \geq 0} \max_{\chi_1, \chi_2\in \mathcal{X}} -\mathcal{L}(\chi_1, \chi_2; \widetilde{\alpha})\\
	&= \min_{\widetilde{\alpha} \geq 0} \max_{\chi_1, \chi_1\in \mathcal{X}} \bigg\{\ip{\rho, \chi_1 - \chi_2}\\
	&\hspace{2.5cm} + 2\widetilde{\alpha}\bigg[\ln(\affh(A(\chi_1), A(\chi_2))) - \ln(\epsilon/2) \bigg]\bigg\}\\
	&= \inf_{\widetilde{\alpha} > 0} \bigg\{2 \widetilde{\alpha} \ln(2/\epsilon) + \max_{\chi_1, \chi_2 \in \mathcal{X}} \bigg[\ip{\rho, \chi_1} - \ip{\rho, \chi_2} \nonumber \\
	&\hspace{3.5cm}+ 2 \widetilde{\alpha} \ln(\affh(A(\chi_1), A(\chi_2)))\bigg]\bigg\}.
\end{align*}
With $\widetilde{\alpha} = \alpha$, this is exactly the expression in Eq.~\eqref{eqn:saddle_point_expression_optimization}, showing equivalence of the two values. Importantly, this means that the estimator found by solving the saddle-point problem in Eq.~\eqref{eqn:saddle_point_expression_optimization} can be equivalently constructed by finding both primal and dual optima of Eq.~\eqref{eqn:JNriskprop3.1_stdform}, which we compute using \texttt{cvxpy}.

\section{Minimax Method: Sample Complexity\label{app:minimax_sample_complexity}}
We begin by computing the best sample complexity that can be achieved by the minimax method. A detailed statement of this result is given in Theorem~\ref{thm:minimax_method_best_sample_complexity}. Below, we present a proof of this result.
\begin{proof}[Proof of Theorem~\ref{thm:minimax_method_best_sample_complexity}]
    \label{proof:minimax_method_best_sample_complexity}
    From Eq.~\eqref{eqn:JNriskprop3.1}, we know that the risk can be written as
    \begin{align*}
        &\jnrisk = \frac{1}{2} \max_{\chi_1, \chi_2 \in \mathcal{X}} \bigg\{\tr(\rho \chi_1) - \tr(\rho \chi_2)\ \bigg| \nonumber \\
                                                &\hspace{3cm} \prod_{l = 1}^L \left[F_C(\chi_1, \chi_2, \{E^{(l)}_k\})\right]^{R_l/2} \geq \frac{\failure}{2} \bigg\}
        \intertext{where}
        &F_C(\chi_1, \chi_2, \{E^{(l)}_k\}) = \left(\sum_{k = 1}^{N_l} \sqrt{\tr\left(E^{(l)}_k \chi_1\right) \tr\left(E^{(l)}_k \chi_2\right)}\right)^2
    \end{align*}
    is the classical fidelity between $\chi_1$ and $\chi_2$ determined by the POVM $\{E^{(l)}_k\}$. As noted in section~\ref{secn:minimax_method_optimal_risk}, we can write the fidelity between any two states as follows~\cite{fuchs1996distinguishability}.
    \begin{equation*}
        F(\chi_1, \chi_2) = \min_{\text{POVM } \{F_i\}} F_C(\chi_1, \chi_2, \{F_i\})
    \end{equation*}
    In particular, we have $F(\chi_1, \chi_2) \leq F_C(\chi_1, \chi_2, \{E^{(l)}_k\})$ for every POVM $\{E^{(l)}_k\}$ that we are using. Thus, we obtain the following lower bound on our risk
    \begin{align*}
        \jnrisk \geq \frac{1}{2} \max_{\chi_1, \chi_2 \in \mathcal{X}} \bigg\{&\tr(\rho \chi_1) - \tr(\rho \chi_2)\ \bigg| \nonumber \\
                                                        &F(\chi_1, \chi_2) \geq \left(\frac{\failure}{2}\right)^{\frac{2}{R}}\bigg\}
    \end{align*}
    where $R = \sum_{l = 1}^L R_l$ is the total number of shots. We now proceed to evaluating the lower bound. For convenience, we denote $\gamma = (\failure/2)^{2/R}$, so that the constraint becomes $F(\chi_1, \chi_2) \geq \gamma$. Next, we note that for a pure state $\rho$ and possibly mixed states $\chi_1$ and $\chi_2$, we have the following inequality for the fidelity in terms of the trace distance (see chapter 9 in Ref.~\cite{wilde2011classical})
    \begin{equation}
        \tr(\rho \chi_1) \leq \tr(\rho \chi_2) + \frac{1}{2} \norm{\chi_1 - \chi_2}_1
    \end{equation}
    where $\norm{\chi}_1$ is the Schatten $1$-norm of $\chi$. Then, using the Fuchs -- van de Graaf inequality $(1/2) \norm{\chi_1 - \chi_2}_1 \leq \sqrt{1 - F(\chi_1, \chi_2)}$ \cite{fuchs1999cryptographic}, we can write
    \begin{align}
        \tr(\rho \chi_1) - \tr(\rho \chi_2) &\leq \sqrt{1 - F(\chi_1, \chi_2)} \notag \\
                                            &\leq \sqrt{1 - \gamma} \label{eqn:fuchsvdg_ineq_constraint}
    \end{align}
    where the second line holds when $F(\chi_1, \chi_2) \geq \gamma$.

    We show that the upper bound in Eq.~\eqref{eqn:fuchsvdg_ineq_constraint} can be achieved by explicitly constructing the density matrices $\chi_1^*$ and $\chi_2^*$ achieving the maximum. For this purpose, we define $\Delta_\rho = \id - \rho$, and suppose that the dimension of the system is $d$ (i.e., $\rho$ is an $d \times d$ matrix). Then, let
    \begin{align*}
        \chi_1^* &= \frac{1 + \sqrt{1 - \gamma}}{2} \rho + \frac{1 - \sqrt{1 - \gamma}}{2} \frac{\Delta_\rho}{d - 1}, \\
        \chi_2^* &= \frac{1 - \sqrt{1 - \gamma}}{2} \rho + \frac{1 + \sqrt{1 - \gamma}}{2} \frac{\Delta_\rho}{d - 1}.
    \end{align*}
    Since $\rho$ is pure, there is some (normalized) vector $\ket{v_1}$ such that $\rho = \op{v_1}{v_1}$. Let $\{\ket{v_2}, \dotsc, \ket{v_d}\}$ be any orthonormal basis for the orthogonal complement of the subspace spanned by $\ket{v_1}$. Then, we can write $\Delta_\rho = \sum_{i = 2}^d \op{v_i}{v_i}$ using the resolution of identity. Therefore, in the basis $\{\ket{v_1}, \dotsc, \ket{v_d}\}$, the matrices $\chi_1^*$ and $\chi_2^*$ are diagonal. Since $\gamma < 1$, the diagonal entries of these matrices are real (and positive), and it is easy to check that they sum to $1$, showing that $\chi_1^*$ and $\chi_2^*$ are density matrices. Since they are diagonal, it is easy to compute the fidelity between them. We find that $F(\chi_1^*, \chi_2^*) = \gamma$, and therefore, these density matrices satisfy the constraint $F(\chi_1^*, \chi_2^*) \geq \gamma$. Further, we can see that these density matrices saturate the upper bound in Eq.~\eqref{eqn:fuchsvdg_ineq_constraint}, i.e., $\tr(\rho \chi_1^*) - \tr(\rho \chi_2^*) = \sqrt{1 - \gamma}$. Thus, we find that the lower bound on the risk is
    \begin{equation*}
        \jnrisk \geq \frac{1}{2} \sqrt{1 - \left(\frac{\failure}{2}\right)^{2/R}}
    \end{equation*}
    where we used $\gamma = (\failure/2)^{2/R}$.

    The inequality given in the above equation is tight: the POVM defined by $\{\rho, \Delta_\rho\}$ achieves the lower bound (see corollary \ref{corr:minimax_method_optimal_risk}). Thus, the best sample complexity given by the minimax method corresponding to a risk of $\jnrisk < 0.5$ and confidence level $1 - \failure \in (0.75, 1)$ is
\begin{align*}
    R &\geq \frac{2 \ln(2/\failure)}{\left|\ln(1 - 4\jnrisk^2)\right|} \\
      &\approx \frac{\ln(2/\failure)}{2\jnrisk^2} \text{ when } \jnrisk^2\ll 1.
\end{align*}
as noted.
\end{proof}

We now consider a family of two-outcome POVM measurements, and show that the risk given by the minimax method can be obtained by solving a one-dimensional optimization problem. Using this, we obtain a simple formula for an upper bound on the risk, and consequently, also a good bound on the sample complexity. In particular, this provides an upper bound on the sample complexity for the randomized Pauli measurement scheme given in Box~\hyperlink{box:pauli_measurement_scheme}{\ref*{secn:minimax_method}.1}.

\begin{theorem}
    \label{thm:minimax_sample_complexity_2outcomePOVM}
    Suppose we are given a pure target state $\rho$, and we perform $R$ repetitions of the POVM $\{\Theta, \Delta_\Theta\}$ defined as
    \begin{align*}
        \Theta &= \omega_1 \rho + \omega_2 \Delta_\rho \\
        \Delta_\Theta &= (1 - \omega_1) \rho + (1 - \omega_2) \Delta_\rho
    \end{align*}
    where $\Delta_\rho = \id - \rho$ and $\omega_1, \omega_2 \in [0, 1]$ are parameters satisfying $\omega_1 > \omega_2$. Also, define
    \begin{align*}
        \gamma &= \left(\frac{\failure}{2}\right)^{2/R}
        \intertext{and}
        R_o &= \frac{\ln(2/\failure)}{|\ln(\sqrt{\omega_1 \omega_2} + \sqrt{(1 - \omega_1)(1 - \omega_2)})|}
    \end{align*}

    \noindent Then, if $R > R_o$, the risk of the estimator given by the minimax method can be obtained by solving the one-dimensional optimization problem
    \begin{align}
        \jnrisk &= \frac{\sqrt{1 - \gamma}}{2(\omega_1 - \omega_2)} \max_{a \in \mathcal{A}_a} \sqrt{1 - (2a - 1)^2\gamma} \label{eqn:thm_sample_complexity_2outcomePOVM-risk}
        \intertext{where the set of allowed values for $a$ is given as}
        \mathcal{A}_a &= [0, 1] \cap \left((-\infty, a^{(1)}_-] \cup [a^{(1)}_+, \infty)\right) \nonumber \\
                                                                    &\hspace{2.5cm} \cap \left((-\infty, a^{(2)}_-] \cup [a^{(2)}_+, \infty)\right) \nonumber
        \intertext{with}
        a^{(1)}_{\pm} &= \omega_1 \pm \sqrt{\omega_1 (1 - \omega_1) \frac{(1 - \gamma)}{\gamma}} \nonumber \\
        a^{(2)}_{\pm} &= \omega_2 \pm \sqrt{\omega_2 (1 - \omega_2) \frac{(1 - \gamma)}{\gamma}}. \nonumber
    \end{align}
    For $R \leq R_o$, the risk is $\jnrisk = 0.5$.

    \noindent In particular, for any risk $\jnrisk \in (0, 0.5)$,
    \begin{align}
        R &\geq 2\frac{\ln\left(2/\failure\right)}{\left|\ln\left(1 - 4 (\omega_1 - \omega_2)^2 \jnrisk^2\right)\right|} \label{eqn:thm_sample_complexity_2outcomePOVM-repetitions} \\
          &\approx \frac{1}{2(\omega_1 - \omega_2)^2} \frac{\ln(2/\failure)}{\jnrisk^2} \nonumber
    \end{align}
    repetitions of the measurement are sufficient to achieve that risk with a confidence level of $1 - \failure \in (3/4, 1)$.
\end{theorem}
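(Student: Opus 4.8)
The plan is to begin from the simplified risk expression in Eq.~\eqref{eqn:JNriskprop3.1}, which for a single POVM $\{\Theta, \Delta_\Theta\}$ repeated $R$ times reduces the constraint to $F_C(\chi_1,\chi_2,\{\Theta,\Delta_\Theta\}) \ge (\epsilon/2)^{2/R} = \gamma$. First I would exploit the purity of $\rho$ to collapse the matrix optimization to a scalar one. Since $\Theta$ and $\Delta_\Theta$ are affine combinations of $\rho$ and $\Delta_\rho = \id - \rho$, and $\tr(\Delta_\rho\chi) = 1 - \tr(\rho\chi)$, every quantity entering the problem --- the objective $\tr(\rho\chi_1) - \tr(\rho\chi_2)$ and the outcome probabilities $\tr(\Theta\chi_i)$, $\tr(\Delta_\Theta\chi_i)$ --- depends on $\chi_i$ only through the scalar $p_i = \tr(\rho\chi_i) \in [0,1]$. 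Moreover every value $p_i \in [0,1]$ is attained by some density matrix (e.g.\ $p\rho + (1-p)\Delta_\rho/(d-1)$), so the maximization over $\mathcal{X}\times\mathcal{X}$ becomes a maximization over $(p_1,p_2)\in[0,1]^2$.

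Next I would change variables to $u_i = \tr(\Theta\chi_i) = \omega_2 + (\omega_1-\omega_2)p_i$, so that $\tr(\Delta_\Theta\chi_i) = 1 - u_i$, the box $p_i\in[0,1]$ becomes $u_i\in[\omega_2,\omega_1]$, the objective becomes $(u_1-u_2)/(\omega_1-\omega_2)$, and the constraint becomes the two-point (Bernoulli) classical fidelity $(\sqrt{u_1u_2}+\sqrt{(1-u_1)(1-u_2)})^2 \ge \gamma$. Because the objective increases in $u_1$ and decreases in $u_2$ while $F_C$ is largest on the diagonal, in the regime $R>R_o$ the optimum lies on the curve $F_C=\gamma$. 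Writing $s = u_1+u_2$ and $D = u_1-u_2$ and clearing the square roots on this curve gives, after the affine substitution $a = (s-(1-\gamma))/(2\gamma)$, the closed relation $D^2 = (1-\gamma)(1-(2a-1)^2\gamma)$; hence $D = \sqrt{1-\gamma}\,\sqrt{1-(2a-1)^2\gamma}$ and the risk equals $\frac{\sqrt{1-\gamma}}{2(\omega_1-\omega_2)}\sqrt{1-(2a-1)^2\gamma}$, reducing the whole problem to a one-dimensional maximization over $a$. (Equivalently one may substitute $u_i = \sin^2\theta_i$, which linearizes the constraint to $|\theta_1-\theta_2| = \arccos\sqrt\gamma$ and makes $a$ a reparametrization of $\theta_1+\theta_2$.)

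It then remains to translate the box $u_i\in[\omega_2,\omega_1]$ into the admissible set $\mathcal{A}_a$. Setting $u_1$ or $u_2$ equal to a box edge $\omega_1$ or $\omega_2$ on the curve $F_C=\gamma$ yields a quadratic in $a$ whose roots are exactly $a^{(1)}_\pm = \omega_1 \pm \sqrt{\omega_1(1-\omega_1)(1-\gamma)/\gamma}$ and $a^{(2)}_\pm = \omega_2 \pm \sqrt{\omega_2(1-\omega_2)(1-\gamma)/\gamma}$; excluding the $a$-intervals that push a coordinate outside the box, intersected with $[0,1]$, gives precisely $\mathcal{A}_a$. Since $\sqrt{1-(2a-1)^2\gamma}$ peaks at $a=1/2$, the maximum is attained there when feasible and at the nearest endpoint otherwise. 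The dichotomy at $R_o$ comes from feasibility of the extreme corner $(p_1,p_2)=(1,0)$, i.e.\ $(u_1,u_2)=(\omega_1,\omega_2)$: taking logarithms of $\sqrt{\omega_1\omega_2}+\sqrt{(1-\omega_1)(1-\omega_2)} \ge (\epsilon/2)^{1/R}$ shows this corner satisfies $F_C\ge\gamma$ exactly when $R\le R_o$, in which case $D=\omega_1-\omega_2$ and $\widehat{\mathcal{R}}_* = 1/2$. For the sample-complexity bound Eq.~\eqref{eqn:thm_sample_complexity_2outcomePOVM-repetitions}, I would simply use $\max_a\sqrt{1-(2a-1)^2\gamma}\le 1$ to get $\widehat{\mathcal{R}}_* \le \frac{\sqrt{1-\gamma}}{2(\omega_1-\omega_2)}$, solve this for $\gamma = 1 - 4(\omega_1-\omega_2)^2\widehat{\mathcal{R}}_*^2$, substitute $\gamma = (\epsilon/2)^{2/R}$, and take logarithms, invoking monotonicity of the risk in $R$.

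The hard part will be the bookkeeping for $\mathcal{A}_a$: carefully tracking which of the four thresholds $a^{(i)}_\pm$ is binding in each regime (the two orderings of $u_1,u_2$ and which corner or edge the maximizer lands on), and verifying that the excluded open intervals in the definition of $\mathcal{A}_a$ correspond exactly to some $u_i$ leaving $[\omega_2,\omega_1]$. By contrast, the matrix-to-scalar reduction and the derivation of the closed-form objective $D=\sqrt{1-\gamma}\,\sqrt{1-(2a-1)^2\gamma}$ are routine once the purity of $\rho$ and the $u_i$ substitution are in place.
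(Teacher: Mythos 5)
Your proposal is correct and follows essentially the same route as the paper's proof: reduce to scalars $p_i = \tr(\rho\chi_i)$ via purity, show the classical-fidelity constraint is active when $R > R_o$ (the corner $(p_1,p_2)=(1,0)$ being infeasible), parametrize the active constraint by a single variable $a$ --- your $a = (s-(1-\gamma))/(2\gamma)$ coincides with the paper's $a$ defined through $\sqrt{u_1 u_2} = a\sqrt{\gamma}$, since both give $s = 1 + (2a-1)\gamma$ and $D^2 = (1-\gamma)\left(1-(2a-1)^2\gamma\right)$ --- then recover $\mathcal{A}_a$ from the box constraints and handle $R \le R_o$ and the sample-complexity bound identically. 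The only cosmetic difference is that you establish tightness of the constraint by an elementary monotonicity/perturbation argument, whereas the paper invokes the KKT conditions of the Lagrangian; both are valid.
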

\begin{proof}
    For the case that we have $R$ repetitions of a single POVM $\{\Theta, \Delta_\Theta\}$, the risk can be written as (see Appendix~\ref{app:minimax_theory})
    \begin{align*}
        \jnrisk &= \frac{1}{2} \max_{\chi_1, \chi_2 \in \mathcal{X}} \bigg\{\tr(\rho \chi_1) - \tr(\rho \chi_2)\ \bigg| \nonumber \\
                                &\hspace{2cm}\text{AffH}(A(\chi_1), A(\chi_2)) \geq \sqrt{\gamma}\bigg\}
    \end{align*}
    where $A(\chi) = (\tr(\Theta \chi) + \epsZero/2) / (1 + \epsZero), \tr(\Delta_\Theta \chi) + \epsZero/2) / (1 + \epsZero))$. To begin with, we simplify this to a 2-dimensional optimization problem. For this purpose, we write the density matrices $\chi_1$, $\chi_2$ as a convex combination of the target state $\rho$ and some other trace-one Hermitian operator in the orthogonal complement of the subspace generated by $\rho$:
    \begin{align*}
        \chi_1 &= \alpha_1 \rho + (1 - \alpha_1) \rho_1^\perp \nonumber \\
        \chi_2 &= \alpha_2 \rho + (1 - \alpha_2) \rho_2^\perp
    \end{align*}
    with $\tr(\rho \rho_1^\perp) = \tr(\rho \rho_2^\perp) = 0$ and $0 \leq \alpha_1, \alpha_2 \leq 1$. Using this, the Hellinger affinity can be written as
    \begin{align}
        \text{AffH}(\alpha_1, \alpha_2) &= \frac{1}{1 + \epsZero} \left(\omega_1 \alpha_1 + \omega_2 (1 - \alpha_1) + \frac{\epsZero}{2}\right)^{1/2} \nonumber \\
                                        &\hspace{2cm} \left(\omega_1 \alpha_2 + \omega_2 (1 - \alpha_2) + \frac{\epsZero}{2}\right)^{1/2} \nonumber \\
                                        &\hspace{-1cm} + \frac{1}{1 + \epsZero} \left((1 - \omega_1) \alpha_1 + (1 - \omega_2) (1 - \alpha_1) + \frac{\epsZero}{2}\right)^{1/2} \nonumber \\
                                        &\hspace{0.8cm} \left((1 - \omega_1) \alpha_2 + (1 - \omega_2) (1 - \alpha_2) + \frac{\epsZero}{2}\right)^{1/2} \nonumber \\
                                        &\hspace{-1.8cm}\approx \sqrt{(\omega_2 + (\omega_1 - \omega_2) \alpha_1) (\omega_2 + (\omega_1 - \omega_2) \alpha_2)} \nonumber \\
                                        &\hspace{-1.5cm} + \sqrt{((1 - \omega_2) + (\omega_2 - \omega_1) \alpha_1) ((1 - \omega_2) + (\omega_2 - \omega_1) \alpha_2)} \label{eqn:affh_sample_complexity_2outcomePOVM}
    \end{align}
    Note that because of the parameter $\epsZero > 0$, the Hellinger affinity is differentiable. Since $\epsZero \ll 1$, we neglect it in Eq.~\eqref{eqn:affh_sample_complexity_2outcomePOVM} to prevent the equations from becoming cumbersome later. We can write the risk as
    \begin{align}
        2 \jnrisk &= \max_{\alpha_1, \alpha_2 \in [0, 1]} (\alpha_1 - \alpha_2) \nonumber \\
                                  &\hspace{1.1cm} \text{s.t. } -\ln(\affh(\alpha_1, \alpha_2)) \leq -\ln(\sqrt{\gamma}) \label{eqn:thm_sample_complexity_2outcomePOVM-risk_optimization}
    \end{align}
    We take a logarithm to make the optimization problem convex (see Proposition 3.1 in \cite{Juditsky2009}).
    
    Now, consider the case $R > R_o$, where $R_o$ is as defined in the statement of the theorem. Then, we argue that at the optimum, $\affh = \sqrt{\gamma}$. To see this, we first convert the above maximization to a minimization problem, and write its Lagrangian as
    \begin{align*}
        \mathcal{L} &= -\alpha_1 + \alpha_2 - \lambda \ln(\affh(\alpha_1, \alpha_2)) + \lambda \ln(\sqrt{\gamma}) \nonumber \\
                    &\hspace{1cm}- \nu^1_0 \alpha_1 + \nu^1_1 (\alpha_1 - 1) - \nu^2_0 \alpha_2 + \nu^2_1 (\alpha_2 - 1)
    \end{align*}
    where $\lambda, \nu^1_0, \nu^1_1, \nu^2_0, \nu^2_1$ are dual variables. At the optimum, the Karush-Kuhn-Tucker (KKT) conditions must be satisfied \cite{boyd2004convex}, which we list below for convenience.
    \begin{enumerate}
        \item (Primal feasibility) The ``primal" variables $\alpha_1, \alpha_2$ must lie in the domain $[0, 1]$.
        \item (Dual feasibility) The dual variables (corresponding to inequality constraints) $\lambda, \nu^1_0, \nu^1_1, \nu^2_0, \nu^2_1$ must be non-negative.
        \item (Complementary slackness) Either the dual variable must vanish or the constraint must be tight.
        \item (Stationarity) The gradient of the Lagrangian with respect to the primal variables must vanish.
    \end{enumerate}

    Since the gradient of $\mathcal{L}$ with respect to $\alpha_1$ and $\alpha_2$ must vanish at the optimum, we have
\begin{align*}
    \frac{\partial \mathcal{L}}{\partial \alpha_1} &= -1 - \lambda \frac{\partial \ln(\affh)}{\partial \alpha_1} - \nu^1_0 + \nu^1_1 = 0 \\
    \frac{\partial \mathcal{L}}{\partial \alpha_2} &= 1 - \lambda \frac{\partial \ln(\affh)}{\partial \alpha_2} - \nu^2_0 + \nu^2_1 = 0.
\end{align*}
Dual feasibility implies $\lambda, \nu^1_0, \nu^1_1, \nu^2_0, \nu^2_1 \geq 0$ at the optimum. If $\lambda = 0$ at the optimum, we must have $\nu^1_1 = 1 + \nu^1_0 > 0$ and $\nu^2_0 = 1 + \nu^2_1 > 0$. Then complementary slackness implies that $\affh \geq \sqrt{\gamma}$, $\alpha_1 = 1$ and $\alpha_2 = 0$ at the optimum. However, for $R > R_o$, we have
\begin{align*}
    \affh(\alpha_1 = 1, \alpha_2 = 0) &= \sqrt{\omega_1 \omega_2} + \sqrt{(1 - \omega_1)(1 - \omega_2)} \\
                                      &< \sqrt{\gamma}
\end{align*}
contradicting with the constraint. Thus, we must have $\lambda > 0$, implying that $\affh = \sqrt{\gamma}$ as claimed. Using this, we can reduce the problem to a one-dimensional problem that will eventually help perform the optimization. We do this by appropriately parametrizing each term in $\affh$:
\begin{align*}
    &\sqrt{(\omega_2 + (\omega_1 - \omega_2) \alpha_1) (\omega_2 + (\omega_1 - \omega_2) \alpha_2)} \equiv a \sqrt{\gamma} \\
    &\sqrt{((1 - \omega_2) + (\omega_2 - \omega_1) \alpha_1) ((1 - \omega_2) + (\omega_2 - \omega_1) \alpha_2)} \nonumber \\
    &\hspace{2cm} \equiv b \sqrt{\gamma} \\
    \intertext{Then, $\affh = \sqrt{\gamma}$ implies}
    &a + b = 1
\end{align*}
where $a, b \geq 0$. From the above equations, we can deduce that
\begin{align*}
    \alpha_1 + \alpha_2 &= \frac{(a^2 - b^2)}{\omega_1 - \omega_2} \gamma + \frac{(1 - 2\omega_2)}{\omega_1 - \omega_2} \\
    \alpha_1 \alpha_2 &= \frac{((1 - \omega_2) a^2 + \omega_2 b^2)}{(\omega_1 - \omega_2)^2} \gamma - \frac{\omega_2 (1 - \omega_2)}{(\omega_1 - \omega_2)^2}.
\end{align*}
These equations are well-defined because $\omega_1 > \omega_2$. Solving these simultaneous equations and applying the constraint $a + b = 1$, we obtain
\begin{align*}
    \alpha_1 &= \frac{(2a - 1)\gamma + (1 - 2\omega_2)}{2(\omega_1 - \omega_2)} + \frac{\sqrt{1 - \gamma}}{2(\omega_1 - \omega_2)} \sqrt{1 - (2a - 1)^2\gamma} \\
    \alpha_2 &= \frac{(2a - 1)\gamma + (1 - 2\omega_2)}{2(\omega_1 - \omega_2)} - \frac{\sqrt{1 - \gamma}}{2(\omega_1 - \omega_2)} \sqrt{1 - (2a - 1)^2\gamma}
\end{align*}
Since $a \in [0, 1]$, $(2a - 1)^2 \in [0, 1]$ and $\gamma \in (0, 1)$, the term in the square-root is non-negative, so $\alpha_1, \alpha_2$ are real. Furthermore, we have used the fact that $\alpha_1 \geq \alpha_2$ at the optimum because the risk involves maximization of $\alpha_1 - \alpha_2$; see Eq.~\eqref{eqn:thm_sample_complexity_2outcomePOVM-risk_optimization}.

Next, we need to impose the constraints $\alpha_1, \alpha_2 \in [0, 1]$. Requiring $\alpha_2 \geq 0$ (and thus $\alpha_1 \geq 0$) gives
\begin{align*}
    &(a - a^{(2)}_+) (a - a^{(2)}_-) \geq 0 \\
    &a^{(2)}_{\pm} = \omega_2 \pm \sqrt{\omega_2 (1 - \omega_2) \frac{(1 - \gamma)}{\gamma}}
\end{align*}
which means $a$ must lie in the region $(-\infty, a^{(2)}_-] \cup [a^{(2)}_+, \infty)$. Similarly, requiring $\alpha_1 \leq 1$ (and thus $\alpha_2 \leq 1$) gives
\begin{align*}
    &(a - a^{(1)}_+)(a - a^{(1)}_-) \geq 0 \\
    &a^{(1)}_{\pm} = \omega_1 \pm \sqrt{\omega_1 (1 - \omega_1) \frac{(1 - \gamma)}{\gamma}}
\end{align*}
which implies that $a$ must lie in the region $(-\infty, a^{(1)}_-] \cup [a^{(2)}_+, \infty)$. Therefore, the allowed values of $a$ are
\begin{align*}
    \mathcal{A}_a &= [0, 1] \cap \left((-\infty, a^{(2)}_-] \cup [a^{(2)}_+, \infty)\right) \nonumber \\
                  &\hspace{2cm} \cap \left((-\infty, a^{(1)}_-] \cup [a^{(1)}_+, \infty)\right)
\end{align*}
Note that the optimization problem defined by Eq.~\eqref{eqn:thm_sample_complexity_2outcomePOVM-risk_optimization} has a solution for all $R > 0$ (i.e., $\gamma \in (0, 1)$) because any $\alpha_1, \alpha_2 \in [0, 1]$ with $\alpha_1 = \alpha_2$ satisfies the constraints. Therefore, we must have $\mathcal{A}_a \neq \varnothing$.

Thus, the risk is given as
\begin{align*}
    \jnrisk &= \max_{a \in \mathcal{A}_a} \frac{1}{2} (\alpha_1 - \alpha_2) \\
                            &= \frac{\sqrt{1 - \gamma}}{2(\omega_1 - \omega_2)} \max_{a \in \mathcal{A}_a} \sqrt{1 - (2a - 1)^2\gamma}
\end{align*}
Now, for the case when $R \leq R_o$, we have $\sqrt{\gamma} \leq \sqrt{\omega_1 \omega_2} + \sqrt{(1 - \omega_1)(1 - \omega_2)} = \affh(\alpha_1 = 1, \alpha_2 = 0)$. Therefore, $\alpha_1 = 1$ and $\alpha_2 = 0$ satisfy the constraint of the optimization in Eq.~\eqref{eqn:thm_sample_complexity_2outcomePOVM-risk_optimization}, giving $\jnrisk = 0.5$.

The last part of the statement of the theorem follows from the observation that Eq.~\eqref{eqn:thm_sample_complexity_2outcomePOVM-repetitions} implies $R > R_o$ when $\jnrisk < 0.5$, and for $R > R_o$, we have the inequality $\jnrisk \leq \sqrt{1 - \gamma}/(2(\omega_1 - \omega_2))$.
\end{proof}

Note that there is no loss of generality in requiring that $\omega_1 > \omega_2$, for if $\omega_2 < \omega_1$, we can simply swap $\Theta$ and $\Delta_\Theta$. When $\omega_1 = \omega_2$, we have $\Theta = \Delta_\Theta = \id / 2$, which means we learn nothing about the state. Indeed $R_o \to \infty$ as $\omega_1 \to \omega_2$, alluding to this fact. On the other hand, the best we can do is when $\omega_1$ and $\omega_2$ are farthest from each other, and this leads to the following result.

\begin{corollary}
    \label{corr:minimax_method_optimal_risk}
    Let $\rho$ be any pure target state, and $\Delta_\rho = \id - \rho$. Then, for $R$ repetitions of the POVM $\{\rho, \Delta_\rho\}$, the estimator given by minimax method achieves the risk
    \begin{equation*}
        \jnrisk = \frac{1}{2} \sqrt{1 - \left(\frac{\failure}{2}\right)^{2/R}}
    \end{equation*}
\end{corollary}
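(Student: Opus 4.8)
The plan is to obtain this as a direct specialization of Theorem~\ref{thm:minimax_sample_complexity_2outcomePOVM}. The key observation is that the POVM $\{\rho, \Delta_\rho\}$ is exactly the extremal two-outcome family $\{\Theta, \Delta_\Theta\}$ of that theorem under the identification $\omega_1 = 1$ and $\omega_2 = 0$: indeed $\Theta = \omega_1 \rho + \omega_2 \Delta_\rho = \rho$ and $\Delta_\Theta = (1-\omega_1)\rho + (1-\omega_2)\Delta_\rho = \Delta_\rho$, and the requirement $\omega_1 > \omega_2$ is satisfied. So first I would record this identification and note that $\omega_1 - \omega_2 = 1$ takes its largest possible value, consistent with the remark preceding the corollary that the best case occurs when $\omega_1$ and $\omega_2$ are farthest apart.

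Next I would check that we are in the nondegenerate branch $R > R_o$ of the theorem for every $R$. With $\omega_1 = 1$, $\omega_2 = 0$ the overlap term vanishes, $\sqrt{\omega_1 \omega_2} + \sqrt{(1-\omega_1)(1-\omega_2)} = 0$, so its logarithm diverges and $R_o = \ln(2/\epsilon)/\infty = 0$. Hence $R > R_o$ holds for all $R \geq 1$, and the one-dimensional formula \eqref{eqn:thm_sample_complexity_2outcomePOVM-risk} applies directly rather than the trivial branch $\widehat{\mathcal{R}}_* = 0.5$.

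The main computational step is then to evaluate the feasible set $\mathcal{A}_a$ and perform the one-dimensional maximization. Since $\omega_1(1-\omega_1) = 0$ and $\omega_2(1-\omega_2) = 0$, both square-root corrections collapse, giving $a^{(1)}_{\pm} = \omega_1 = 1$ and $a^{(2)}_{\pm} = \omega_2 = 0$. The excluded open intervals $(a^{(1)}_-, a^{(1)}_+)$ and $(a^{(2)}_-, a^{(2)}_+)$ are therefore empty, so each of the two set-intersections in the definition of $\mathcal{A}_a$ is all of $\mathbb{R}$, leaving $\mathcal{A}_a = [0,1]$. On this interval $(2a-1)^2$ ranges over $[0,1]$, and because $\gamma \in (0,1)$ the objective $\sqrt{1 - (2a-1)^2 \gamma}$ is maximized where $(2a-1)^2 = 0$, i.e.\ at $a = 1/2 \in \mathcal{A}_a$, attaining the value $1$.

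Substituting $a = 1/2$ and $\omega_1 - \omega_2 = 1$ into \eqref{eqn:thm_sample_complexity_2outcomePOVM-risk} yields $\widehat{\mathcal{R}}_* = \tfrac{1}{2}\sqrt{1-\gamma}$, and recalling $\gamma = (\epsilon/2)^{2/R}$ gives the claimed expression $\widehat{\mathcal{R}}_* = \tfrac{1}{2}\sqrt{1 - (\epsilon/2)^{2/R}}$. I do not expect a genuine obstacle here: the entire argument is a substitution into the already-proved Theorem~\ref{thm:minimax_sample_complexity_2outcomePOVM}. The only point requiring a little care is the degenerate-limit bookkeeping, namely confirming that the vanishing discriminants make $\mathcal{A}_a = [0,1]$ and that the unconstrained optimizer $a=1/2$ indeed lies in this set, so that no boundary case trims the maximum.
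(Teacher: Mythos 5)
Your proposal is correct and follows essentially the same route as the paper's own proof: substituting $\omega_1 = 1$, $\omega_2 = 0$ into Theorem~\ref{thm:minimax_sample_complexity_2outcomePOVM}, observing that $a^{(1)}_{\pm} = 1$ and $a^{(2)}_{\pm} = 0$ so that $\mathcal{A}_a = [0,1]$, and maximizing at $a = 1/2$ to obtain $\widehat{\mathcal{R}}_* = \tfrac{1}{2}\sqrt{1-\gamma}$. Your explicit verification that $R_o = 0$ (so the nondegenerate branch applies for all $R$) is a point the paper leaves implicit, but it does not change the argument.
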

\begin{proof}
    We have $\omega_1 = 1$ and $\omega_2 = 0$. Substituting this in Theorem \ref{thm:minimax_sample_complexity_2outcomePOVM}, we can see that $a^{(1)}_{\pm} = 1$ and $a^{(2)}_{\pm} = 0$. Therefore, the allowed values of $a$ are $\mathcal{A}_a = [0, 1]$. Subsequently, the risk is given as
    \begin{align*}
        \jnrisk &= \frac{\sqrt{1 - \gamma}}{2} \max_{a \in [0, 1]} \sqrt{1 - (2a - 1)^2\gamma} \\
                                &= \frac{\sqrt{1 - \gamma}}{2}
    \end{align*}
    as claimed.
\end{proof}

We also consider a more restricted family of POVMs that are relevant to the stabilizer measurements described in section \ref{secn:minimax_method_stabilizer_states}.
\begin{corollary}
    \label{corr:minimax_sample_complexity_stabilizer}
    Suppose we are given a pure target state $\rho$, and we perform $R$ repetitions of the POVM $\{\Theta, \Delta_\Theta\}$ defined as
    \begin{align*}
        \Theta &= \rho + \frac{\xi/2 - 1}{\xi - 1} \Delta_\rho \\
        \Delta_\Theta &= \frac{\xi/2}{\xi - 1} \Delta_\rho
    \end{align*}
    where $\Delta_\rho = \id - \rho$ and $\xi \geq 2$ is a parameter. Also, define
    \begin{equation*}
        R_o = 2 \frac{\ln(2/\failure)}{\ln\left(\frac{\xi - 1}{\xi/2 - 1}\right)}.
    \end{equation*}

    \noindent Then, if $R > R_o$, the risk of the estimator given by the minimax method is
    \begin{align}
        \jnrisk &= \begin{cases} \left(\frac{\xi - 1}{\xi}\right) \sqrt{1 - \gamma} & b_- \geq 1 \\
                                                 \left(\frac{\xi - 1}{\xi}\right) (1 - \gamma) \sqrt{1 + b_-(2 - b_-) \left(\frac{\gamma}{1 - \gamma}\right)} & b_- < 1,\\ &\hspace{-1.5cm}|b_- - 1| \leq |b_+ - 1| \\
                                                 \left(\frac{\xi - 1}{\xi}\right) (1 - \gamma) \sqrt{1 + b_+(2 - b_+) \left(\frac{\gamma}{1 - \gamma}\right)} & b_- < 1,\\ &\hspace{-1.5cm}|b_- - 1| > |b_+ - 1|
                                   \end{cases} \label{eqn:thm_sample_complexity-risk}
        \intertext{where}
        \gamma &= \left(\frac{\failure}{2}\right)^{2/R} \notag
        \intertext{and}
        b_{\pm} &= \left(\frac{\xi}{\xi - 1}\right) \left[1 \pm \sqrt{\left(\frac{1 - \gamma}{\gamma}\right) \left(\frac{\xi - 2}{\xi}\right)}\right] \nonumber
    \end{align}
    For $R \leq R_o$, the risk is $\jnrisk = 0.5$.

    \noindent In particular, for any risk $\jnrisk \in (0, 0.5)$,
    \begin{align}
        R &\geq 2\frac{\ln\left(2/\failure\right)}{\left|\ln\left(1 - \left(\frac{\xi}{\xi - 1}\right)^2 \jnrisk^2\right)\right|} \label{eqn:thm_sample_complexity-repetitions} \\
          &\approx 2 \left(\frac{\xi - 1}{\xi}\right)^2 \frac{\ln(2/\failure)}{\jnrisk^2} \nonumber
    \end{align}
    repetitions of the measurement are sufficient to achieve that risk with a confidence level of $1 - \failure \in (3/4, 1)$.
\end{corollary}
\begin{proof}
    In the context of Theorem \ref{thm:minimax_sample_complexity_2outcomePOVM}, we have
    \begin{equation*}
        \omega_1 = 1 \quad \text{and} \quad \omega_2 = \frac{\xi/2 - 1}{\xi - 1}
    \end{equation*}
    This implies $a^{(1)}_{\pm} = 1$, and also
    \begin{equation*}
        a^{(2)}_{\pm} = \frac{\xi/2 - 1}{\xi - 1} \pm \frac{\xi/2}{\xi - 1} \sqrt{\frac{(1 - \gamma)}{\gamma} \frac{\xi - 2}{\xi}}
    \end{equation*}
    For convenience, we perform the change of variables $b = 2(1 - a)$, and define
    \begin{align*}
        b_{\pm} &\equiv 2(1 - a^{(2)}_{\mp}) \\
                &= \left(\frac{\xi}{\xi - 1}\right) \left[1 \pm \sqrt{\left(\frac{1 - \gamma}{\gamma}\right) \left(\frac{\xi - 2}{\xi}\right)}\right]
    \end{align*}
    Clearly, $b_- \leq b_+$. Note that $R > R_o$ implies $b_- > 0$, which means $a_+ < 1$. Thus, we have that $\mathcal{A}_a = [0, 1] \cap ((-\infty, a^{(2)}_-] \cup [a^{(2)}_+, \infty)) = [0, a^{(2)}_-] \cup [a^{(2)}_+, 1]$ is non-empty. With respect to the variable $b$, these allowed values can be expressed as $\mathcal{A}_b = [0, b_-] \cup [b_+, 2]$. The risk is then given as
    \begin{align*}
        \jnrisk &= \left(\frac{\xi - 1}{\xi}\right) \sqrt{1 - \gamma} \max_{b \in \mathcal{A}_b} \sqrt{1 - (1 - b)^2 \gamma} \notag \\
                                &= \left(\frac{\xi - 1}{\xi}\right) (1 - \gamma) \max_{b \in \mathcal{A}_b} \sqrt{1 + b (2 - b) \left(\frac{\gamma}{1 - \gamma}\right)}. \notag
    \end{align*}
    Since the objective of maximization is symmetric about $b = 1$ and the maximum is achieved at $b = 1$, the allowed value of $b$ closest to $1$ achieves the maximum. Noting that $b_+ > 1$, we obtain the expression given in the statement of the corollary.
\end{proof}

Finally, we obtain a bound on the sample complexity of the randomized Pauli measurement scheme described in Box \hyperlink{box:pauli_measurement_scheme}{\ref*{secn:minimax_method}.1}. The statement is given in Theorem~\ref{thm:minimax_method_pauli_scheme_sample_complexity}, so we simply give a proof.

\begin{proof}[Proof of Theorem~\ref{thm:minimax_method_pauli_scheme_sample_complexity}]
\label{proof:minimax_method_pauli_scheme_sample_complexity}
    For convenience, we reproduce below the measurement strategy given in Box \hyperlink{box:pauli_measurement_scheme}{\ref*{secn:minimax_method}.1}. Given a target state $\rho$,
    \begin{enumerate} 
        \item Sample a (non-identity) Pauli operator $W_i$ with probability
        \begin{equation*}
            p_i = \frac{|\tr(W_i \rho)|}{\sum_{i = 1}^{d^2 - 1} |\tr(W_i \rho)|}, \quad i = 1, \dotsc, d^2 - 1,
        \end{equation*}
        and record the outcome ($\pm 1$) of the measurement.
        \item Flip the measurement outcome (i.e, $\pm 1 \to \mp 1$) if $\tr(\rho W_i) < 0$, else retain the original measurement outcome
        \item Repeat this procedure $R$ times.
    \end{enumerate}
    Note that measuring the Pauli $W_i$ and flipping the measurement outcome is equivalent to measuring the operator $S_i~=~\text{sign}(\tr(W_i \rho)) W_i$.
    
    We can describe the above measurement strategy using the effective POVM described below, which we obtain by finding a positive semidefinite operator that reproduces the measurement statistics. The probability of obtaining $+1$ outcome can be written as
    \begin{align*}
        \text{Pr}(+1) &= \sum_{i = 1}^{d^2 - 1} (\text{Pr. choosing $S_i$}) \nonumber \\
                      &\hspace{1cm} (\text{Pr. outcome $1$ upon measuring $S_i$}) \\
                      &\equiv \tr(\Theta \sigma) \\
        \Theta &= \sum_{i = 1}^{d^2 - 1} p_i \mathbb{P}^+_i \\
        \mathbb{P}^+_i &= \frac{\id + S_i}{2}
    \end{align*}
    where $S_i = \text{sign}(\tr(W_i \rho)) W_i$. Substituting for $p_i$ and a simple rearrangement of terms gives
    \begin{align*}
        \Theta &= \frac{\id}{2} + \frac{d}{2\mathcal{N}} \sum_{i = 1}^{d^2 - 1} \frac{|\tr(W_i \rho)|}{d} S_i \\
               &= \frac{(d + (\mathcal{N} - 1))}{2\mathcal{N}} \rho + \frac{(\mathcal{N} - 1)}{2\mathcal{N}} \Delta_\rho.
    \end{align*}
    $\Delta_\Theta = \id - \Theta$ is given as
    \begin{equation*}
        \Delta_\Theta = \frac{((\mathcal{N} + 1) - d)}{2\mathcal{N}} \rho + \frac{(\mathcal{N} + 1)}{2\mathcal{N}} \Delta_\rho.
    \end{equation*}
    The effective POVM is then $\{\Theta, \Delta_\Theta\}$. Substituting
    \begin{equation*}
        \omega_1 = \frac{(d + (\mathcal{N} - 1))}{2\mathcal{N}} \quad \text{and} \quad \omega_2 = \frac{(\mathcal{N} - 1)}{2\mathcal{N}}
    \end{equation*}
    in Theorem \ref{thm:minimax_sample_complexity_2outcomePOVM}, we obtain Eq.~\eqref{eqn:minimax_method_pauli_scheme_sample_complexity}.

    Now, note that for any pure state $\rho$, we have $\tr(\rho^2) = 1$ and this gives us
    \begin{equation}
        \sum_{i = 1}^{d^2 - 1} (\tr(W_i \rho))^2 = d - 1. \label{eqn:pure_state_pauli_weights_constraint}
    \end{equation}
    Therefore, to obtain an upper bound on $\mathcal{N}$, we solve the following optimization problem. Let $M$ be a positive integer and $\beta > 0$ be any real number such that $M > \beta$. We solve
    \begin{align*}
        \max\ &\sum_{i = 1}^M x_i \nonumber \\
        \text{s.t. } &x_i \in [0, 1]\quad \forall i = 1, \dotsc, M \nonumber \\
                     &\sum_{i = 1}^M x_i^2 \leq \beta
    \end{align*}
    The above problem gives a bound for the special case of $x_i = |\tr(W_i \rho)|$, $M = d^2 - 1$ and $\beta = d - 1$, while saving the trouble of optimizing over all density matrices. Note that we consider the relaxation $\sum_{i = 1}^M x_i^2 \leq \beta$, instead of the original equality constraint $\sum_{i = 1}^M x_i^2 = \beta$ because quadratic equality constraints do not define a convex set in general. Such a relaxation is inconsequential because equality is attained at the optimum. It can be shown using the KKT conditions that the optimum corresponds to
    \begin{align*}
        x_i &= \sqrt{\frac{\beta}{M}}\quad \forall i \in \{1, \dotsc, M\}\notag \\
        \implies \sum_{i = 1}^M x_i &= \sqrt{M \beta}. \notag
    \end{align*}

    Returning to the problem of Pauli weights, since $M = d^2 - 1$ and $\beta = d - 1$, we obtain $\mathcal{N} \leq \sqrt{(d^2 - 1)(d - 1)} = \sqrt{d + 1} (d - 1)$ as claimed. Substituting in the expression for sample complexity gives the desired upper bound.
\end{proof}

In contrast to the above result which shows that a good sample complexity can be obtained for the randomized Pauli measurement scheme, we consider the case of a bad measurement protocol. Namely, we are given an $n$-qubit stabilizer state and we measure $n - 1$ of its generators, where the measurements are subspace measurements. Then, as noted in Proposition~\ref{prop:minimax_method_stabilizer_insufficient_measurements}, the minimax method gives a risk of $0.5$. Here, we present a proof for this statement.
\begin{proof}[Proof of Proposotion~\ref{prop:minimax_method_stabilizer_insufficient_measurements}]
    \label{proof:minimax_method_stabilizer_insufficient_measurements}
    Let $\rho$ be an $n$-qubit stabilizer state generated by $S_1, \dotsc, S_n$. The measurement protocol corresponds to measuring only the first $n - 1$ generators $S_1, \dotsc, S_{n - 1}$. The measurement of $S_l$ has the POVM $\{E^{(l)}_1, E^{(l)}_2\}$ where $E^{(l)}_1$ is the projection on the $+1$ eigenspace of $S_l$ while $E^{(l)}_2$ is the projection on $-1$ eigenspace of $S_l$, for $l = 1, \dotsc, n - 1$. Suppose that the $l^{\text{th}}$ measurement is repeated $R_l$ times.

    From Eq.~\eqref{eqn:JNriskprop3.1}, we know that the risk of the minimax method can be written as
    \begin{align*}
        &\jnrisk = \frac{1}{2} \max_{\chi_1, \chi_2 \in \mathcal{X}} \bigg\{\tr(\rho \chi_1) - \tr(\rho \chi_2)\ \bigg| \nonumber \\
                                                &\hspace{3cm} \prod_{l = 1}^{n - 1} \left[F_C(\chi_1, \chi_2, \{E^{(l)}_k\})\right]^{R_l/2} \geq \frac{\failure}{2} \bigg\}
        \intertext{where}
        &F_C(\chi_1, \chi_2, \{E^{(l)}_k\}) = \left(\sum_{k = 1}^2 \sqrt{\tr\left(E^{(l)}_k \chi_1\right) \tr\left(E^{(l)}_k \chi_2\right)}\right)^2
    \end{align*}
    is the classical fidelity corresponding to the POVM $\{E^{(l)}_1, E^{(l)}_2\}$ for $l = 1, \dotsc, n - 1$. Our strategy is to construct two density matrices $\chi_1$ and $\chi_2$ that satisfy the constraints of the optimization defining the risk, such that the value of the risk is $0.5$.

    To that end, let $\tilde{\rho}$ be the stabilizer state generated by $S_1, \dotsc, -S_n$, where, the last generator of $\tilde{\rho}$ differs from that of $\rho$ by a negative sign. Note that the states $\rho$ and $\tilde{\rho}$ are orthogonal to each other. Observe that the classical fidelity between the states $\rho$ and $\tilde{\rho}$ corresponding to the POVM $\{E^{(l)}_1, E^{(l)}_2\}$ is $F_C(\rho, \tilde{\rho}, \{E^{(l)}_1, E^{(l)}_2\}) = 1$ for all measured stabilizers since $\tr(E^{(l)}_1 \rho) = \tr(E^{(l)}_1 \tilde{\rho}) = 1$ while $\tr(E^{(l)}_2 \rho) = \tr(E^{(l)}_2 \tilde{\rho}) = 0$ for all $l = 1, \dotsc, n - 1$.

    Thus, taking $\chi_1 = \rho$ and $\chi_2 = \tilde{\rho}$, we find that the risk is $\jnrisk = 0.5$, which is the maximum possible value for the risk. In other words, when an insufficient number of stabilizer measurements are provided, the minimax method infers that the fidelity cannot be estimated accurately.
\end{proof}

\section{Comparison with QSV\label{app:qsv_comparison}}
We begin by giving a brief overview of the measurement strategy used in QSV~\cite{pallister2018optimal}.
QSV assumes that either $F(\rho, \sigma) = 1$  or $F(\rho, \sigma) \leq 1 - \qsverr$ holds for a fixed $\qsverr \in (0, 1)$.
The goal is to reject the hypothesis $F(\rho, \sigma) \leq 1 - \qsverr$.
To that end, QSV protocol assumes that we have access to a set of operators $\mathcal{S} = \{P_1, \dotsc, P_L\}$ with $0 \leq P_i \leq \id$ describing two-outcome POVM $\{P_i, \id - P_i\}$.
Pallister \textit{et al.}~\cite{pallister2018optimal} assume in addition that $\tr(P_i \rho) = 1$ for all $i$.
The protocol proceeds by randomly sampling the operator $P_i$ with probability $\mu_i$ and measuring the POVM $\{P_i, \id - P_i\}$.
If the outcome corresponding to $P_i$ is observed, then one repeats this procedure.
On the other hand, if one observes the outcome corresponding to $\id - P_i$, then one stops the procedure and declares ``fail".
If after many repetitions of this procedure the protocols does not fail, one can conclude with high probability that $\tr(\rho \sigma) = 1$.

The above measurement protocol can be described by the effective POVM $\{\Omega, \id - \Omega\}$ with $\Omega = \sum_{i = 1}^L \mu_i P_i$ which satisfies $\tr(\Omega \rho) = 1$.
QSV searches for an optimal measurement strategy $\Omega$ by minimizing the probability of wrongly declaring ``pass" in the worst-case scenario:
\begin{equation}
    \min_{\Omega} \max_{\substack{\sigma\\ F(\rho, \sigma) \leq 1 - \qsverr}} \tr(\Omega \sigma)
    \defeq 1 - \Delta_{\qsverr}.
\end{equation}
Consequently, when running the protocol with $R$ states, the false acceptance probability for such an optimal $\Omega$ is bounded above by $(1 - \Delta_{\qsverr})^R$.
Since the probability of wrongly declaring ``fail" is zero by the assumption that $\tr(\rho \Omega) = 1$, we can conclude
\begin{equation}
    R \geq \frac{\ln(1/\failure)}{\left|\ln\left(1 - \Delta_{\qsverr}\right)\right|} \approx \frac{\ln(1/\failure)}{\Delta_{\qsverr}}. \label{eqn:qsv_sample_complexity}
\end{equation}
measurement outcomes are sufficient to certify that $F(\rho, \sigma) > 1 - \qsverr$ with probability $1 - \failure$.
Note that such a measurement protocol is minimax optimal in the sense that one finds the best measurement strategy $\Omega$ for the worst possible state $\sigma$
satisfying $F(\rho, \sigma) \leq 1 - \qsverr$, under the POVM options $\mathcal{S}$ and the QSV assumption that $F(\rho,\sigma) \not\in(1-\qsverr,1)$.

The above description shows that the minimax approach of QSV bears similarities with our approach towards fidelity estimation.
Below, we discuss two differences between QSV and our method (in addition to those already pointed out in section~\ref{secn:qsv_comparison}).

First, we note that the assumption that the state $\sigma$ prepared in the lab either satisfies $F(\rho, \sigma) = 1$ or $F(\rho, \sigma) \leq 1 - \qsverr$ might be too stringent in practice. For example, if we reject $F(\rho, \sigma) \leq 1 - \qsverr$, we can't automatically conclude that the fidelity is $1$.
Our study makes no such assumption, which makes it more practically amenable.
As mentioned earlier, some studies on QSV have relaxed this assumption, but at the cost of increasing the sample complexity compared to the original QSV guarantees~\cite{jiang2020towards}. 

Second, the assumption that the measurement operators $P_i$ satisfy $\tr(\Omega P_i) = 1$ can be restrictive in practice, even though they allow for designing optimal measurement protocols in theory. For example, this assumption does not hold if one wishes to estimate (or certify) the fidelity of a $W$-state using Pauli measurements. This is a practically relevant problem which can be tackled using our approach~\cite{PRL}. Moreover, our fidelity estimation method is not restricted to protocols involving random sampling of measurement settings, which is again important for practical applications. That said, when one presents an optimal protocol involving random sampling, we can usually adjust the sampling probabilities so that our fidelity estimation method provides similar optimality results. This was shown, for example, in Sec.~\ref{secn:minimax_method_RPM_scheme}, which matches the performance guarantees of DFE.

We show that a similar result can be obtained with our method for estimating the fidelity of two-qubit states using the measurement protocol given by Pallister \textit{et al.}~\cite{pallister2018optimal} without changing the sampling probabilities.
Using the equivalence of states under rotation by local unitaries, any two-qubit state can be expressed in the form $\ket{\psi} = \sin(\theta) \ket{00} + \cos(\theta) \ket{11}$ for $\theta \in [0, \pi/2]$~\cite{pallister2018optimal}.
Then, the optimal measurement strategy for QSV can be expressed as sampling the projectors $\{P_1, \dotsc, P_4\}$ with probability $\mu_1 = \alpha(\theta)$, $\mu_2 = \mu_3 = \mu_4 = (1 - \alpha(\theta))/3$ ($\theta \in (0, \pi/2) \setminus \{\pi/4\}$) and measuring them~\cite{pallister2018optimal}. The projectors are given by $P_1 = \op{00}{00} + \op{11}{11}$ and $P_i = \id - \op{\phi_i}{\phi_i}$ for $i = 2, 3, 4$, where the states $\ket{\phi_i}$ and the number $\alpha(\theta)$ are given in Eq.~\eqref{eqn:qsv_twoqubit_projectors_prob}. Pallister \textit{et al.}~\cite{pallister2018optimal} show that
\begin{equation}
    R \approx (2 + \sin(\theta) \cos(\theta)) \frac{\ln(1/\failure)}{\qsverr} \label{eqn:qsv_twoqubit_sample_complexity}
\end{equation}
measurements of the operator $\Omega = \sum_{i = 1}^4 \mu_i P_i$ suffice to reject $F(\rho, \sigma) \leq 1 - \qsverr$ with a confidence level of $1 - \failure$. 
The following result shows that using this measurement protocol, our method performs fidelity estimation in an optimal manner.
\begin{proposition}
    Let $\ket{\psi} = \sin(\theta) \ket{00} + \cos(\theta) \ket{11}$, where $\theta \in (0, \pi/2)$, $\theta \neq \pi/4$, denote any two-qubit state up to rotation by local unitaries. Consider the measurement protocol where the projectors $P_1, P_2, P_3, P_4$ are sampled as per probability $\mu_1 = \alpha(\theta)$, $\mu_2 = \mu_3 = \mu_4 = (1 - \alpha(\theta))/3$ and their outcomes are recorded. Here, $P_1 = \op{00}{00} + \op{11}{11}$, $P_i = \id - \op{\phi_i}{\phi_i}$ for $i = 2, 3, 4$, with
    \begin{align}
        \ket{\phi_1} &= \left(\frac{1}{\sqrt{1+\tan\theta}}\ket{0} + \frac{e^{\frac{2\pi i}{3}}}{\sqrt{1+\cot\theta}}\ket{1} \right) \nonumber \\
                     &\quad\otimes \left(\frac{1}{\sqrt{1+\tan\theta}}\ket{0} + \frac{e^{\frac{\pi i}{3}}}{\sqrt{1+\cot\theta}}\ket{1} \right), \nonumber \\
        \ket{\phi_2} &= \left(\frac{1}{\sqrt{1+\tan\theta}}\ket{0} + \frac{e^{\frac{4\pi i}{3}}}{\sqrt{1+\cot\theta}}\ket{1} \right) \nonumber \\
                     &\quad\otimes \left(\frac{1}{\sqrt{1+\tan\theta}}\ket{0} + \frac{e^{\frac{5\pi i}{3}}}{\sqrt{1+\cot\theta}}\ket{1} \right), \nonumber \\
        \ket{\phi_3} &= \left(\frac{1}{\sqrt{1+\tan\theta}}\ket{0} + \frac{1}{\sqrt{1+\cot\theta}}\ket{1} \right) \nonumber \\
                     &\quad\otimes \left(\frac{1}{\sqrt{1+\tan\theta}}\ket{0} - \frac{1}{\sqrt{1+\cot\theta}}\ket{1} \right), \textnormal{ and } \nonumber \\
        \alpha(\theta) &= \frac{2 - \sin(2 \theta)}{4 + \sin(2 \theta)}. \label{eqn:qsv_twoqubit_projectors_prob}
    \end{align}
    Then,
    \begin{align}
        R &\geq 2 \frac{\ln(2/\failure)}{\left|\ln\left(1 - \frac{4}{(2 + \sin(\theta) \cos(\theta))^2}  \jnrisk^2\right)\right|} \nonumber \\
          &\approx \frac{(2 + \sin(\theta) \cos(\theta))^2}{2} \frac{\ln(2/\failure)}{\jnrisk^2} \textnormal{ when } \jnrisk \ll 1 \label{eqn:sample_complexity_twoqubit_states}
    \end{align}
    repetitions of the measurement protocol is sufficient to reach attain a risk of $\jnrisk$ with a confidence level of $1 - \failure \in (0.75, 1)$.
\end{proposition}
\begin{proof}
    The effective measurement operator is given by~\cite{pallister2018optimal}
    \begin{align*}
        \Omega &= \alpha P_1 + \frac{(1 - \alpha)}{3} \sum_{i = 2}^4 P_i \\
               &= \alpha P_1 + (1 - \alpha) \Omega_3,
    \end{align*}
    where
    \begin{equation*}
        \Omega_3 = \id - \frac{1}{(1 + t)^2} \begin{pmatrix}
                        1 & 0 & 0 & -t \\
                        0 & t & 0 & 0 \\
                        0 & 0 & t & 0 \\
                        -t & 0 & 0 & t^2
                      \end{pmatrix}
    \end{equation*}
    and $t = \tan(\theta)$. By diagonalizing $\Omega$, we find that
    \begin{equation*}
        \Omega = \rho + \frac{2 + \sin(2 \theta)}{4 + \sin(2 \theta)} (\id - \rho).
    \end{equation*}
    Then, from Thm.~\ref{thm:minimax_sample_complexity_2outcomePOVM}, Eq.~\eqref{eqn:sample_complexity_twoqubit_states} follows. We remark that our risk is invariant under unitary rotations of the target state $\rho \to U \rho U^\dagger$ and the POVM $E_k \to U E_k U^\dagger$ as seen from Eq.~\eqref{eqn:JNriskprop3.1}. Therefore, the above result essentially holds for any two-qubit state.
\end{proof}
We exclude the points $\theta = 0, \pi/4, \pi/2$ in the above result as either $\tan(\theta)$ or $\cot(\theta)$ become undefined at these points.
For $\theta = 0, \pi/2$, we get a product state, and therefore, one can directly measure Pauli $ZZ$.
For $\theta = \pi/4$, we get a stabilizer state, for which we have an optimal measurement strategy (see section~\ref{secn:minimax_method_stabilizer_states}).
These observations were also made by Pallister \textit{et al}~\cite{pallister2018optimal}.

Note that the number of measurements given in Eq.~\eqref{eqn:sample_complexity_twoqubit_states} required by our method is similar to Eq.~\eqref{eqn:qsv_twoqubit_sample_complexity}, except for a pre-factor of order $1$ and scaling of $1/\jnrisk^2$ with the risk $\jnrisk$. This scaling is essentially optimal, as seen from Thm.~\ref{thm:minimax_method_best_sample_complexity}, and thus, we can use their measurement protocol along with our fidelity estimation method to optimally estimate the fidelity of two-qubit states. This reinforces the expectation that good measurement schemes lead to good sample complexity using our method. Conversely, if our method cannot give a good sample complexity for some measurement protocol, then owing to the definition of minimax optimal risk and Eq.~\eqref{eqn:risk_minimax_guarantee}, we can infer that no estimation method can give a good sample complexity for that measurement protocol under the same assumptions as our method.
\end{document}